 \definecolor{BLACK}{gray}{0}
 \definecolor{WHITE}{gray}{1}
 \definecolor{RED}{rgb}{1,0,0}
 \definecolor{GREEN}{rgb}{0,1,0}
 \definecolor{BLUE}{rgb}{0,0,1}
 \definecolor{CYAN}{cmyk}{1,0,0,0}
 \definecolor{MAGENTA}{cmyk}{0,1,0,0}
 \definecolor{YELLOW}{cmyk}{0,0,1,0}
 \theoremstyle{definition}
 \newtheorem*{defn*}{\protect\definitionname}
  \theoremstyle{definition}
 \newtheorem{defn}{\protect\definitionname}
  \theoremstyle{plain}
   \newtheorem{lem}{\protect\definitionname}
  \theoremstyle{lemma}
     \newtheorem*{lem*}{\protect\definitionname}
  \theoremstyle{lemma}
\theoremstyle{plain}
\newtheorem{thm}{\protect\theoremname}
  \theoremstyle{plain}
  \theoremstyle{remark}
  \newtheorem{rem}{\protect\remarkname}
  \providecommand{\assumptionname}{Assumption}
  \providecommand{\definitionname}{Definition}
  \providecommand{\remarkname}{Remark}
\providecommand{\corollaryname}{Corollary}
\providecommand{\theoremname}{Theorem}
\begin{document}

\title{Power law violation of the area law in quantum spin chains \footnote{PNAS title: "Supercritical entanglement in local systems: Counterexample to the area law for quantum matter"}}

\author{Ramis Movassagh}
\email{q.eigenman@gmail.com}
\affiliation{Department of Mathematics, IBM TJ Watson Research Center, Yorktown
Heights, NY, 10598}

\author{Peter W. Shor}
\email{shor@math.mit.edu}
\affiliation{Department of Mathematics, Massachusetts Institute of Technology,
Cambridge, MA, 02139, USA}

\begin{abstract}
{The sub-volume scaling of the entanglement entropy with the system's size, $n$, has been a subject of vigorous study in the last decade. The area law provably holds for gapped one dimensional systems and it was believed to be violated by at most a factor of $\log\left(n\right)$ in physically reasonable models such as critical systems. 
In this paper, we generalize the spin$-1$ model of Bravyi et al (PRL 2012) to all integer spin-$s$ chains, whereby we introduce a class of exactly solvable models that are physical, yet violate the area law by a power law. The proposed Hamiltonian is local and translationally invariant in the bulk. We prove that it is frustration free and has a unique ground state. Moreover, we prove that the energy gap scales as $n^{-c}$, where using the theory of Brownian excursions, we prove $c\ge2$. This rules out the possibility of these models being described by a relativistic conformal field theory. We analytically show that the Schmidt rank grows exponentially with $n$ and that the half-chain entanglement entropy to the leading order scales as $\sqrt{n}$ (Eq. 1). Geometrically, the ground state is seen as a uniform superposition of all $s-$colored Motzkin walks. Lastly, we introduce an external field which allows us to remove the boundary terms yet retain the desired properties of the model. Our techniques for obtaining the asymptotic form of the entanglement entropy, the gap upper bound and the self-contained expositions of the combinatorial techniques, more akin to lattice paths, may be of independent interest.}
\end{abstract}
\keywords{Entanglement, quantum simulability, many-body physics, quantum information and condensed matter science}
\maketitle

Study of quantum many-body systems (QMBS) is the study of quantum properties of matter and quantum resources (e.g., entanglement) provided by matter for building revolutionary new technologies such as a quantum computer. One of the properties of the QMBS is the amount of entanglement among parts of the system \cite{tura2014detecting,NC}. Entanglement can be used as a resource for quantum
technologies and information processing \cite{NC,kimble2008quantum,acin2007entanglement,brun2006correcting}; however, at a
fundamental level it provides information about the quantum state
of matter, such as near-criticality \cite{QCriticality_Coleman2005,osterloh2002scaling}.
Moreover, systems with high entanglement are usually
hard to simulate on a classical computer \cite{Eisert2010}.
How much entanglement do natural QMBSs posses? What are the fundamental limits on simulation of physical systems? 

The area law says that entanglement entropy between two subsystems
of a system is proportional to the area of the boundary between them.
A generic state does not obey an area law \cite{hayden2006aspects};
therefore obeying an area law implies that a QMBS contains much less
quantum correlation than generically expected. One can imagine that
any given system has inherent constraints such as underlying symmetries
and locality of interaction that restrict the states to reside on
special sub-manifolds rendering their simulation efficient \cite{landau2015polynomial}.

Since the discovery that the AKLT model \cite{AKLT} is exactly solvable,
and that the density matrix renormalization group method (DMRG) \cite{white}
works extremely well on one-dimensional (1D) systems, we have come
to believe that 1D systems are typically easy to simulate. The DMRG and
its natural representation by matrix product states (MPS) \cite{cirac}
gave systematic recipes for truncating the Hilbert space based on
ignoring zero and small singular values in specifying the states of
1D systems. DMRG and MPS have been tremendously successful in practice
for capturing the properties of matter in physics and chemistry \cite{kurashige2013entangled,yan2011spin}. We now know that generic local Hamiltonians, unlike the AKLT model, are gapless \cite{movassagh2016generic}. One wonders about the limitations of DMRG.

The rigorous proof of a general area law does not exist; however,
it holds for gapped systems in 1D \cite{Matth_areal}.  In the condensed matter community it is a common belief that gapped
local Hamiltonians of QMBS on a D-dimensional lattice fulfill the
area-law conjecture \cite{Eisert2010}. That is, the entanglement
entropy of a region of diameter $L$ should scale as the area of the
boundary $\mathcal{O}\left(L^{D-1}\right)$ rather than its volume
$\mathcal{O}\left(L^{D}\right)$. In the more general case, when the
ground state is unique but the gap vanishes in the thermodynamical
limit, it is expected that the area-law conjecture still holds, but
now with a possible logarithmic correction, i.e., $S=\mathcal{O}\left(L^{D-1}\log L\right)$
\cite{Eisert2010}. In other words, one expects that as long as the ground state is
unique, the area-law can be violated by at most a logarithmic factor.
In particular, in 1D, it is expected that if we cut a chain of $n$
interacting spins in the middle, the entanglement entropy should scale
at most like $\log n$. This  is based mostly on calculations
done in $1+1$ conformal field theories (CFTs) \cite{Cardy2009,osterloh2002scaling},
as well as, in the Fermi liquid theory \cite{wolf2006violation}.

This belief has been seriously challenged by both quantum information and
condensed matter theorists in recent years. Motivated by QMA-hard
Hamiltonians, there are various interesting examples of 1D Hamiltonian
constructions \cite{MovassaghThesis2012, Irani2010, GottesmannHastings2010}
that can have larger, even linear, scaling of entanglement entropy
with the system's size.
%Although
%it has only been proved for one-dimensional systems with a constant-energy
% gap \cite{Matth_areal}, there are many gapless examples
%that violate the area law (even maximally) \cite{MovassaghThesis2012,Irani2010,vitagliano2010volume,huijse2013area,gori2014bell,ramirez2015entanglement}
%
In condensed matter physics, non-translationally invariant models
have been proposed and argued to violate the area law maximally (i.e., linearly for a chain) \cite{vitagliano2010volume},
Huijse et al gave a supersymmetric model with some degree of fine-tuning
that violates the area law \cite{huijse2013area}. More recently
Gori et al \cite{gori2014bell} argued that in translationally invariant
models a fractal structure of the fermi surface is necessary for maximum
violation of entanglement entropy, and using non-local field theories
volume-law scaling was argued using simple constructions \cite{shiba2014volume}.
Independently from \cite[Chapter 6]{MovassaghThesis2012} , Ramirez
et al constructed mirror symmetric models satisfying the volume law, i.e., maximum scaling with the system's size possible \cite{ramirez2015entanglement}.
The models described above are all interesting for the intended purposes
but either have very large spins (e.g., $s\ge10$) or involve some degree of fine-tuning. In particular, Irani proposed a  $s=10$ spin-chain model with linear scaling of the  entanglement entropy. This model is translationally invariant, but the local terms depend on the systems' size. This is a fine-tuning, and the spin dimension is quite high \cite{Irani2010}. 

As noted previously, a generic state violates the area
law maximally \cite{hayden2006aspects}. It was largely believed that the ground state of 
``physically reasonable'' models would violate the area law by at
most a $\log\left(n\right)$ factor, where $n$ is the number of particles
\cite[see for a review]{swingle2013universal}. Physically reasonable
models need to have Hamiltonians that are: 1. Local, 2. Translationally
invariant and 3. Have a unique ground state. These requirements, among other things, eliminate highly fine-tuned models. This implies that $\log\left(n\right)$
is the maximum expected entanglement entropy in realistic physical spin chains.

In an earlier work, Bravyi et al \cite{Movassagh2012_brackets} proposed
a spin$-1$ model with the ground state half-chain entanglement entropy $S=\frac{1}{2}\log n+c$,
which is logarithmic factor violation of the area law as expected
during a phase transition. This model is not truly local as it depends
crucially on boundary conditions. The scaling of the entanglement is exactly what one expects for critical systems.

We have found an infinite class of exactly solvable integer spin-$s$ 
 chain models with $s\ge2$ that are physically reasonable and exact calculation of the entanglement
entropy shows that they violate the area law to the leading order
by $\sqrt{n}$ (Eq. \ref{eq:EntanglementEntropy}). The proposed Hamiltonian
is local and translationally invariant in the bulk but the entanglement of the ground state depends on boundary projectors. We prove that it has a unique
ground state and give a new technique for proving the gap that uses universal convergence of random walks to a Brownian motion. We prove that the energy gap scales as $n^{-c}$, where using the theory of
Brownian excursions we show that the constant $c\ge2$. This bound
rules out the possibility of these models being describable by a CFT.
The Schmidt rank of the ground state grows exponentially with
$n$.

\begin{figure}
\centerline{\includegraphics[width=.4\textwidth]{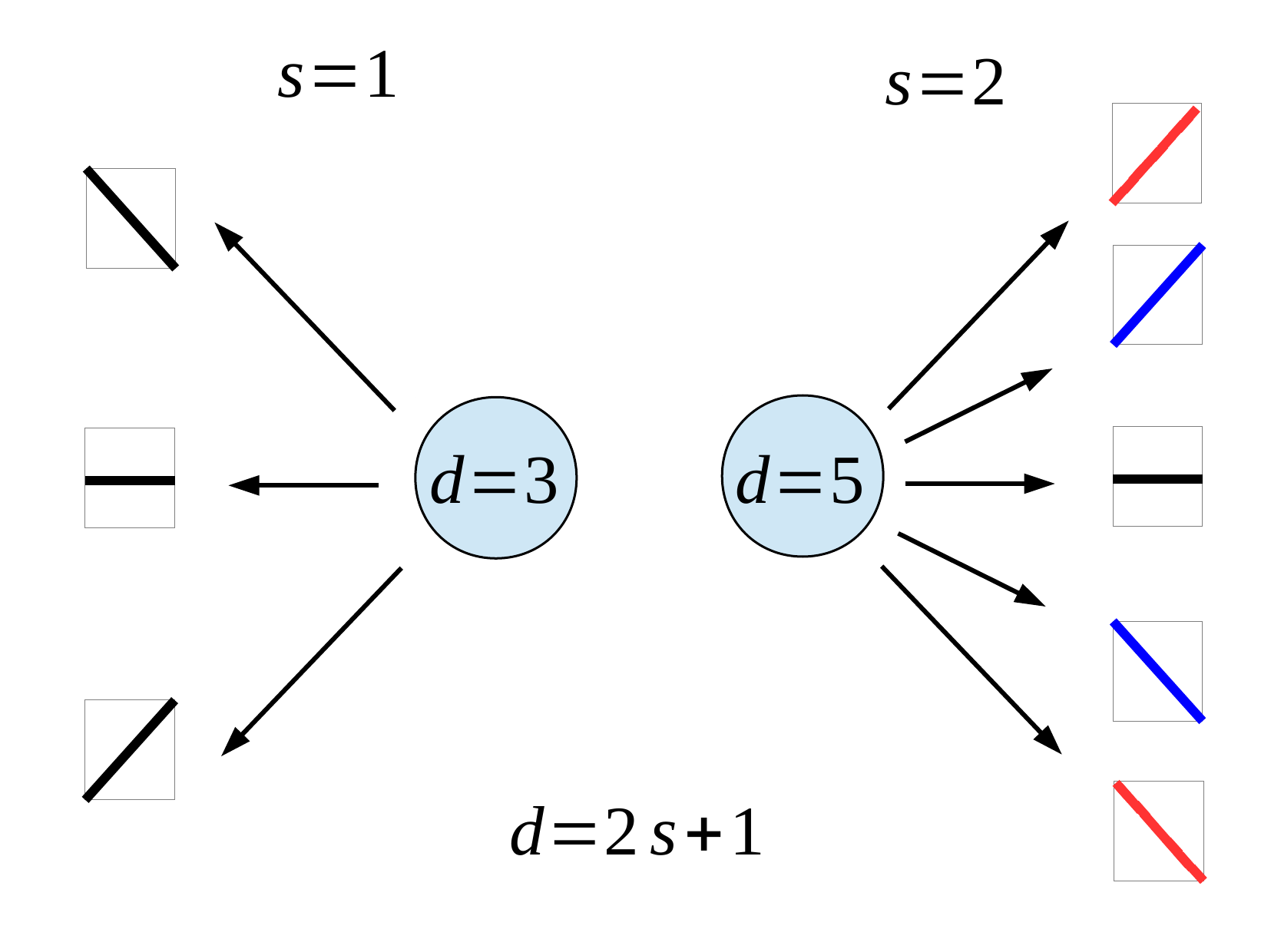}}
\caption{Labeling the states for $s=1$ and $s=2$.}\label{fig:labeling-the-states}
\end{figure}

We then introduce an external field. In presence of the external field the boundary projectors are no longer needed. The model has a \textit{frustrated} ground state, and its gap and entanglement  are solvable. This makes the model  truly local (Eq. \ref{eq:ExternalFieldModel}). We remark that the particle-spins can be as low as $s=2$ for $\sqrt{n}$ violation.
We now describe this class of models and detail the proofs and further discussions in the Supplementary Information (SI).

Let us consider an integer spin$-s$ chain of length $2n$. It is
convenient to label the $d=2s+1$ spin states  as shown in Fig. \ref{fig:labeling-the-states}. Equivalently, and for better readability, we instead use the labels $\left\{u^1,u^2,\cdots,u^s,0,d^1,d^2,\cdots,d^s\right\} $
where $u$ means a step up and $d$  a step down.
We distinguish each \textit{type} of step by associating
a color from the $s$ colors shown as superscripts on $u$ and
$d$.

\begin{figure}
\centerline{\includegraphics[width=.4\textwidth]{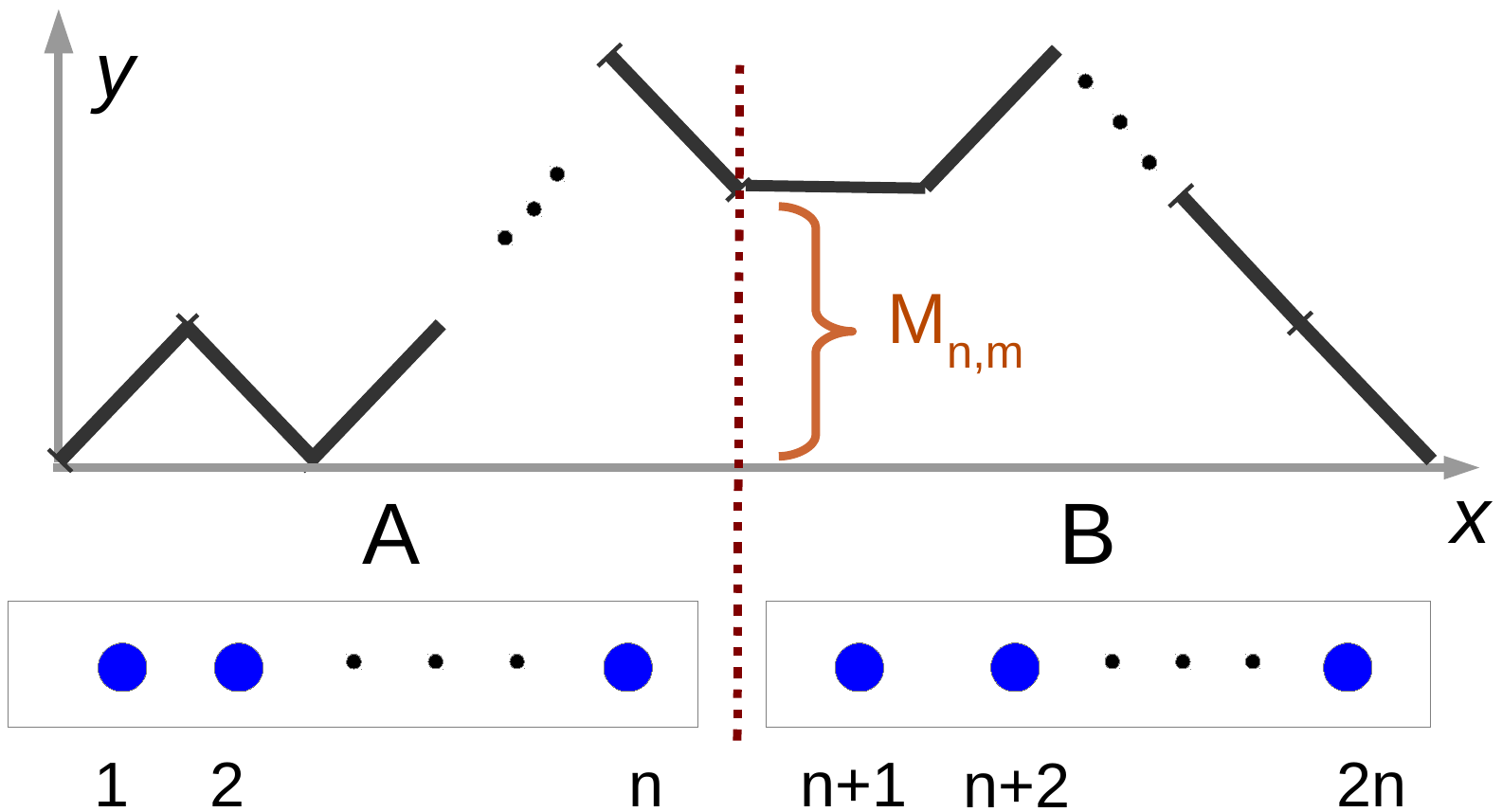}}
\caption{A Motzkin walk of length $2n$ with $s=1$.
There are $M_{n,m}^{2}$ such walks with height $m$ in the middle and coordinates $\left(x,y\right)$:$\left(0,0\right),\left(n,m\right),\left(2n,0\right)$ }\label{fig:Motzkin-walks-Uncolored}
\end{figure}
\begin{figure}
\begin{center}
\centerline{\includegraphics[width=.45\textwidth]{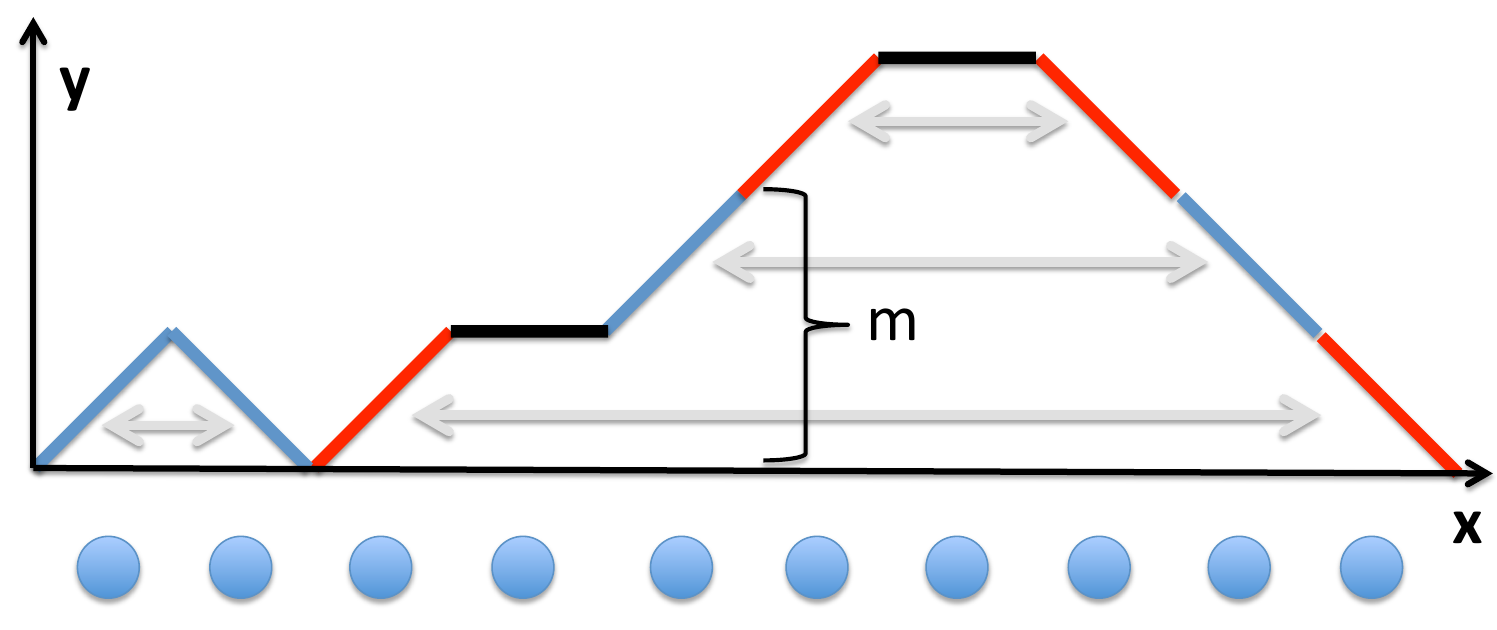}}
\caption{A Motzkin walk with $s=2$ colors of length $2n=10$. The height $m$ quantifies the degree of correlation
between the two halves.}\label{fig:Motzkin-walks-on}
\end{center}
\end{figure}
A Motzkin walk on $2n$ steps is any walk from $\left(x,y\right)=\left(0,0\right)$
to $\left(x,y\right)=\left(2n,0\right)$ with steps $\left(1,0\right)$,
$\left(1,1\right)$ and $\left(1,-1\right)$ that never passes below
the x-axis, i.e., $y\ge0$. An example of such a walk is shown in
Fig. \ref{fig:Motzkin-walks-Uncolored}. The height at the midpoint is $0\le m\le n$
which results from $m$ steps up with the balancing steps down on
the second half of the chain. In our model the unique ground state
is the $s-$colored Motzkin state which is defined to be the uniform
superposition of all $s$ colorings of Motzkin walks on $2n$ steps.
The nonzero heights in the middle are the source of the mutual information
between the two halves and the large entanglement entropy of the half-chain
(Fig. \ref{fig:Motzkin-walks-on}).

The Schmidt rank is $\frac{s^{n+1}-1}{s-1}\approx\frac{s^{n+1}}{s-1}$,
and using a two dimensional saddle point method, the half-chain entanglement
entropy asymptotically is (please see SI for details)
\begin{eqnarray}
S =  2\log_{2}\left(s\right)\sqrt{\frac{2\sigma n}{\pi}}+\frac{1}{2}\log_{2}\left(2\pi\sigma n\right)+(\gamma-\frac{1}{2})\log_{2}e\quad\mbox{bits}\label{eq:EntanglementEntropy}
\end{eqnarray}
where $\sigma=\frac{\sqrt{s}}{2\sqrt{s}+1}$ is constant and $\gamma$ is the
Euler constant. The ground state is a pure state (which we call the
Motzkin state), whose von Neumann entropy is zero. However, the entanglement
entropy quantifies the amount of disorder produced (i.e., information
lost) by ignoring half of the chain. The leading order $\sqrt{n}$
scaling of the entropy establishes that there is a large amount of
quantum correlation between the two halves. 

Consider the following local operations to any Motzkin walk: interchanging
zero with a non-flat step (i.e., $0d^{k}\leftrightarrow d^{k}0$ or
$0u^k\leftrightarrow u^{k}0$) or interchanging a consecutive
pair of zeros with a peak of a given color (i.e., $00\leftrightarrow u^{k}d^{k}$).
Any $s-$colored Motzkin walk can be obtained from another one by
a sequence of these local changes. To construct a local Hamiltonian
with projectors as interactions that has the uniform superposition
of the Motzkin walks as its zero energy ground state, each of the
local terms of the Hamiltonian has to annihilate states that are symmetric
under these interchanges. Local projectors as interactions have the
advantage of being robust against certain perturbations \cite{verstraete2009quantum}.
This is important from a practical point of view and experimental
realizations.

Therefore, the local Hamiltonian, with projectors as interactions,
that has the Motzkin state as its unique zero energy ground state
is
\begin{equation}
H=\Pi_{boundary}+\sum_{j=1}^{2n-1}\Pi_{j,j+1}+\sum_{j=1}^{2n-1}\Pi_{j,j+1}^{cross},\label{eq:H}
\end{equation}
where $\Pi_{j,j+1}$ implements the local changes discussed above
and is defined by 
\begin{equation}
\Pi_{j,j+1}\equiv\sum_{k=1}^{s}\left[|D^{k}\rangle_{j,j+1}\langle D^{k}|+|U^{k}\rangle_{j,j+1}\langle U^{k}|+|\varphi^{k}\rangle_{j,j+1}\langle\varphi^{k}|\right]
\end{equation}
with $|D^{k}\rangle=\frac{1}{\sqrt{2}}\left[|0d^{k}\rangle-|d^{k}0\rangle\right]$, $|U^{k}\rangle=\frac{1}{\sqrt{2}}\left[|0u^{k}\rangle-|u^{k}0\rangle\right]$
and $|\varphi^{k}\rangle=\frac{1}{\sqrt{2}}\left[|00\rangle-|u^{k}d^{k}\rangle\right]$. The
projectors $\Pi_{boundary}\equiv\sum_{k=1}^{s}\left[|d^{k}\rangle_{1}\langle d^{k}|+|u^{k}\rangle_{2n}\langle u^{k}|\right]$
select out the Motzkin state by excluding all walks that start and
end at non-zero heights. Lastly, $\Pi_{j,j+1}^{cross}\equiv \sum_{k\ne i}| u^{k}d^{i}\rangle_{j,j+1}\langle u^{k}d^{i}|$
ensures that balancing is well ordered. For example, we want to ensure that the unbalanced sequence of steps $u^3 u^1 u^2$ is balanced by $d^2 d^1 d^3 $ and not say $d^1 d^3 d^2 $. $\Pi_{j,j+1}^{cross}$ penalizes wrong ordering by prohibiting $00\leftrightarrow u^{k}d^{i}$
when $k\ne i$. These projectors are required only when $s>1$ and
do not appear in \cite{Movassagh2012_brackets}.

The difference between the ground state energy and the energy of the
first excited state is called the gap. One says a system is gapped
when the difference between the two smallest energies is at least
a fixed constant in the thermodynamical limit ($n\rightarrow\infty$).
Otherwise the system is gapless.

Whether a system is gapped has important implications for its physics. When it is gapless, the scaling by which
the gap vanishes as a function of the the system's size, has important
consequences for its physics. For example, gapped systems have exponentially
decaying correlation functions \cite{GottesmannHastings2010}, and
quantum critical systems are necessarily gapless \cite{sachdev2007quantum}.
Moreover, systems that obey a CFT are gapless but the gap must vanish
as $1/n$ \cite{mathieu1997conformal}. Therefore, to quantify the
physics, it is desirable to find new techniques for analyzing the
gap that can be applied in other scenarios.

The model proposed here is gapless and the gap scales as $n^{-c}$
where $c\ge2$ is a constant. We prove this by finding two functions
both of which are inverse powers of $n$ such that the gap is always
smaller than one of them (called an upper bound) and greater than
the other (called a lower bound). We utilize techniques from mathematics such as Brownian excursions and universal convergence of random walks to a Brownian motion, as well as, other ideas from computer science such as linear programming and fractional matching theory.

To prove an upper bound on the gap one needs a state $|\phi\rangle$
that has a small constant overlap with the ground state and such that
$\langle\phi|H|\phi\rangle\ge\mathcal{O}\left(n^{-2}\right)$. Take
\begin{equation}
|\phi\rangle=\frac{1}{\sqrt{M_{2n}}}\sum_{m_{p}}e^{2\pi i\tilde{\theta}\tilde{A}_{p}}\mbox{ }|m_{p}\rangle,\label{eq:phi}
\end{equation}
where the sum is over all Motzkin walks, $M_{2n}$ is the total number
of Motzkin walks on $2n$ steps, $\tilde{A}_{p}$ is the area under
the Motzkin walk $m_{p}$ and $\tilde{\theta}$ is a constant to be
determined by the condition of a small constant overlap with the ground
state. The overlap with the ground state is defined by $\langle{\cal M}_{2n}|\phi\rangle=\left(1/M_{2n}\right)\sum_{m_{p}}e^{2\pi i\tilde{\theta}\tilde{A}_{p}}$.
As $n\rightarrow\infty$, the random walk converges to a Wiener process
\cite{prokhorov1956convergence} and a random Motzkin walk converges
to a Brownian excursion \cite{durrett1977functionals}. We scale
the walks such that they take place on the unit interval. The scaled
area is denoted by $A$ and $\tilde{\theta}\rightarrow\theta$. In
this limit, the overlap becomes (see Fig. \ref{fig:Density_of_Area} for the density and Fig. \ref{fftDensity} for its Fourier transform)
 \footnote{$F_{A}\left(\theta\right)$ is the Fourier transform of the probability
density function which is called the characteristic function.} 
\begin{figure}
\centerline{\includegraphics[width=.4\textwidth]{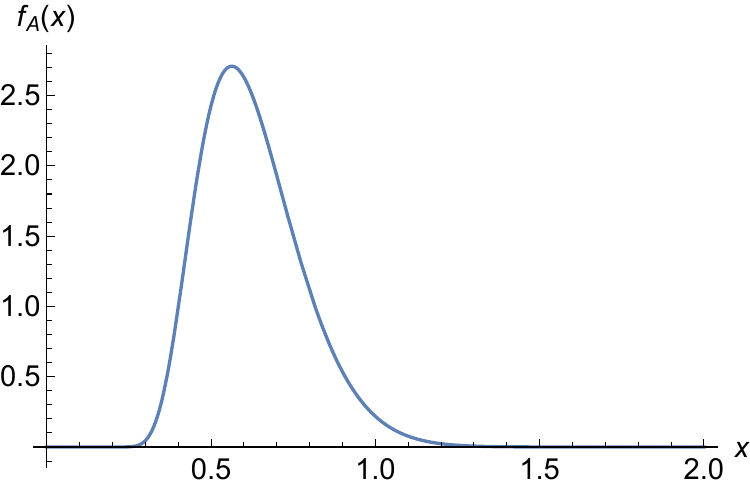}}
\caption{Plot of the probability density of the area under a Brownian excursion $f_{A}\left(x\right)$ on $\left[0,1\right]$.}\label{fig:Density_of_Area}
\end{figure}
\begin{equation}
\lim_{n\rightarrow\infty}\langle{\cal M}_{2n}|\phi\rangle\approx F_{A}\left(\theta\right)\equiv\int_{0}^{\infty}f_{A}\left(x\right)e^{2\pi ix\theta}dx\quad,\label{eq:overlapIntegral}
\end{equation}
where $f_{A}\left(x\right)$ is the probability density function for
the area of the Brownian excursion \cite{janson2007brownian} shown
in Fig. \ref{fig:Density_of_Area}. In Eq. \ref{eq:overlapIntegral},
taking $\theta\ll\mathcal{O}\left(1\right)$, gives $\lim_{n\rightarrow\infty}\langle{\cal M}_{2n}|\phi\rangle\approx1$
because it becomes the integral of a probability distribution. However,
taking $\theta\gg\mathcal{O}\left(1\right)$ gives a highly oscillatory
integrand that nearly vanishes. To have a small constant overlap with
the ground state, we take $\theta$ to be the standard of deviation
of $f_{A}\left(x\right)$. Direct calculation then gives $\langle\phi|H|\phi\rangle=\mathcal{O}\left(n^{-2}\right)$. See SI for details.
This upper bound decisively excludes the possibility of the model
being describable by a conformal field theory \cite{Cardy2009}.
\begin{figure}
\centerline{\includegraphics[width=.4\textwidth]{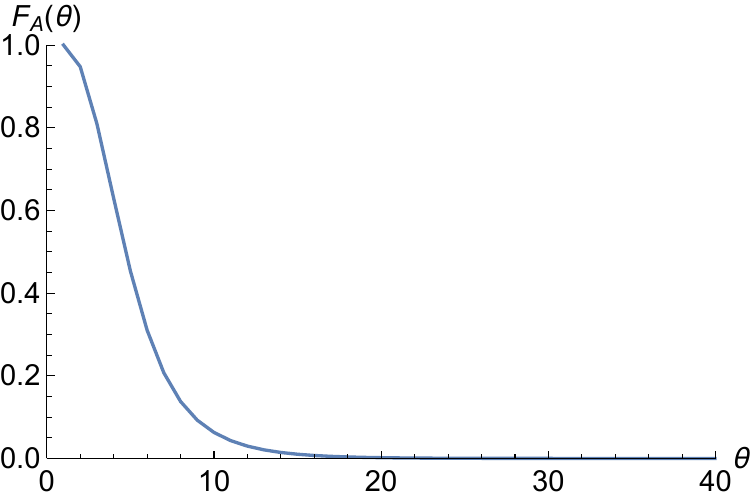}}
\caption{Fourier transform of $f_{A}\left(x\right)$ as defined by Eq. \ref{eq:overlapIntegral}.}\label{fftDensity}
\end{figure}
Using various ideas in perturbation theory, computer science, and
mixing times of Markov chains we obtain a lower bound on the gap that
scales as $n^{-c}$, where $c\gg1$. Since it might be of independent interest in other contexts, we present
a combinatorial and self-contained exposition of the proof in the
SI, different in some aspects from that given in \cite{Movassagh2012_brackets}.

The model above has a unique ground state because the boundary terms
select out the Motzkin state among all other walks with different
fixed initial and final heights. Without the boundary projectors,
all walks that start at height $m_{1}$ and end at height $m_{2}$
with $-2n\le m_{1},m_{2}\le2n$ are ground states. For example, when
$s=1$, the ground state degeneracy grows quadratically with the system's
size $2n$ and exponentially when $s>1$.

For the $s=1$ case, if we impose periodic boundary conditions, then
the the superposition of all walks with an excess of $k$ up (down)
steps is a ground state. This gives $4n+1$ degeneracy of the
ground state, which include unentangled product states.

When $s>1$, each one of the walks with $k$ excess up (down) steps
can be colored exponentially many ways; however, generically they
will \textit{not} be product states. Consider an infinite chain $\left(-\infty,\infty\right)$
and take $s>1$. There is a ground state of this system that corresponds
to the balanced state, where on average for each color,
the state contains as many $u^{i}$ as $d^{i}$. Suppose we restrict
our attention to any block of $n$ consecutive spins. This block contains
the sites $j,j+1,\ldots,j+n-1$, which is a section of a random walk.
Let us assume that it has initial height $m_{j}$ and final height
$m_{j+n-1}$. Further, let us assume that the minimum height of this
section is $m_{k}$ with $j\le k\le j+n-1$. From the theory of random
walks, the expected values of $m_{j}-m_{k}$ and of $m_{j+n-1}-m_{k}$
are $\Theta(\sqrt{n})$. The color and number of any unmatched step ups
 in this block of $n$ spins can be deduced from the remainder
of the infinite walk. Thus a consecutive block of $n$ spins has an
expected entanglement entropy of $\Theta\left(\sqrt{n}\right)$ with
the rest of the chain. A similar argument shows that any block of
$n$ spins has an expected half-block entanglement entropy of $\Theta\left(\sqrt{n}\right)$.

If we take $s=1$, where the ground state can be a product state,
the $\sqrt{n}$ unmatched step up just mentioned can be matched
anywhere on the remaining left and right part of the chain. Two consecutive
blocks of $n$ spins can be unentangled because the number of unbalanced steps
that are matched in the next block is uncorrelated with the number
of unbalanced steps in the first block. However, when $s>1$
the ordering has to match. Even though the number of unbalanced steps
in two consecutive blocks is uncorrelated, the order of the types
of unbalanced steps in them agrees.

The Hamiltonian without the boundary terms is truly translationally
invariant, yet has a degenerate ground state. We now propose a model
with a unique ground state that has the other desirable properties
of the model with boundaries, such as the gap and entanglement entropy
scalings as before. To do so, we put the system in an external field,
where the model is described by the new Hamiltonian 
\begin{eqnarray}
\tilde{H} & \equiv & H+\epsilon\mbox{ }F\label{eq:ExternalFieldModel}\\
F & \equiv & \sum_{i=1}^{2n}\sum_{k=1}^{s}\left(|d^{k}\rangle_{i}\langle d^{k}|\mbox{ }+\mbox{ }|u^{k}\rangle_{i}\langle u^{k}|\right),\nonumber 
\end{eqnarray}
where $H$ is as before but without the boundary projectors and $\epsilon=\epsilon_{0}/n$
with $\epsilon_{0}$ being a small positive constant. It is clear
that $F$ treats $u$ and $d$ symmetrically; therefore, the change
in the energy as a result of applying an external field depends only
on the total number of unbalanced steps denoted by $m$. We denote
the change in the energy of $m$ unbalanced steps by $\Delta E_{m}$.
When $s=1$, the degeneracy after applying the external field will
be, one for the Motzkin state, two-fold when there is a single imbalance,
three-fold for two imbalances, etc. Since the energies are equal for
all $m$ imbalance states, it is enough to calculate the energy for
an excited state with $m$ imbalances resulting only from excess step ups. We denote these states by $|g_{m}\rangle$, where $0\le m\le2n$.

The first order energy corrections, obtained from first order degenerate
perturbation theory, are analytically calculated to be (see SI for details)
\begin{equation}
\epsilon\langle g_{m}|F|g_{m}\rangle\approx4\sigma\epsilon n+\frac{m\epsilon}{8\sqrt{s}}\left(\frac{m}{n}\right)\label{eq:Final_2n}
\end{equation}
The physical conclusion is that the Hamiltonian without the boundary
projectors, in the presence of an external field, $F$, has the Motzkin
state as its unique ground state with energy $4\sigma\epsilon_{0}$.
Moreover, what used to be the rest of the degenerate zero energy states,
acquire energies above $4\sigma\epsilon_{0}$ that for first elementary
excitations scales as $1/n^{2}$. DMRG calculations seem to show that
the actual scaling of the gap, for the system with periodic boundary
conditions in the external field, with the system's size is $n^{-8/3}$ \cite{AdrianChat}. Moreover, the numerical calculations indicate
that the spin-spin correlation functions are flat  \cite{AdrianChat}.
We leave further investigations for future work.

The energy corrections just derived do not mean that the states with
$m$ imbalances will make up for all of the the low energy excitations.
For example, when $s>1$, in the presence of an external field, the
energy of states with a single crossed term will be lower than those
with large $m$ imbalances and no crossings.

Since $||\epsilon F||\ll||H||$, the ground state will deform away
from the Motzkin state to prefer the terms with more zeros in the
superposition. But as long as $\epsilon$ is small, the universality
of Brownian motion guarantees the scaling of the entanglement entropy.
It is, however, not yet clear to us whether $\epsilon$ can be tuned
to a quantum critical point where the ground state has a sharp transition
from highly entangled to nearly a product state. It is possible that
the transition is smooth and that the entanglement continuously diminishes
as $\epsilon$ becomes larger. For example, in the limit where $|\epsilon|\gg||H||/||F||$,
the effective unperturbed Hamiltonian is approximately $F$, whose
ground state is simply the product state $|0\rangle^{\otimes2n}$.

Our model shows that simple physical systems can be much more entangled than expected. From a fundamental physics
perspective, it is surprising that a 1D translationally invariant
quantum spin chain with a unique ground state has about $\sqrt{n}$
entanglement entropy. Moreover, this adds to the collection of exactly
solvable models from which further physics can be extracted. Such
a spin chain can in principle be experimentally realized, and the
large amount of entanglement may be utilized as a resource for quantum technologies and computation. 
\\

\begin{acknowledgments}
We thank Sergey Bravyi and Adrian Feiguin for discussions. RM thanks Herman Goldstine Fellowship
at IBM TJ Watson for the freedom and support, and National Science Foundation (NSF)
for the grant DMS. 1312831. PWS was supported by the US Army
Research Laboratory's Army Research Office through grant number W911NF-12-1-0486,
the NSF through grant number CCF-121-8176, and by the NSF through
the STC for Science of Information under grant number CCF0-939370.
\end{acknowledgments}

%
%\bibliographystyle{apsrev4-1}
%\bibliography{mybib}

%\begin{thebibliography}{10}
% \bibliographystyle{apalike}
%\bibliography{mybib}
%\end{thebibliography}

\newpage
\part*{Supplementary Information: Mathematical Details}

\section{\label{sec:The-Ground-State}The Ground State and its Entanglement}
\subsection{Combinatorics: Dyck paths, Catalan numbers and Motzkin walks}
A Motzkin walk on $2n$ steps is a any
walk, made up of three types of steps: diagonal up, diagonal down
and flat. The walker starts at $\left(x,y\right)=\left(0,0\right)$
and ends at $\left(x,y\right)=\left(2n,0\right)$ such that the walker's
position at any intermediate lattice point $\left(x_{k},y_{k}\right)$
has $y_{k}\ge0$ for all $0\le k\le2n$. The number of all such walks
is counted by the Motzkin numbers $M_{2n}=\sum_{k=0}^{n}\left(\begin{array}{c}
2n\\
2k
\end{array}\right)C_{k}$. Closely related walks are Dyck walks.
\begin{defn}
A Dyck walk (or path) of length $2n$ in the $\left(x,y\right)$ plane
is any path from $\left(0,0\right)$ to $\left(0,2n\right)$ with
steps $\left(1,1\right)$ and $\left(1,-1\right)$, that never passes
below the x-axis. \cite[p. 173]{StanleyVol2} 
\end{defn}

The number of all such walks is counted by the Catalan number $C_{n}\equiv\frac{1}{n+1}\left(\begin{array}{c}
2n\\
n
\end{array}\right)$. 

Catalan numbers are famous numbers in combinatorics. There are hundreds of different combinatorial problems whose solutions are
counted by the Catalan numbers; most of these have been catalogued in \cite{RStanley_CatalanCount}.

We mention in passing that $\lim_{n\rightarrow\infty}C_{n}\approx\frac{4^{n}}{n^{3/2}\sqrt{\pi}}$,
which will be used later to prove the optimality of the proposed canonical
path. 

Now every step up in a Dyck or Motzkin walk has a \textit{unique}
'matching' step down to balance it to ultimately give $y_{2n}=0$.
Suppose there are $s$ colors available, then a Dyck walk of length
$2m$ can be colored $s^{m}$ different ways. For example, take $s=2$
then a walk can be $n$ steps upward with alternating colors of blue
and red which then will uniquely determine the coloring of the remaining
$n$ down steps. 

Any Motzkin walk initially has coordinates $\left(x,y\right)=\left(0,0\right)$.
In the middle of the chain it will have a coordinate $\left(x,y\right)=\left(n,m\right)$
for some $0\le m\le n$; here we denote $m$ to be the "height"
in the middle. To calculate the entropy of a half chain we will need
to count the number of Motzkin walks that start at zero $\left(x,y\right)=\left(0,0\right)$
and reach height $m$ in the middle. A theorem due to André (1887)
counts related (Dyck like) lattice paths \cite[p. 8]{Naryana}.

\begin{thm}
\label{(D.-Andr=0000E9)-Let}(Ballot problem) Let $a,b$ be integers
satisfying $1\leq b\leq a$. The number of lattice paths $\mathcal{N}\left(p\right)$
joining the origin $O$ to the point $\left(a,b\right)$ and not touching
the diagonal $x=y$ except at O is given by 
\begin{equation}
\mathcal{N}\left(p\right)=\frac{a-b}{a+b}\left(\begin{array}{c}
a+b\\
b
\end{array}\right)\label{eq:AndreaEqun}
\end{equation}
 In other words, given a ballot at the end of which candidates $P$,
$Q$ obtain $a$, $b$ votes respectively, the probability that $P$
leads $Q$ throughout the counting of votes is $\frac{a-b}{a+b}$.
\end{thm}

First note that $a=b+1$ gives the Catalan numbers. This theorem can
also be interpreted as counting the number of Dyck walks that reach
a given height for some fixed $x$-coordinate. 

What is the corresponding count of the height of the Motzkin walks
of length $2n$ in the middle (i.e., $x=n$)? Suppose on the half
chain, the Motzkin walk has $k$ zeros. The remaining $n-k$ steps
in this walk are made up of up and down steps. Let the total number
of unmatched step up be $0\le m\le\left(n-k\right)$. Clearly
there are $\left(\begin{array}{c}
n\\
k
\end{array}\right)$ ways to put the zeros and there are $n-k-m$ matched steps.
Hence, there are a total of $\frac{n-k-m}{2}$ matching pairs of steps
on the first $n$ qudits and $m$ unmatched ones, which, for the walk
to be a Motzkin walk, will be matched on the second half of the chain. 

There are $s^{\frac{n-k-m}{2}}$ ways of coloring the matched pairs
on the half chain and $s^{m}$ ways to color the remaining unmatched
up steps. We denote the number of these walks by $M_{n,m,s}$%
\footnote{Not to be confused with the Motzkin numbers $M_{n,s}$.%
}; i.e., the total number of micro-states on the left half chain is
\begin{equation}
\sum_{k=0}^{n-m}\left(\begin{array}{c}
n\\
k
\end{array}\right)s^{\frac{n-k-m}{2}}B_{n-k,m}s^{m}\equiv s^{m}M_{n,m,s}\label{eq:Mm}
\end{equation}
where $B_{n-k,m}$ is the solution of the Ballot problem with height
$m$ on $n-k$ walks. 

Clearly, the Motzkin walk on the second half starts from height $m$
and will eventually reach coordinates $\left(x,y\right)=\left(2n,0\right)$.
Therefore, for every walk on the left half chain that reaches the
height $m$, there are $M_{n,m,s}$ corresponding walks on the right
half that bring it down to zero, i.e., $\left(x,y\right)=\left(2n,0\right)$.
Any choice of coloring of the $m$ unbalanced step ups on the
first half of the chain, uniquely determines the coloring of the second
half. Therefore the total number of $s-$colored Motzkin walks reaching
height $m$ is $s^{m}M_{n,m,s}^{2}$ and the total number of $s-$colored
Motzkin walks of length $2n$ is $N_{n,s}\equiv\sum_{m=0}^{n}s^{m}M_{n,m,s}^{2}$. 

In Eq. \ref{eq:AndreaEqun}, after using $a+b=m-k+1$, $a-b=m+1$
and letting $k\rightarrow n-m-2i$ to take care of parity, 
\begin{eqnarray}
B_{n-k,m} & = & \frac{m+1}{n-k+1}\left(\begin{array}{c}
n-k+1\\
\frac{1}{2}\left(n-k-m\right) 
\end{array}\right)\nonumber \\
&=&\left(\begin{array}{c}
2i+m\\
i
\end{array}\right)-\left(\begin{array}{c}
2i+m\\
i-1
\end{array}\right).\label{eq:Ballot}
\end{eqnarray}

We substitute this into Eq. \ref{eq:Mm}
\begin{eqnarray}
M_{n,m,s} & = & \sum_{i=0}^{\frac{n-m}{2}} s^{i} \left(\begin{array}{c}
n\\
2i+m
\end{array}\right)\left\{ \left(\begin{array}{c}
2i+m\\
i
\end{array}\right)-\left(\begin{array}{c}
2i+m\\
i-1
\end{array}\right)\right\}\nonumber \\
 & = & \left(m+1\right)\sum_{i\ge0}\frac{\left(n\right)! s^{i}}{\left(i+m+1\right)!i!\left(n-2i-m\right)!}\nonumber \\
 & = & \frac{m+1}{n+1}\sum_{i\ge0}s^{i}\left(\begin{array}{ccc}
 & n+1\\
i+m+1 & i & n-2i-m
\end{array}\right)\mbox{ }\nonumber \\
 & \equiv & \sum_{i\ge0}M_{n,m,s,i}.\label{eq:Trinomial} 
\end{eqnarray}
\subsection{The $s-$colored Motzkin state}
\begin{defn}
\label{The-colored-Motzkin}The $s-$colored Motzkin state $|{\cal M}_{2n,s}\rangle$
is the uniform superposition of all $s$ colorings of Motzkin walks
on $2n$ steps defined by 
\[
|{\cal M}_{2n,s}\rangle=\frac{1}{\sqrt{M_{2n,s}}}\sum_{\begin{array}{c}
\mbox{all }s-\mbox{colored}\\
\mbox{Motzkin walks}
\end{array}}|m_{p}\rangle
\]
where $m_{p}$ is an $s-$colored Motzkin walk and $M_{2n,s}$ is
the colored Motzkin number.
\end{defn}
\begin{rem}
For every Motzkin walk reaching height $m$, there are $s^{m}$ eigenvalues
each of size $\frac{M_{n,m}^{2}}{N_{n,s}}$. 
\end{rem}
The Schmidt decomposition of the ground state in the middle of the
chain gives
\begin{equation}
|\mathcal{M}_{2n,s}\rangle=\sum_{m=0}^{n}\sqrt{p_{n,m,s}}\sum_{x\in\left\{ \mbox{ } u^{1},\cdots, u^{s}\mbox{ }\right\} ^{m}}|\hat{C}_{0,m,x}\rangle_{1,\cdots,n}\otimes |\hat{C}_{m,0,\bar{x}}\rangle_{n+1,\cdots,2n}\label{eq:SchDecom}
\end{equation}
where $\hat{C}_{p,q,x}$ is a uniform superposition of all strings
in $\left\{ 0, u^{1},\dots, u^{s}, d^{1},\dots, d^{s}\right\} ^{n}$
with $p$ excess right, $q$ excess step ups and a particular
choice of coloring $x$ of the unmatched steps. For every $x$
there is a unique $\bar{x}$ matching set on the second half of the
chain which is its mirror image. For example if $s=2$, one could
have $x= u^{1} u^{2} u^{2}$ in which case $\bar{x}= d^{2} d^{2} d^{1}$. 
\subsection{Schmidt rank and entanglement entropy}

We now turn to the calculation of the entanglement entropy of the
half chain in the ground state. The Schmidt numbers are
\begin{eqnarray}
p_{n,m,s} & = & \frac{M_{n,m,s}^{2}}{N_{n,s}}\mbox{ },\qquad N_{n,s}\equiv\sum_{m=0}^{n}s^{m}M_{n,m,s}^{2},\label{eq:pAndN_q}
\end{eqnarray}
and the entanglement entropy is given by 
\begin{equation}
S\left(\left\{ p_{n,m,s}\right\} \right)=-\sum_{m=0}^{n}s^{m}p_{n,m,s}\log_{2}p_{n,m,s}.\label{eq:Entropy}
\end{equation}
The \textit{Schmidt rank} is $\frac{s^{n+1}-1}{s-1}\approx\frac{s^{n+1}}{s-1}$
because of the geometric sum on $s^{m}$. 

We are interested in asymptotic scaling of $S\left(\left\{ p_{n,m,s}\right\} \right)$
with the system size. To this end, we shall in what follows, use tools
of asymptotic expansions to evaluate $S\left(\left\{ p_{n,m,s}\right\} \right)$. 

Lets look more carefully at 
\begin{equation}
M_{n,m,s,i}=\left(m+1\right)\left(\begin{array}{ccc}
 & n\\
i+m+1 & i & n-2i-m
\end{array}\right)s^{i}\quad.\label{eq:Trinomial_0}
\end{equation}
If it has a saddle point in the $\left(m,i\right)$-plane, the point
must simultaneous satisfy
\[
\begin{array}{ccccccc}
\frac{M_{n,m,s,i+1}}{M_{n,m,s,i}} & = & 1, & \qquad & \frac{M_{n,m+1,s,i}}{M_{n,m,s,i}} & = & 1\quad.\end{array}
\]

The condition $\frac{M_{n,m,s,i+1}}{M_{n,m,s,i}}=1$ gives $s\left(n-2i-m\right)^{2}-i\left(i+m\right)\approx0$,
yet $\frac{M_{n,m+1,s,i}}{M_{n,m,s,i}}=1$ has its maximum at $m=0$.
In solving for $i$, there are two roots; we choose the one that is
consistent with the $s=1$ result, where $i_{sp}\approx\frac{n}{3}$,
\begin{eqnarray}
i_{sp} & = & \sigma n-\frac{m}{2}+\frac{m}{8\sqrt{s}}\left(\frac{m}{n}\right)+\frac{\left(4s-1\right)m}{128s\sqrt{s}}\left(\frac{m}{n}\right)^{3}+\mathcal{O}\left(n\left(\frac{m}{n}\right)^{5}\right)\label{eq:saddle}\\
 & \approx & \sigma n-\frac{m}{2}+\frac{m}{8\sqrt{s}}\left(\frac{m}{n}\right)\mbox{ },\qquad\sigma\equiv\frac{\sqrt{s}}{2\sqrt{s}+1}\mbox{ }.\nonumber 
\end{eqnarray}

Before getting an asymptotic expansion for Eq. \ref{eq:Trinomial_0},
we consider an example. We will analyze a trinomial coefficient, where
$x+y+z=0$ (noting that $1-2\sigma=\sigma/\sqrt{s}$)
\begin{eqnarray}
\left(\begin{array}{ccc}
 & n\\
\sigma n+x\mbox{ } & \mbox{ }\sigma n+y\mbox{ } & \mbox{ }\left(1-2\sigma\right)n+z
\end{array}\right) & \approx & \sqrt{\frac{2\pi n}{8\pi^{3}\left(\sigma n+x\right)\left(\sigma n+y\right)\left[\left(\frac{\sigma}{\sqrt{s}}\right)n+z\right]}}\label{eq:Trinomial_1}\\
 & \times & \left(\frac{n}{n+x/\sigma}\right)^{\sigma n+x}\left(\frac{n}{n+y/\sigma}\right)^{\sigma n+y}\left(\frac{n}{n+z\sqrt{s}/\sigma}\right)^{\sigma n/\sqrt{s}+z}\nonumber \\
 & \times & \left(\frac{s^{\frac{\sigma}{2\sqrt{s}}}}{\sigma}\right)^{n}s^{\frac{z}{2}}\quad.\nonumber 
\end{eqnarray}
 
But,
\begin{eqnarray*}
\left(\frac{n}{n+x/\sigma}\right)^{\sigma n+x} & = & \exp\left\{ -\left(\sigma n+x\right)\log\left(1+\frac{x}{n\sigma}\right)\right\} \\
 & \approx & \exp\left\{ -\left(\sigma n+x\right)\left(\frac{x}{n\sigma}-\frac{1}{2}\left(\frac{x}{n\sigma}\right)^{2}\right)\right\} \\
 & \approx & \exp\left\{ -x-\frac{x^{2}}{2\sigma n}\right\} 
\end{eqnarray*}
\begin{eqnarray*}
\left(\frac{n}{n+z\sqrt{s}/\sigma}\right)^{\sigma n/\sqrt{s}+z} & = & \exp\left\{ -\left(\frac{\sigma n}{\sqrt{s}}+z\right)\log\left(1+\frac{z\sqrt{s}}{n\sigma}\right)\right\} \\
 & \approx & \exp\left\{ -\left(\frac{\sigma n}{\sqrt{s}}+z\right)\left(\frac{z\sqrt{s}}{n\sigma}-\frac{1}{2}\left(\frac{z\sqrt{s}}{n\sigma}\right)^{2}\right)\right\} \\
 & \approx & \exp\left\{ -z-\frac{z^{2}\sqrt{s}}{2\sigma n}\right\} ;
\end{eqnarray*}
clearly, the expression for $ $$\left(\frac{n}{n+y/\sigma}\right)^{\sigma n+y}\approx\exp\left\{ -y-\frac{y^{2}}{2\sigma n}\right\} $.
In Eq. (\ref{eq:Trinomial_1}), inside the square root is approximately
$\sqrt{\frac{2\pi n}{8\pi^{3}\sigma^{2}\left(\frac{\sigma}{\sqrt{s}}\right)n^{3}}}\approx\frac{s^{1/4}}{2\pi n\sigma^{3/2}}$.
Since $x+y+z=0$,
\begin{equation}
\left(\begin{array}{ccc}
 & n\\
\sigma n+x\mbox{ } & \mbox{ }\sigma n+y\mbox{ } & \mbox{ }\left(1-2\sigma\right)n+z
\end{array}\right)\approx\frac{s^{1/4}}{2\pi n\sigma^{3/2}}\left(\frac{s^{\frac{\sigma}{2\sqrt{s}}}}{\sigma}\right)^{n}s^{\frac{z}{2}}\exp\left(-\frac{x^{2}+y^{2}+\sqrt{s}z^{2}}{2\sigma n}\right)\label{eq:trinomial_example}
\end{equation}

Now we use this result to evaluate Eq. \ref{eq:Trinomial_0} by letting
$i+m=\sigma n+x$ , $i=\sigma n+y$ and $n-2i-m=\left(1-2\sigma\right)n+z$.
Since the standard of deviation of multinomial distributions scales
as $\sqrt{n}$, to get a better asymptotic form, we let $i=i_{sp}+\beta\sqrt{n}$
and $m=\alpha\sqrt{n}$. Hence we identify, 
\begin{eqnarray*}
x & = & \left(\beta+\frac{\alpha}{2}\right)\sqrt{n}+\frac{\alpha^{2}}{8\sqrt{s}}\\
y & = & \left(\beta-\frac{\alpha}{2}\right)\sqrt{n}+\frac{\alpha^{2}}{8\sqrt{s}}\\
z & = & -2\beta\sqrt{n}-\frac{\alpha^{2}}{4\sqrt{s}}
\end{eqnarray*}
 Making these substitutions we get $-\frac{x^{2}+y^{2}+\sqrt{s}z^{2}}{2\sigma n}=-\frac{\alpha^{2}}{4\sigma}-\frac{\sqrt{s}\beta^{2}}{\sigma^{2}}-\mathcal{O}\left(n^{-1/2}\right)$
and $s^{i+\frac{z}{2}}=s^{\sigma n-\frac{\alpha\sqrt{n}}{2}}$ . Therefore,
using Eq. \ref{eq:trinomial_example}, Eq. \ref{eq:Trinomial_0} becomes
\begin{eqnarray*}
M_{n,m,s,i} & = & \frac{\left(m+1\right)}{n+1}\left(\begin{array}{ccc}
 n+1\\
i+m+1 , i , n-2i-m
\end{array}\right)s^{i}
\end{eqnarray*}
This is approximately equal to
\begin{eqnarray*}
M\left(n,s,\alpha,\beta\right) & \equiv & \frac{\left(\alpha\sqrt{n}\right)s^{1/4}}{2\pi n^{2}\sigma^{3/2}}\left(\frac{s^{\frac{\sigma}{2\sqrt{s}}}}{\sigma}\right)^{n}s^{\sigma n-\frac{\alpha\sqrt{n}}{2}} \exp\left(-\frac{\alpha^{2}}{4\sigma}-\frac{\sqrt{s}\beta^{2}}{\sigma^{2}}\right)
\end{eqnarray*}

We need to evaluate $M_{n,m,i,s}$ from $i=0$ to $i=n$. We approximate
this by an integral over $i$ from $0$ to $\infty$. Since $i=\sigma n+\beta\sqrt{n}$,
we have $di=\sqrt{n}\mbox{ }d\beta$. Since the maximum is away from
the boundaries we can extend the integration limit to $-\infty$.
Noting that $\left(\frac{s^{\frac{\sigma}{2\sqrt{s}}}}{\sigma}\right)^{n}s^{i_{sp}}\approx\left(\frac{\sqrt{s}}{\sigma}\right)^{n}s^{-\frac{\alpha\sqrt{n}}{2}}$,
the integration over $\beta$ gives
\begin{eqnarray}
M\left(n,s,\alpha\right) & \equiv & \int d\beta\mbox{ }M\left(n,s,\alpha,\beta\right) =  \frac{\alpha s^{-\alpha\sqrt{n}/2}}{2\sqrt{\pi}n^{3/2}\sigma^{1/2}}\left(\frac{\sqrt{s}}{\sigma}\right)^{n}\exp\left(-\frac{\alpha^{2}}{4\sigma}\right)\label{eq:Mnms}
\end{eqnarray}

Recall that $m=\alpha\sqrt{n}$, hence $s^{m}M_{n,m,s}^{2}$ appearing
in Eq. \ref{eq:Entropy} has an extreme point when 
\begin{equation}
\frac{d}{d\alpha}\alpha^{2}\exp\left[-\frac{\alpha^{2}}{2\sigma}\right]=0\label{eq:alpha_max}
\end{equation}
This happens for $\alpha=\pm\sqrt{2\sigma}$; we clearly need to take
the positive root. For $s=1$, $\alpha=\sqrt{2/3}$ recovers the previous
result \cite{Movassagh2012_brackets}.

Comment: We pause to interpret the nullity of probability at the minimum
$\alpha=m=0$. This corresponds to concatenation of two uniform superpositions
of all Motzkin walks in $n$ steps. Since either half is balanced
by itself, this term is not a source of mutual information between
the two halves and does not contribute to the entanglement entropy. 

We now determine the entropy of the probability distribution $p_{n,m,s}$. After substituting $\alpha=m/\sqrt{n}$ in Eq. \ref{eq:Mnms} and
noting that normalizations cancel, 
\begin{eqnarray}
S\left(\left\{ p_{n,m,s}\right\} \right) & = & -\frac{1}{T}\sum_{m=0}^{n}\frac{m^{2}}{n}\exp\left(-\frac{1}{2\sigma}\frac{m^{2}}{n}\right)\log\left[\frac{1}{T}\frac{m^{2}}{n}s^{-m}\exp\left(-\frac{1}{2\sigma}\frac{m^{2}}{n}\right)\right]\label{eq:EntropyEXACT}\\
T & \equiv & \sum_{m=0}^{n}\frac{m^{2}}{n}\exp\left(-\frac{1}{2\sigma}\frac{m^{2}}{n}\right)\quad.\nonumber 
\end{eqnarray}

We can approximate this with an integral
\begin{eqnarray*}
S\left(\left\{ p_{n,m,s}\right\} \right) & \approx & -\frac{1}{T'}\int_{m=0}^{\infty}dm\mbox{ }\frac{m^{2}}{n}\exp\left(-\frac{1}{2\sigma}\frac{m^{2}}{n}\right)\log\left[\frac{1}{T'}\frac{m^{2}}{n}s^{-m}\exp\left(-\frac{1}{2\sigma}\frac{m^{2}}{n}\right)\right]\\
T' & \approx & \int_{m=0}^{\infty}dm\mbox{ }\frac{m^{2}}{n}\exp\left(-\frac{1}{2\sigma}\frac{m^{2}}{n}\right)\quad.
\end{eqnarray*}
In these integrals we restore the substitution $m=\alpha\sqrt{n}$
to obtain
\begin{eqnarray*}
S\left(\left\{ p_{n,m,s}\right\} \right) & \approx & \frac{1}{2}\log n-\frac{1}{T''}\int_{\alpha=0}^{\infty}d\alpha\mbox{ }\alpha^{2}\exp\left(-\frac{\alpha^{2}}{2\sigma}\right)\log\left[\frac{1}{T''}\alpha^{2}s^{-\alpha\sqrt{n}}\exp\left(-\frac{\alpha^{2}}{2\sigma}\right)\right]\\
T'' & = & \int_{\alpha=0}^{\infty}d\alpha\mbox{ }\alpha^{2}\exp\left(-\frac{\alpha^{2}}{2\sigma}\right)\quad;
\end{eqnarray*}
the factor $\frac{1}{2}\log n$ occurs because $T'=\sqrt{n}T''$ .
Therefore, 
\[
S\left(\left\{ p_{n,m,s}\right\} \right)\approx\frac{1}{2}\log n+\frac{\sqrt{n}}{T''}\log s\int_{\alpha=0}^{\infty}d\alpha\mbox{ }\alpha^{3}\exp\left(-\frac{\alpha^{2}}{2\sigma}\right)-\frac{1}{T''}\int_{\alpha=0}^{\infty}d\alpha\alpha^{2}\exp\left(-\frac{\alpha^{2}}{2\sigma}\right)\log\left[\frac{1}{T''}\alpha^{2}\exp\left(-\frac{\alpha^{2}}{2\sigma}\right)\right]
\]
The two remaining integrals are just numbers and we can calculate
them to obtain the final result
\begin{eqnarray}
S\left(\left\{ p_{n,m,s}\right\} \right) & \approx & 2\log\left(s\right)\mbox{ }\sqrt{\frac{2\sigma}{\pi}}\mbox{ }\sqrt{n}+\frac{1}{2}\log n+\gamma-\frac{1}{2}+\frac{1}{2}\left(\log2+\log\pi+\log\sigma\right)\quad\mbox{nats}\label{eq:H_final}\\
 & = & 2\log_{2}\left(s\right)\mbox{ }\sqrt{\frac{2\sigma}{\pi}}\mbox{ }\sqrt{n}+\frac{1}{2}\log_{2}n+\left(\gamma-\frac{1}{2}\right)\log_{2}e+\frac{1}{2}\left(1+\log_{2}\pi+\log_{2}\sigma\right)\quad\mbox{bits}.\nonumber 
\end{eqnarray}
where $\gamma$ is the Euler gamma number. Note that $s=1$ exactly
recovers the previous result in \cite{Movassagh2012_brackets}. 
\begin{figure}
\centerline{\includegraphics[width=.4\textwidth]{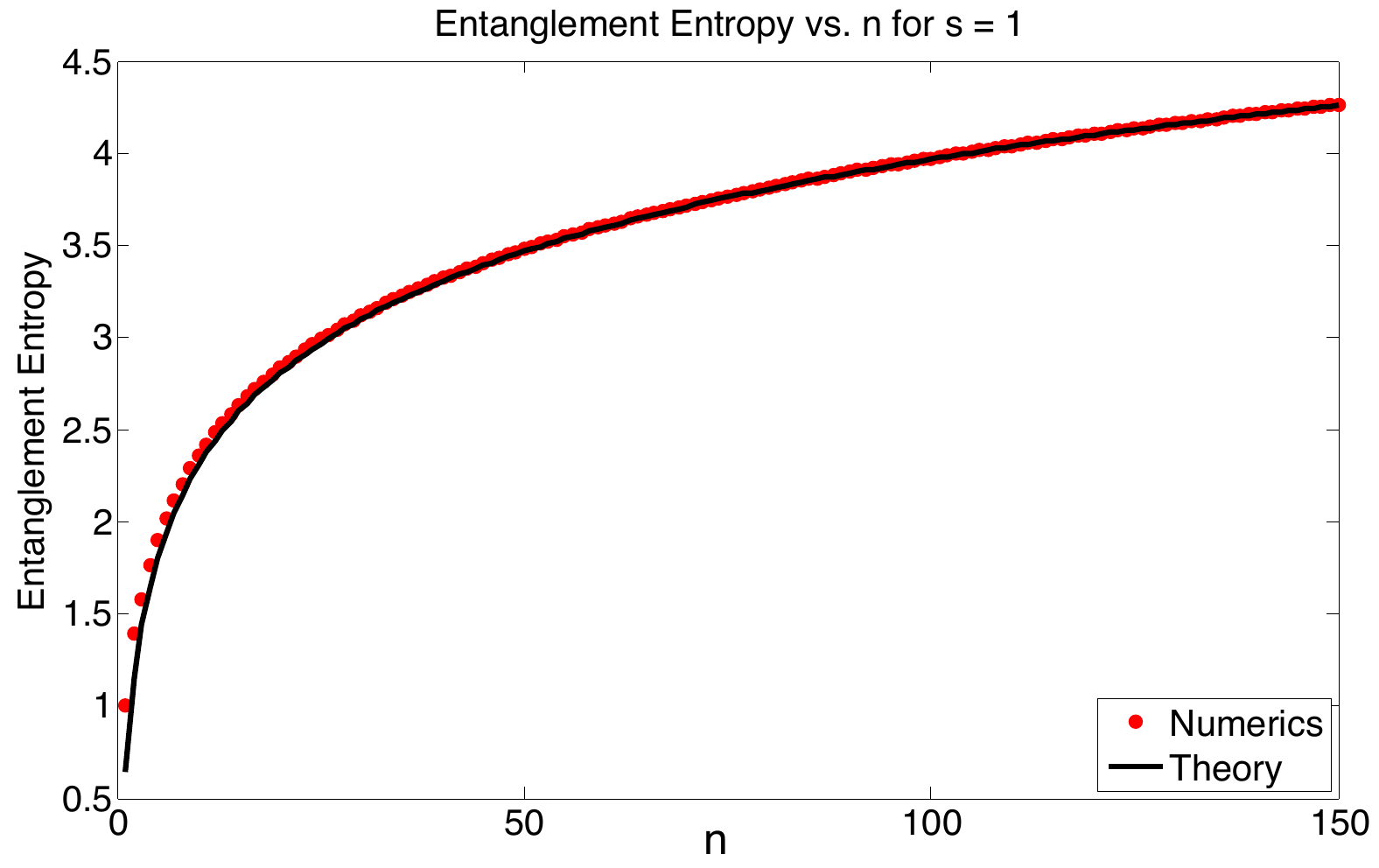}}
\caption{Logarithmic scaling of entanglement entropy for $s=1$.
The red dots are the exact sum (give by Eqs. \ref{eq:Trinomial},\ref{eq:pAndN_q}
and \ref{eq:Entropy}) and the black curve is the asymptotic form
(Eq. \ref{eq:H_final}).}\label{fig:s1}
\end{figure}
\begin{figure}[h]
\centering
{\includegraphics[width=0.4\textwidth]{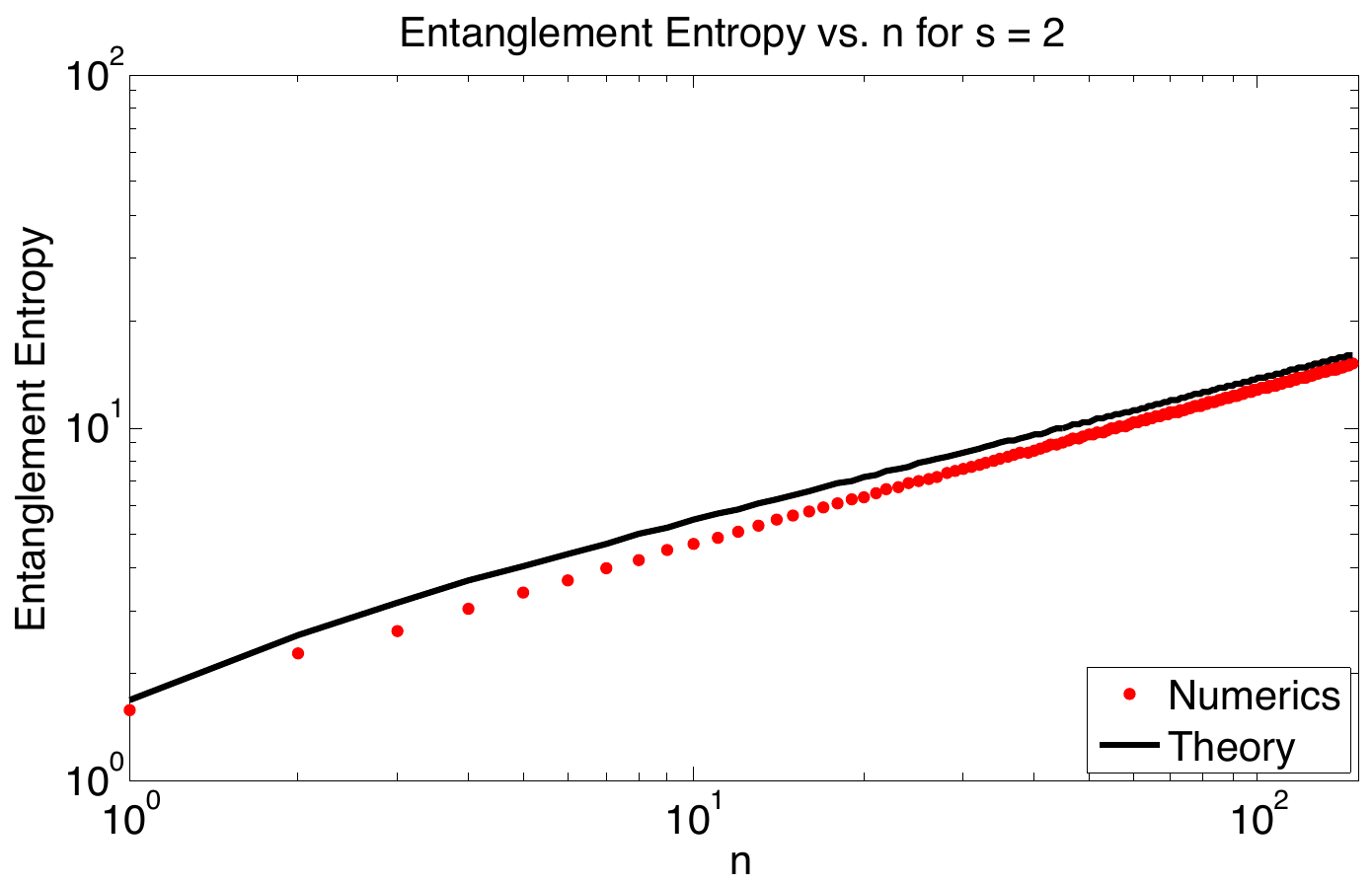}}
\qquad
{\includegraphics[width=0.4\textwidth]{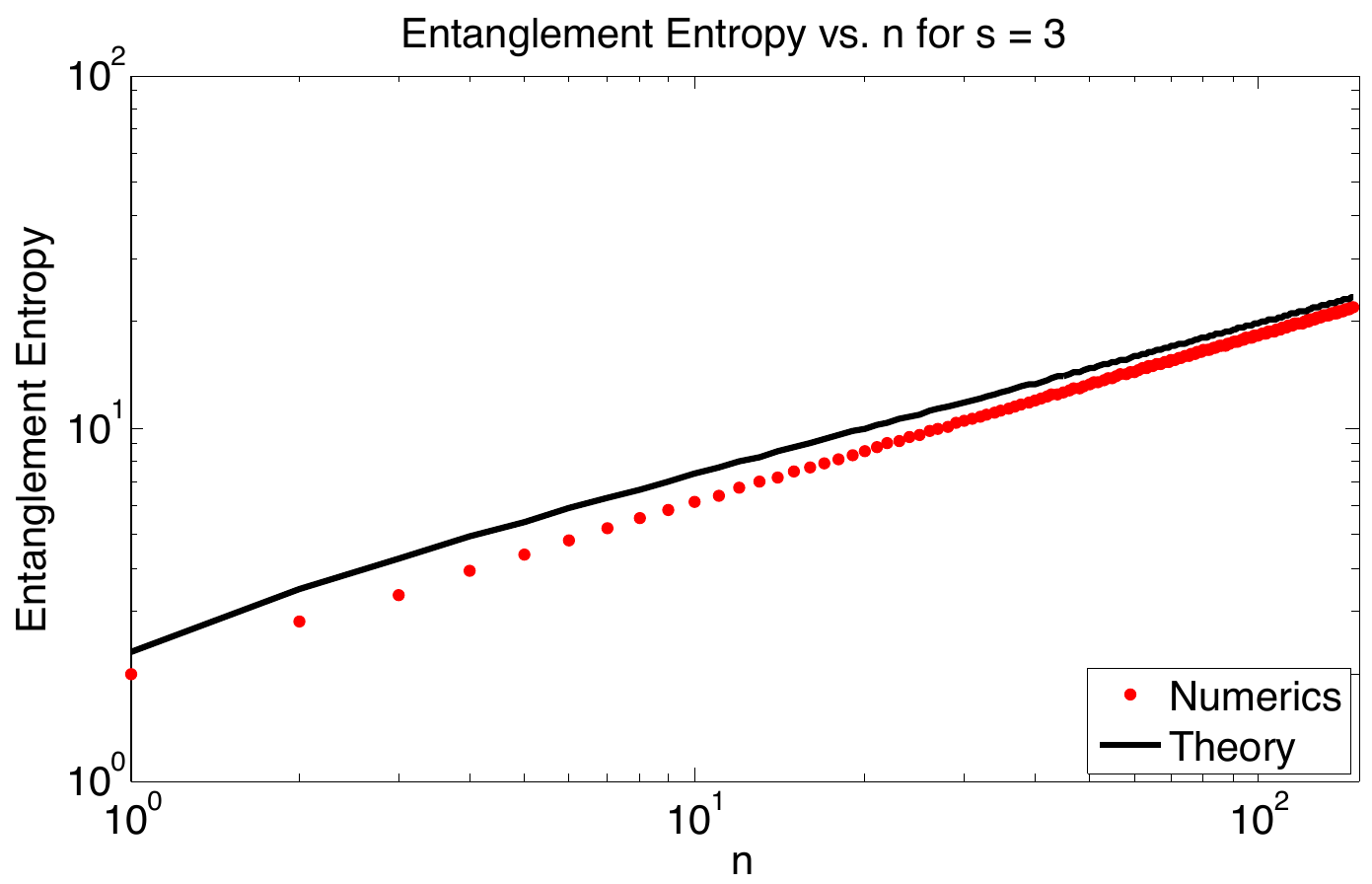}}
{\includegraphics[width=0.4\textwidth]{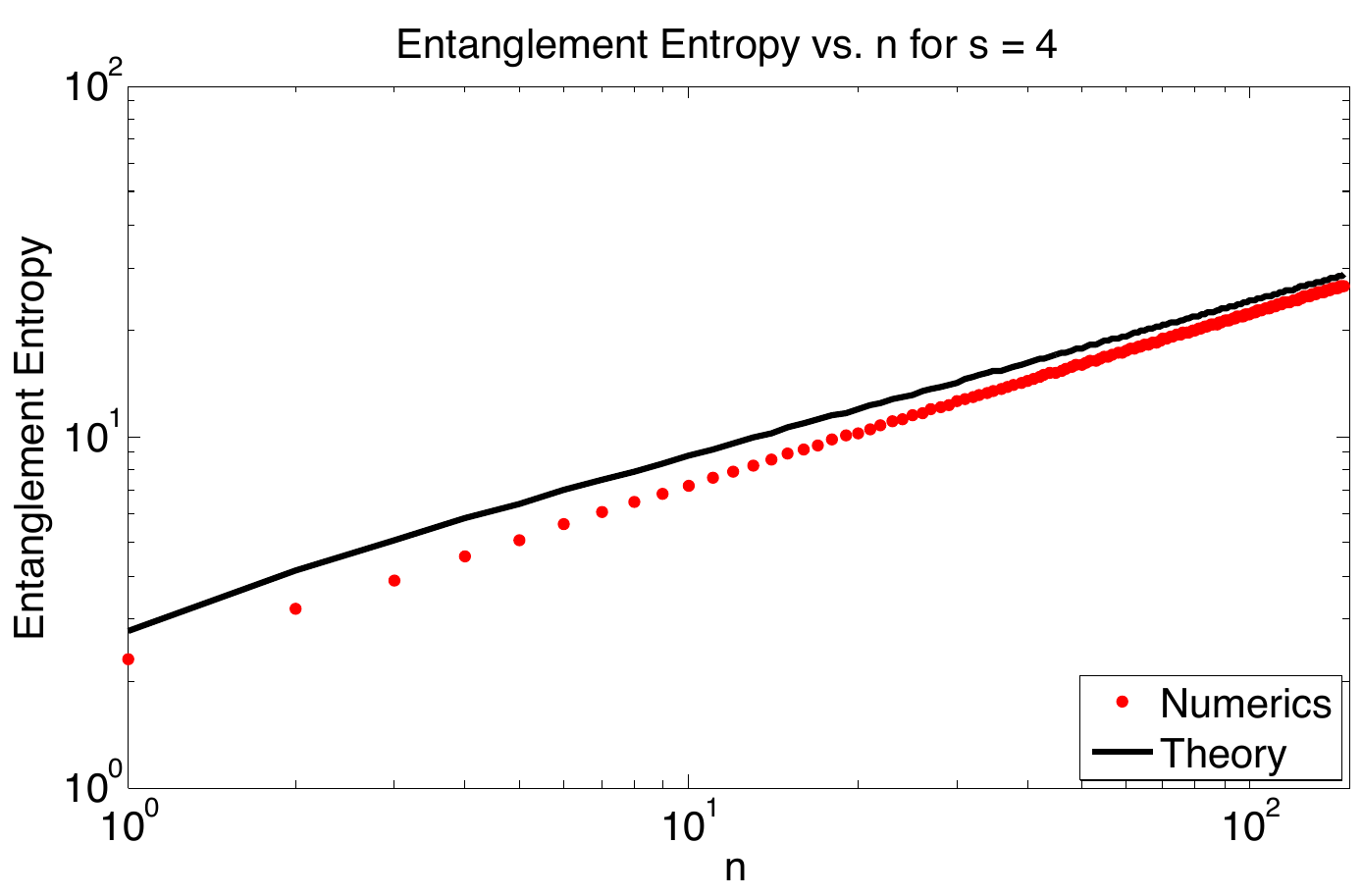}}
\qquad
{\includegraphics[width=0.4\textwidth]{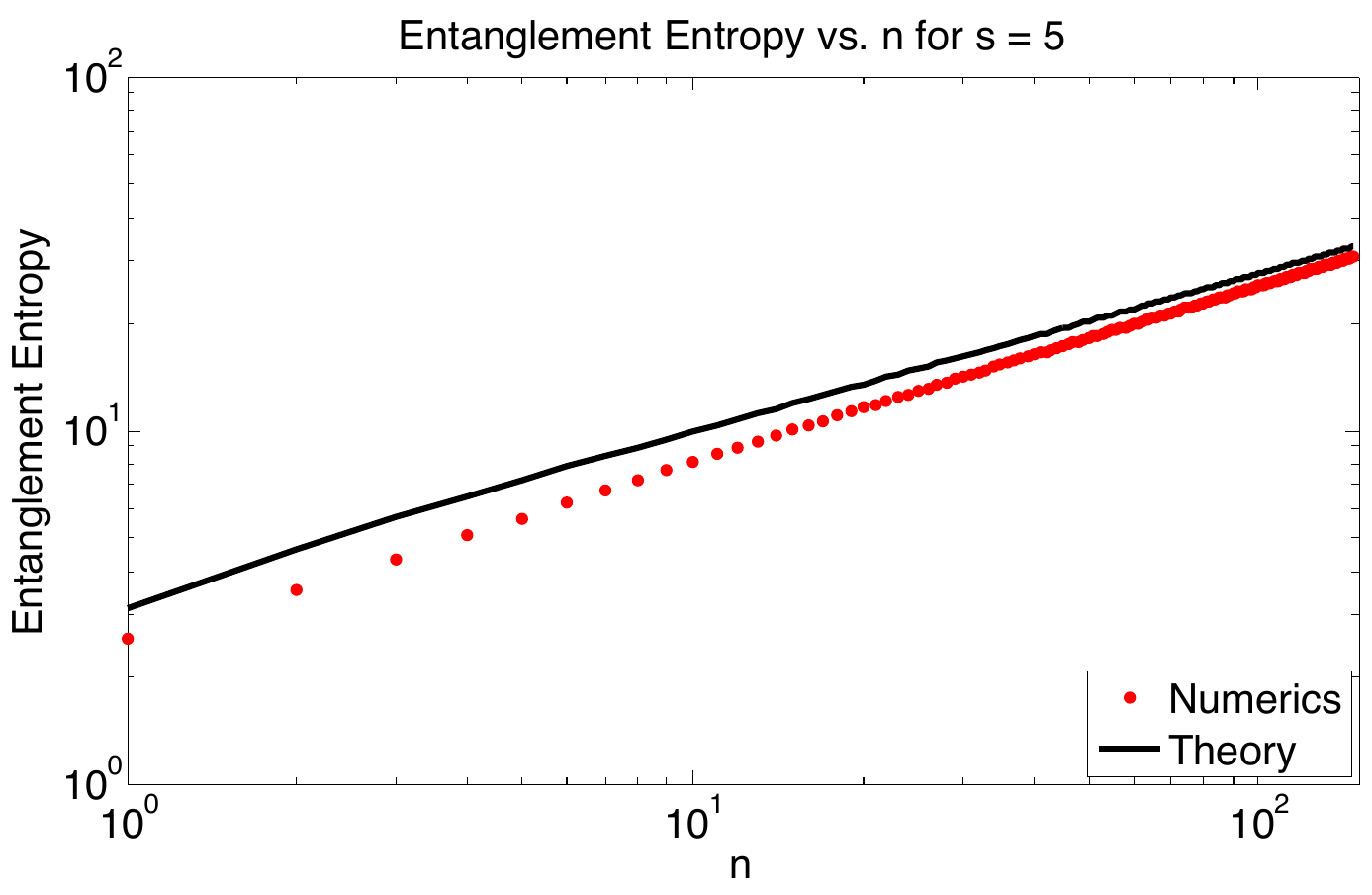}}
\caption{Red dots: the exact sum given by Eqs.\ref{eq:Trinomial},\ref{eq:pAndN_q} and \ref{eq:Entropy}. Black
curve: the asymptotic form (Eq. \ref{eq:H_final}).}
\label{fig:EntanglementEntropy}
\end{figure}

Figs. \ref{fig:s1} and \ref{fig:EntanglementEntropy}, compare the
exact sum given by Eq. (\ref{eq:Entropy} using \ref{eq:Trinomial}
and \ref{eq:pAndN_q}) with the asymptotic result given by Eq.\ref{eq:H_final}.
Fig. \ref{fig:Relative-Error} shows the ratio of the exact sum Eq. (\ref{eq:Entropy}) with Eq. \ref{eq:H_final}.
\begin{figure}
\centerline{\includegraphics[width=.4\textwidth]{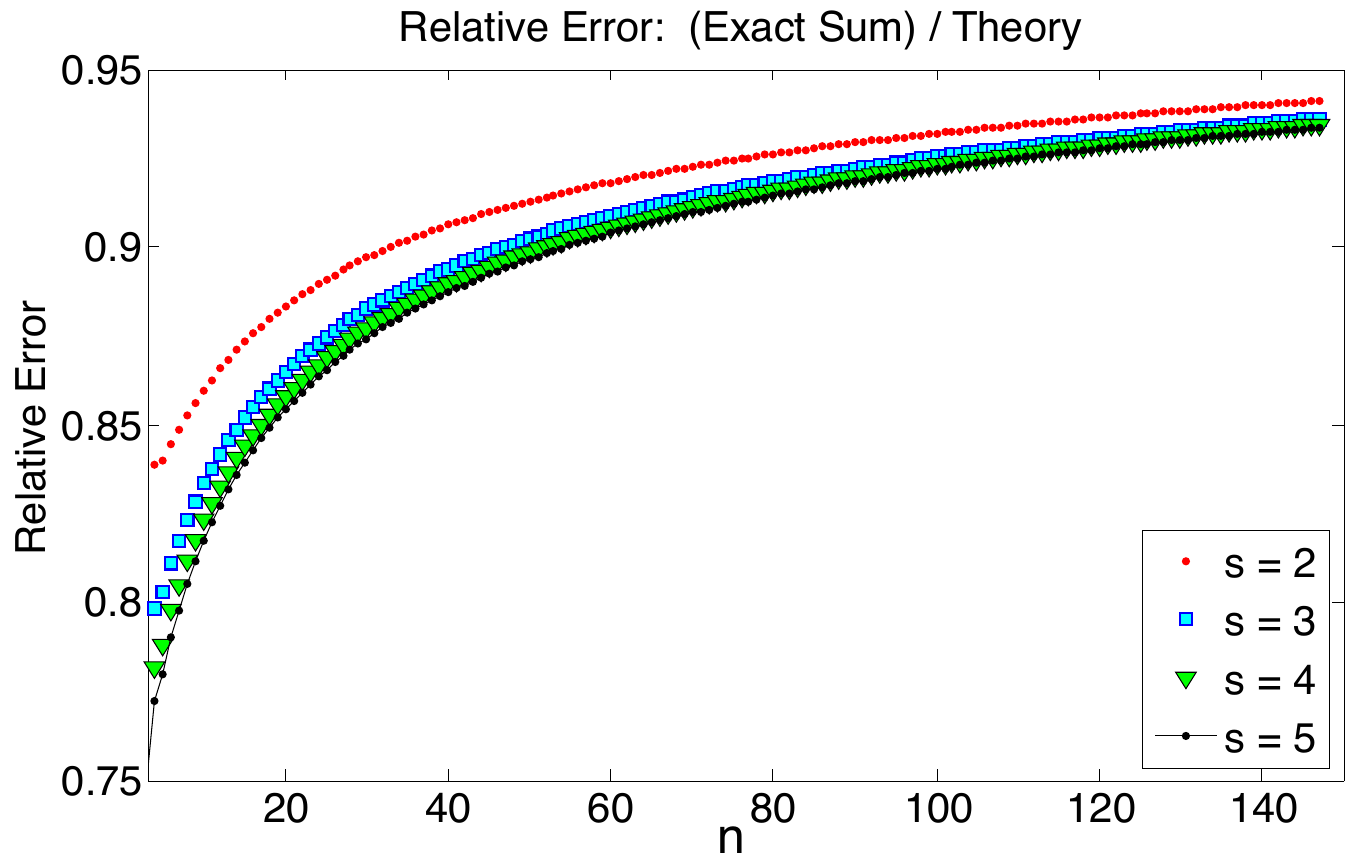}}
\caption{The ratio of Eq. \ref{eq:Entropy} over
Eq. \ref{eq:H_final}; note the asymptotic approach to $1$. }\label{fig:Relative-Error}
\end{figure}
\section{\label{sec:Local-Hamiltonian-and}The Local Hamiltonian and its Gap}
\subsection{The Hamiltonian}
To build a FF local Hamiltonian whose ground state is $|{\cal M}_{2n,s}\rangle$,
we first give a local description of the colored Motzkin walks. As
in \cite{Movassagh2012_brackets}, we say two strings $u$ and $v$
are equivalent, denoted by $u\sim v$, if $u$ can be obtained from
$v$ by a sequence of following local moves:
\begin{eqnarray}
0 d^{k} & \leftrightarrow &  d^{k}0\label{eq:moves}\\
0 u^{k} & \leftrightarrow &  u^{k}0\nonumber \\
00 & \leftrightarrow &  u^{k} d^{k}\quad\forall\mbox{ }k=1,\dots,s\quad;\nonumber 
\end{eqnarray}
these moves can be applied to any consecutive pair of letters. 

Under local moves the matched pairs annihilate one by one and ultimately
the string would have some number of excess unmatched right and/or
step up as well as potentially some crossed pairings. For
example, below are examples of such ``unbalanced'' strings when
$d=5$
\begin{eqnarray*}
 d^{1} d^{2}0 & \cdots & 0 u^{2} u^{1}\mbox{ }\qquad\mbox{unmatched}\\
 u^{1} d^{2}0 & \cdots & 0 u^{2} d^{1}\qquad\mbox{crossed}\\
 d^{2} d^{2}0\cdots0 &  u^{1} d^{2} d^{1} & 0\cdots0 u^{1}\quad\mbox{\mbox{unmatched} and crossed}
\end{eqnarray*}

The equivalent classes of strings that we introduced previously \cite{Movassagh2012_brackets}
are more complicated now, because now strings can get 'jammed' in
various ways under local moves. However, any string except the Motzkin
path will have a minimum nonzero Hamming weight under local moves. 
\begin{lem}
A string $u$ is a Motzkin path iff it is equivalent to the string
of all zeros. 
\end{lem}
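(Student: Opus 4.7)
The plan is to prove both implications. For the converse direction (if $u \sim 00\cdots 0$ then $u$ is a Motzkin path): I verify that each of the three local moves in Eq.~\ref{eq:moves} preserves the property of being a colored Motzkin walk, and then observe that $00\cdots 0$ itself corresponds to the flat walk. Concretely, tracking the walk height $y$ letter-by-letter, the swaps $0r^{k}\leftrightarrow r^{k}0$ and $0\ell^{k}\leftrightarrow \ell^{k}0$ alter the trajectory only at the single intermediate lattice point (by $\pm 1$), and non-negativity at that point follows from non-negativity at the two surrounding endpoints in the original walk; the substitution $00\leftrightarrow \ell^{k}r^{k}$ only inserts or removes a locally-adjacent same-colored matched pair, so it preserves the non-crossing color-matched pairing and cannot send the walk below the $x$-axis. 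Since the moves are symmetric, these invariance checks give the converse.

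For the forward direction (if $u$ is a Motzkin path then $u\sim 00\cdots 0$), I induct on the number of non-zero letters of $u$. The base case (no parentheses) is immediate. For the inductive step, the crucial combinatorial input is that the up and down steps of any Motzkin walk form a non-crossing pairing whose matched pairs share colors; consequently there exists an \emph{innermost} matched pair $\ell^{k},r^{k}$ with no other parenthesis lying strictly between them in the string. Therefore the letters between this pair consist only of zeros. Using moves 1 and 2 repeatedly, slide $\ell^{k}$ rightward past these zeros until it becomes adjacent to $r^{k}$, and then apply move 3 in reverse to replace $\ell^{k}r^{k}$ by $00$. The resulting string has strictly fewer non-zero letters and, by the converse already established, is still a Motzkin walk, so the inductive hypothesis finishes the argument.

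The main obstacle is establishing the existence of the innermost matched pair whose interior contains only zeros. This rests on the observation that in a Motzkin walk the pairing of each up step $\ell^{k}$ with its matching down step $r^{k}$ (the one that returns the walker to the same height) is automatically non-crossing, and any non-crossing matching must contain a leaf of its nesting forest. Once this combinatorial fact is in hand, the sliding-and-reduction step is mechanical and the induction closes cleanly.
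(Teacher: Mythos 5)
Your proof is correct, and it takes a genuinely different route from the paper's. The paper concentrates almost entirely on the harder implication --- that a string equivalent to $0^{2n}$ must be a Motzkin path --- and proves it by contrapositive: reduce a non-Motzkin string under local moves until no matched pair remains, then observe that the rightmost surviving $\ell^{k}$ (or leftmost surviving $r^{k}$) can never be annihilated, so the Hamming weight stays positive. The forward implication (Motzkin $\Rightarrow$ equivalent to $0^{2n}$) is left essentially implicit in the phrase ``under local moves all matched pairs annihilate.'' You instead handle both implications explicitly and positively: for $\Leftarrow$ you verify that each of the three generating moves preserves the Motzkin property (non-negativity of the height profile and color-consistency of the nesting), so ``being a colored Motzkin walk'' is an invariant of the equivalence relation; for $\Rightarrow$ you locate an innermost matched pair (a leaf of the non-crossing nesting forest), slide it together, annihilate, and induct. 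The invariance argument buys you a cleaner converse --- you never need to reason about what a ``stuck'' configuration looks like or argue that no alternative reduction sequence could do better, which the paper's sketch tacitly assumes --- while the innermost-pair induction makes explicit the reduction procedure the paper takes for granted. Both are sound; yours is more self-contained.
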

\begin{proof}
Under local moves all matched pairs annihilate and there will be no
substrings with $ u^{k} d^{k}$ or $ u^{k}0\cdots0 d^{k}$. If this
is the case, and $u$ contains at least one step up of any
color, then we can focus on the rightmost one and denote it by $ u^{k}$.
We can apply local moves such that there are no zeros to the right
of $ u^{k}$. Then $ u^{k}$ will either be the rightmost letter
of $u$ or it will be followed to its right by an $ d^{i}$ with $i\ne k$.
In either case the Hamming weight of the string is at least $1$.
Similarly if after the local moves, the string $u$ contains at least
one step down, then we can pick the leftmost one and use similar
reasoning to show that under the local moves the minimum Hamming weight
is at least $1$. The only strings that are equivalent to the zero-Hamming
weight string $0^{2n}$ are the colored Motzkin walks.
\end{proof}

We take the ground state to be the uniform superposition of all $s-$colored
Motzkin walks, i.e., strings $u\in\left\{ 0, u^{1}, u^{2},\cdots, u^{s}, d^{1},\cdots, d^{s}\right\} ^{2n}$
that are equivalent to $u_{0}=0^{2n}$. For example, on two qudits,
the ground state is $\sim\left\{ |00\rangle+\sum_{k=1}^{s}| u^{k} d^{k}\rangle\right\} $.

The local Hamiltonian that implements the local moves, given in the
paper, is
\begin{equation}
H=\Pi_{boundary}+\sum_{j=1}^{2n-1}\Pi_{j,j+1}+\sum_{j=1}^{2n-1}\Pi_{j,j+1}^{cross}\label{eq:H}
\end{equation}
with $\Pi_{j,j+1}\equiv\sum_{k=1}^{s} | D^{k}\rangle_{j,j+1}\langle  D^{k}|+|U^{k}\rangle_{j,j+1}\langle U^{k}|+|\varphi^{k}\rangle_{j,j+1}\langle\varphi^{k}|$, where  
\begin{eqnarray}
| D^{k}\rangle=\frac{1}{\sqrt{2}}\left[|0 d^{k}\rangle-| d^{k}0\rangle\right]; \quad |U^k\rangle=\frac{1}{\sqrt{2}}\left[|0 u^{k}\rangle-| u^{k}0\rangle\right];\quad |\varphi^{k}\rangle=\frac{1}{\sqrt{2}}\left[|00\rangle-| u^{k} d^{k}\rangle\right].\label{eq:projectors}
\end{eqnarray}

The projector $| D^{k}\rangle\langle  D^{k}|$ implements $0 d^{k}\leftrightarrow  d^{k}0$
, $|U^k\rangle\langle U^k|$ implements $0 u^{k}\leftrightarrow u^{k}0$
and $|\varphi^{k}\rangle\langle\varphi^{k}|$ implements the \textit{interaction}
term $00\leftrightarrow u^{k} d^{k}$. The last set of projections
$\Pi^{cross}$ penalize wrongly ordered matching of steps of
different types. The rank of the local projectors away from the boundaries
is $s\left(s+2\right)$. The contributions to the rank are $2\left(\begin{array}{c}
s\\
2
\end{array}\right)$ from the penalty terms $\Pi_{j,j+1}^{cross}$, $2s$ from propagation
through the vacuum (i.e., zeros) given by the span of $|U^k\rangle$
and $| D^{k}\rangle$, and $s$ from creation and annihilation of particles
given by the span of $|\varphi\rangle$. 
\begin{lem}
\label{EqualSuperposition}If a state $|\psi\rangle$ is annihilated
by every $\Pi_{j,j+1}$, then it has the same amplitude on any two
strings $u$ and $v$ that are equivalent under the local moves (Eq.
\ref{eq:moves}).
\end{lem}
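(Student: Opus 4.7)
The plan is to expand $|\psi\rangle = \sum_u a_u |u\rangle$ in the computational basis indexed by words $u \in \{0,\ell^1,\ldots,\ell^s,r^1,\ldots,r^s\}^{2n}$, and to show that $a_u = a_v$ whenever $u$ and $v$ differ by a single application of one of the three moves in \eqref{eq:moves} at some adjacent pair $(j,j+1)$. Since the equivalence relation $\sim$ is by definition generated by these single-step transformations, transitivity then upgrades pairwise equality to arbitrary chains of moves and yields the lemma.

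To exploit the hypothesis at a fixed pair $(j,j+1)$, I will factor $|\psi\rangle = \sum_{w} |\phi_w\rangle_{j,j+1} \otimes |w\rangle_{\text{rest}}$, where $w$ ranges over computational basis strings on the $2n-2$ remaining sites and $|\phi_w\rangle$ collects the two-site amplitudes with environment $w$ held fixed. Orthonormality of the $|w\rangle$'s forces $\Pi^{(2)}_{j,j+1} |\phi_w\rangle = 0$ for every $w$, where $\Pi^{(2)}_{j,j+1}$ is the two-site operator $\sum_{k}[|R^k\rangle\langle R^k| + |L^k\rangle\langle L^k| + |\varphi^k\rangle\langle\varphi^k|]$ defined by \eqref{eq:projectors}. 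The one subtlety to watch is that this two-site operator is not itself a projector, because the $|\varphi^k\rangle$ for different $k$ all overlap on $|00\rangle$; the key intermediate step is therefore a positivity argument rather than an orthogonal decomposition. Namely, $\Pi^{(2)}_{j,j+1}$ is a sum of rank-one positive-semidefinite terms, so $\langle \phi_w | \Pi^{(2)}_{j,j+1} | \phi_w \rangle = 0$ forces each of $\langle R^k | \phi_w \rangle$, $\langle L^k | \phi_w \rangle$, and $\langle \varphi^k | \phi_w \rangle$ to vanish separately, for all $k$ and all $w$.

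Finally, I will read off these $3s$ scalar equations from the explicit formulas in \eqref{eq:projectors}: each one is an antisymmetric pairing of exactly two basis amplitudes. For instance $\langle \varphi^k | \phi_w \rangle = 0$ is precisely the equality $a_{u(w;\,00)} = a_{u(w;\,\ell^k r^k)}$, where $u(w;xy)$ denotes the string with $xy$ at positions $(j,j+1)$ and environment $w$ elsewhere; the analogous identifications hold for $R^k$ and $L^k$. These are exactly the amplitude equalities demanded by one application of each of the three local moves at the chosen pair. Iterating along any chain of moves relating $u$ to $v$ yields $a_u = a_v$ and finishes the proof. I do not expect any genuine obstacle here; the argument is essentially mechanical, and the only step really worth flagging is the positivity step above, which would fail if the summands of $\Pi^{(2)}_{j,j+1}$ were not manifestly rank-one PSD terms.
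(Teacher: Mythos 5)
Your argument is correct and follows essentially the same route as the paper's proof: reduce to a single local move at a pair $(j,j+1)$, observe that the difference of the two corresponding bras is (up to a factor) the bra of one of the vectors in Eq.~\ref{eq:projectors}, and chain equalities along the sequence of moves. The one place you go beyond the paper is worth keeping: you explicitly note that $\Pi_{j,j+1}$ is a sum of rank-one PSD terms rather than an orthogonal projector (the $|\varphi^k\rangle$ overlap on $|00\rangle$), so passing from $\Pi_{j,j+1}|\psi\rangle=0$ to the vanishing of each $\langle R^k|\phi_w\rangle$, $\langle L^k|\phi_w\rangle$, $\langle\varphi^k|\phi_w\rangle$ individually needs either the positivity argument you give or the linear independence of the vectors; the paper elides this step.
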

\begin{proof}

If $ $ $u\sim v$ then there exists a sequence of local moves that
takes $u$ to $v$. Suppose a local move in this sequence is applied
to the $j$ and $j+1$ position of the string to take $u$ to $u'$,
then $\left[\langle u|-\langle u'\right|]_{j,j+1}$ is proportional
to a bra of one of the projectors in Eq. \ref{eq:projectors}. If
$|\psi\rangle$ is annihilated by all the projectors then $\langle u|\psi\rangle=\langle u'|\psi\rangle$
. Since $v$ is obtained from $u$ by a sequence of such local moves
then $\langle u|\psi\rangle=\langle v|\psi\rangle$. 
\end{proof}

The projectors $\Pi_{j,j+1}$ simply move steps through $0$'s
(or vacuum) or create or annihilate a balanced pair of steps of same
color. Under these moves product states, such as $| d^{1}\rangle^{\otimes2n}$,
can also be ground states. Moreover, states such as $| u^{1}\rangle^{\otimes n}\otimes | d^{2}\rangle^{\otimes n}$
are also annihilated by every $\Pi_{j,j+1}$. In order to select out
$|{\cal M}_{2n,s}\rangle$ from all other states, we impose boundary
conditions $\Pi_{boundary}$ that penalize states that are imbalanced
by assigning an energy $1$ to any state that starts with a step down
($ d^{i}$) or ends with a step up ($ u^{i}$) of any color. In addition,
we locally impose $\Pi_{j,j+1}^{cross}$ to prevent crossed pairing
states. 

For example, when $s=2$ the Hamiltonian is
\begin{eqnarray}
H= | d^{1}\rangle_{1}\langle  d^{1}|+| d^{2}\rangle_{1}\langle  d^{2}|+| u^{1}\rangle_{2n}\langle u^{1}|+| u^{2}\rangle_{2n}\langle u^{2}|+  \sum_{j=1}^{2n-1}\Pi_{j,j+1}+\sum_{j=1}^{2n-1}\Pi_{j,j+1}^{cross},\label{eq:H_q=00003D2}
\end{eqnarray}
where $\Pi_{j,j+1}^{cross}=\left\{| u^{1} d^{2}\rangle\langle u^{1} d^{2}|+| u^{2} d^{1}\rangle\langle u^{2} d^{1}|\right\} _{j,j+1}$
and $\Pi_{j,j+1}$ is the span of
\begin{equation}
\begin{array}{ccccc}
| D^{1}\rangle=\frac{1}{\sqrt{2}}\left\{|0 d^{1}\rangle-|  d^{1}0\rangle\right\};\quad | D^{2}\rangle=\frac{1}{\sqrt{2}}\left\{|0 d^{2}\rangle-| d^{2}0\rangle\right\} ; \\
|U^{1}\rangle=\frac{1}{\sqrt{2}}\left\{|0 u^{1}\rangle-| u^{1}0\rangle\right\}; \quad |U^{2}\rangle=\frac{1}{\sqrt{2}}\left\{ |0 u^{2}\rangle-| u^{2}0\rangle\right\}; \\
|\varphi^{1}\rangle=\frac{1}{\sqrt{2}}\left\{ |00\rangle-| u^{1} d^{1}\rangle\right\}; \quad |\varphi^{2}\rangle=\frac{1}{\sqrt{2}}\left\{ |00\rangle-| u^{2} d^{2}\rangle\right\}.
\end{array}\label{eq:LocalProj}
\end{equation}

It follows that the $s-$colored Motzkin state $|{\cal M}_{2n,s}\rangle$,
which is the uniform superposition of all $s-$color Motzkin paths,
is the unique ground state of the FF local Hamiltonian $H$. 
\subsection{Proof of $\mathcal{O}\left(n^{-2}\right)$ upper bound on the gap}
We shall first give a definition of the balanced subspace. We then
assume that we can find a state $|\phi\rangle$ in the balanced subspace,
which has a low energy and a small overlap with the ground state.
We will show that this implies that the first excited state has a
low energy.

\begin{defn}
(balanced subspace) The balanced subspace is the span of the $s-$colored
Motzkin state $|{\cal M}_{2n,s}\rangle$ as defined in Definition \ref{The-colored-Motzkin}. 
\end{defn}

\begin{rem}
In the balanced subspace, the amplitude of any vector on $\Pi_{boundary}$
and $\Pi_{j,j+1}^{cross}$ for all $j$ in the Hamiltonian (Eq. \ref{eq:H})
vanishes. The Hamiltonian then is simply $H=\sum_{j=1}^{2n-1}\Pi_{j,j+1}$.
\end{rem}

For now we take $s=1$ and denote $|{\cal M}_{2n,1}\rangle$ by $|{\cal M}_{2n}\rangle$
and the Motzkin number $M_{2n,1}$ by $M_{2n}$. Take the state $|\phi\rangle$
in the balanced subspace defined by 
\begin{equation}
|\phi\rangle=\alpha_{0}|{\cal M}_{2n}\rangle+\alpha_{1}|e_{1}\rangle+\alpha_{2}|e_{2}\rangle+\cdots+\alpha_{\left(M_{2n}-1\right)}|e_{\left(M_{2n}-1\right)}\rangle\label{eq:AnyState}
\end{equation}
where $|e_{i}\rangle$ is the $i^{th}$ excited states of $H$ in the
balanced subspace. Clearly, $\sum_{i=0}^{{\cal M}_{2n}-1}\left|\alpha_{i}\right|^{2}=1$
and
\[
\langle\phi|H|\phi\rangle=\sum_{i=1}^{{\cal M}_{2n}-1}\left|\alpha_{i}\right|^{2}\langle e_{i}|H|e_{i}\rangle
\]
where we recall that $H=\sum_{j=1}^{2n-1}\left\{ |D\rangle\langle D|+|U\rangle\langle U|+|\varphi\rangle\langle\varphi|\right\} _{j,j+1}$
in the balanced subspace. Let us take $|\phi\rangle$ to have a small
overlap with the ground state, say $\left|\alpha_{0}\right|^{2}\le1/2$.
Then 
\begin{equation}
\langle\phi|H|\phi\rangle\ge\sum_{i=1}^{M_{2n}-1}\left|\alpha_{i}\right|^{2}\langle e_{1}|H|e_{1}\rangle\ge\frac{1}{2}\langle e_{1}|H|e_{1}\rangle\equiv\frac{1}{2}\Delta(H).\label{eq:Lowerbound_Statement}
\end{equation}

We choose $|\phi\rangle$, as defined in the paper, to be $|\phi\rangle\equiv1/\sqrt{M_{2n}}\sum_{m_{p}}e^{2\pi i\tilde{\theta}\tilde{A}_{p}}|m_{p}\rangle,$
where $\tilde{A}_{p}$ is the area under the Motzkin walk $m_{p}$,
$\tilde{\theta}$ is a constant we will specify later and the sum
is over all Motzkin walks. The overlap with the ground state is
\begin{equation}
\langle{\cal M}_{2n}|\phi\rangle=\frac{1}{M_{2n}}\sum_{m=0}^{{\cal M}_{2n}-1}e^{2\pi i\tilde{\theta}\tilde{A}_{p}}.\label{eq:overlap}
\end{equation}

As $n\rightarrow\infty$, the random walk converges to a Wiener process
\cite{prokhorov1956convergence} and a random Motzkin walk converges
to a Brownian excursion \cite{durrett1977functionals}. We wish to
scale the random walk such that it takes place on $\left[0,1\right]$,
which is the standard form and gives $\mathcal{O}\left(1\right)$
mean and variance.

To this end, we first describe the standard Brownian excursion. Let
$B\left(t\right)$ be a standard Brownian motion on $\left[0,1\right]$
with $B\left(0\right)=0$. A standard (normalized) Brownian excursion,
$B_{\mbox{ex}}\left(t\right)$, on the interval $\left[0,1\right]$
is defined by $B\left(t\right)$ conditioned on $B\left(t\right)\ge0$
and $B\left(1\right)=0$ for $t\in\left[0,1\right]$ (see \cite{janson2007brownian}).
Let $B_{\mbox{ex}}\left(t\right)$ be a Brownian excursion and \cite[p. 84]{janson2007brownian}
\[
{\cal B}_{\mbox{ex}}\equiv\int_{0}^{1}B_{\mbox{ex}}\left(t\right)\mbox{ }dt
\]
the Brownian excursion area. The moments of ${\cal B}_{\mbox{ex}}$
are given by 
\[
\mathbb{E}\left[{\cal B}_{\mbox{ex}}^{k}\right]=\frac{4\pi2^{-k/2}k!}{\Gamma\left[\left(3k-1\right)/2\right]}K_{k}\qquad k\ge0,
\]
where $K_{0}=-\frac{1}{2}$ and 
\[
K_{k}=\frac{3k-4}{4}K_{k-1}+\sum_{j=1}^{k-1}K_{j}K_{k-j},\qquad k\ge1.
\]
It follows that $\mathbb{E}\left[{\cal B}_{\mbox{ex}}\right]=4\sqrt{\frac{\pi}{2}}K_{1}=\frac{1}{2}\sqrt{\frac{\pi}{2}}\approx0.626657$
and the standard of deviation $\sigma=\sqrt{5/12-\pi/8}\approx0.1548144$.
In addition $\mathbb{E}\left[{\cal B}_{\mbox{ex}}^{k}\right]\sim3\sqrt{2}k\left(\frac{k}{12e}\right)^{k/2}$
as $k\rightarrow\infty$. 

Let $f_{A}\left(x\right)$ be the probability density function of
${\cal B}_{ex}$. The analytical form of $f_{A}\left(x\right)$ is
\cite[see Eq. 92]{janson2007brownian}
\begin{equation}
f_{A}\left(x\right)=\frac{2\sqrt{6}}{x^{2}}\sum_{j=1}^{\infty}v_{j}^{2/3}\mbox{ }e^{-v_{j}}\mbox{ }U\left(-\frac{5}{6},\frac{4}{3};v_{j}\right)\quad x\in\left[0,\infty\right)\label{eq:f_A}
\end{equation}
with $v_{j}=2\left|a_{j}\right|^{3}/27x^{2}$ where $a_{j}$ are the
zeros of the Airy function, $\mbox{Ai}\left(x\right)$, and $U$ is
the confluent hypergeometric function \cite{tricomi1947sulle}. See
Figs. \ref{fig:Density_of_Area} for the plot of $f_{A}\left(x\right)$.

The total area%
\footnote{From convergence to a Brownian motion, we expect the height in the
middle to be $c'\sqrt{n}$, where $c'$ can be calculated from our
previous techniques. Indeed, the expected height of the Motzkin walk
in the middle (i.e., at site $n$) is
\[
\mbox{\ensuremath{\mathbb{E}}}\left[m\right]=\frac{\sum_{m=0}^{\infty}m\mbox{ }M_{n,m}^{2}}{\sum_{m=0}^{\infty}\mbox{ }M_{n,m}^{2}}
\]
where as before we denote the height by $0\le m\le n$ and $M_{n,m}$
is the number of walks that start from zero and end at height $m$
in $n$ steps. Using similar derivation leading to Eq. \ref{eq:Mnms}
we find
\[
\mbox{\ensuremath{\mathbb{E}}}\left[m\right]\approx\frac{\int_{0}^{\infty}d\alpha\mbox{ }\alpha^{3}\exp\left(-3\alpha^{2}/2\right)}{\int_{0}^{\infty}d\alpha\mbox{ }\alpha^{2}\exp\left(-3\alpha^{2}/2\right)}\mbox{ }\sqrt{n}=2\sqrt{\frac{2}{3\pi}}\sqrt{n}\quad.
\]
} under all strictly positive Motzkin walks of length $n$ satisfies
a recursion relation $A_{n+1}=2A_{n}+3A_{n-1}$ \cite{sulanke2001bijective};
hence $A_{n}=\frac{1}{4}\left(3^{n+1}+\left(-1\right)^{n}\right)$.
By convergence of random walks to Brownian motion $A_{n}$, to the
leading order, is equal to the total area of Motzkin walks that we
are interested in (i.e., non-negative). Since asymptotically $M_{n}=\frac{3^{n}}{n^{3/2}}\sqrt{\frac{27}{4\pi}}\left(1+\mathcal{O}\left(1/n\right)\right)$,
the expected area is 
\[
\mathbb{E}\left(\tilde{A}_{p}\right)=\frac{A_{2n}}{M_{2n}}\approx\sqrt{\frac{2\pi}{3}}n^{3/2}.
\]

We can now solve for the scaling constants. We shall find $c$ such
that $\mathbb{E}\left[\tilde{A}_{p}\right]=cn^{3/2}\mathbb{E}\left[{\cal B}_{\mbox{ex}}\right]$;
therefore, $\sqrt{\frac{2\pi}{3}}=c\frac{1}{2}\sqrt{\frac{\pi}{2}}$
which gives $c=4/\sqrt{3}$. 

We take $\mbox{ }\tilde{A}_{p}=\frac{4}{\sqrt{3}}n^{3/2}x$ and $\tilde{\theta}=\frac{\sqrt{3}}{4}n^{-3/2}\theta$.
With the scaling just performed, most of the probability mass is supported
on $x=\mathcal{O}\left(1\right)$. 

Evaluation of the sum given by Eq. \ref{eq:overlap} in the limit
gives %
\footnote{$F_{A}\left(\theta\right)$ is the Fourier transform of the probability
density function which is called the characteristic function.%
} 
\begin{equation}
\lim_{n\rightarrow\infty}\langle{\cal M}_{2n}|\phi\rangle\approx F_{A}\left(\theta\right)\equiv\int_{0}^{\infty}f_{A}\left(x\right)e^{2\pi ix\theta}dx\quad.\label{eq:overlapIntegral}
\end{equation}

In Eq. \ref{eq:overlapIntegral}, taking $\theta\ll\mathcal{O}\left(1\right)$,
gives $\lim_{n\rightarrow\infty}\langle{\cal M}_{2n}|\phi\rangle\approx1$;
however, $\theta\gg\mathcal{O}\left(1\right)$ gives a highly oscillatory
integrand that nearly vanishes (Fig. \ref{fftDensity}).
To have a small constant overlap with the ground state, we now show,
that $\theta=\mathcal{O}\left(1\right)$. Suppose we choose an area
interval $\left[x_{1},x_{2}\right]$, where $f_{A}\left(x_{1}\right)=f_{A}\left(x_{2}\right)=y$.
We have
\begin{eqnarray*}
\int_{0}^{\infty}f_{A}\left(x\right)e^{2\pi ix\theta}dx & = & \int_{0}^{x_{1}}f_{A}\left(x\right)e^{2\pi ix\theta}dx+\int_{x_{2}}^{\infty}f_{A}\left(x\right)e^{2\pi ix\theta}dx\\
 & + & \int_{x_{1}}^{x_{2}}\left[f_{A}\left(x\right)-y\right]e^{2\pi ix\theta}dx+y\int_{x_{1}}^{x_{2}}e^{2\pi ix\theta}dx\quad.
\end{eqnarray*}

Now if we let $\theta\equiv\frac{1}{x_{2}-x_{1}}$, then $y\int_{x_{1}}^{x_{2}}e^{2\pi ix\theta}dx=0$,
hence 
\begin{eqnarray}
\int_{0}^{\infty}f_{A}\left(x\right)e^{2\pi ix\theta}dx & = & \int_{0}^{x_{1}}f_{A}\left(x\right)e^{2\pi ix\theta}dx+\int_{x_{2}}^{\infty}f_{A}\left(x\right)e^{2\pi ix\theta}dx+\int_{x_{1}}^{x_{2}}\left[f_{A}\left(x\right)-y\right]e^{2\pi ix\theta}dx\nonumber \\
 & \le & \int_{0}^{x_{1}}f_{A}\left(x\right)\mbox{ }dx+\int_{x_{2}}^{\infty}f_{A}\left(x\right)\mbox{ }dx+\int_{x_{1}}^{x_{2}}\left[f_{A}\left(x\right)-y\right]dx\nonumber \\
 & = & 1-y\left(x_{2}-x_{1}\right)\quad.\label{eq:Rectangle}
\end{eqnarray}

We wish to maximize the area of the rectangle $y\left(x_{2}-x_{1}\right)$,
so we take $x_{2}-x_{1}=\sigma$, which makes $y=\mathcal{O}\left(1\right)$
and $\tilde{\theta}=\frac{\sqrt{3}}{2\sqrt{5/3-\pi/2}}n^{-3/2}$. 

It is easy to see that $\sum_{j}\langle\phi|D\rangle_{j,j+1}\langle D|\phi\rangle$
is nonzero if it relates two walks that only differ by
a local move of type $0d\leftrightarrow d0$ at the $j,j+1$ position
\begin{equation}
\sum_{j}\langle\phi|D\rangle_{j,j+1}\langle D|\phi\rangle=\frac{1}{2M_{2n}}\sum_{j}\sum_{m_{p}.m_{q}}e^{2\pi i\tilde{\theta}\left(\tilde{A}_{p}-\tilde{A}_{q}\right)}\langle m_{q}|D\rangle_{j,j+1}\langle D|m_{p}\rangle\label{eq:expectationR}
\end{equation}
The change in the area is either one or zero. There are three types
of nonzero contributions per $j,j+1$ in Eq. \ref{eq:expectationR}:
\begin{eqnarray*}
\left(m_{q}\right)_{j,j+1}=\left(m_{p}\right)_{j,j+1} & : & e^{2\pi i\mbox{ }\tilde{\theta}\left(\tilde{A}_{p}-\tilde{A}_{q}\right)}\langle m_{q}|D\rangle_{j,j+1}\langle D|m_{p}\rangle=1\\
\left(m_{q}\right)_{j,j+1}=0d,\quad\left(m_{p}\right)_{j,j+1}=d0 & : & e^{2\pi i\mbox{ }\tilde{\theta}\left(\tilde{A}_{p}-\tilde{A}_{q}\right)}\langle m_{q}|D\rangle_{j,j+1}\langle D|m_{p}\rangle=-e^{i2\pi\tilde{\theta}}\\
\left(m_{q}\right)_{j,j+1}=d0,\quad\left(m_{p}\right)_{j,j+1}=0d & : & e^{2\pi i\mbox{ }\tilde{\theta}\left(\tilde{A}_{p}-\tilde{A}_{q}\right)}\langle m_{q}|D\rangle_{j,j+1}\langle D|m_{p}\rangle=-e^{-i2\pi\tilde{\theta}}.
\end{eqnarray*}
Note that there is no dependence on the actual values of $\tilde{A}_{p}$
and $\tilde{A}_{q}$ but only on their difference. Putting it together
we find that Eq. \ref{eq:expectationR} gives
\begin{eqnarray*}
\sum_{j}\langle\phi|D\rangle_{j,j+1}\langle D|\phi\rangle=\frac{1}{M_{2n}}\sum_{j}a_{j}\left[1-\cos\left(2\pi\tilde{\theta}\right)\right] \approx\frac{1}{M_{2n}}\sum_{j}2\pi^{2}\tilde{\theta}^{2}a_{j},
\end{eqnarray*}
where $a_{j}$ is the number of strings that have $0d$ or $d0$ at
the position $j,j+1$. An entirely similar calculation gives
\begin{eqnarray*}
\sum_{j}\langle\phi|U\rangle_{j,j+1}\langle U|\phi\rangle & = & \frac{1}{M_{2n}}\sum_{j}b_{j}\left[1-\cos\left(2\pi\tilde{\theta}\right)\right] \approx\frac{1}{M_{2n}}\sum_{j}2\pi^{2}\tilde{\theta}^{2}b_{j},\\
\sum_{j}\langle\phi|\varphi\rangle_{j,j+1}\langle\varphi|\phi\rangle & = & \frac{1}{M_{2n}}\sum_{j}c_{j}\left[1-\cos\left(2\pi\tilde{\theta}\right)\right] \approx\frac{1}{M_{2n}}\sum_{j}2\pi^{2}\tilde{\theta}^{2}c_{j},
\end{eqnarray*}
where $b_{j}$ and $c_{j}$ are the number of strings that have $0u,u0$
and $00,ud $ at positions $j,j+1$ respectively.

Summing up the foregoing equations and using $\tilde{\theta}=\frac{\sqrt{3}}{2\sqrt{5/3-\pi/2}}n^{-3/2}$
we obtain
\begin{equation}
\langle\phi|H|\phi\rangle=\frac{9\pi^{2}}{10-3\pi}\frac{n^{-3}}{M_{2n}}\sum_{j}\left(a_{j}+b_{j}+c_{j}\right).\label{eq:abc}
\end{equation}
 
We need to show that $\frac{a_{j}+b_{j}+c_{j}}{M_{2n}}=\mathcal{O}\left(1\right)$
to get $\mathcal{O}\left(n^{-2}\right)$ upper bound. It is clear
that $c_{j}=M_{2\left(n-1\right)}$ and that $a_{j}\approx b_{j}\approx c_{j}$.
Lastly $1/3\le M_{2\left(n-1\right)}/M_{2n}\le1$; therefore 
\[
\langle\phi|H|\phi\rangle=\mathcal{O}\left(n^{-2}\right).
\]

If we take a general integer $s\ge1$ then using similar reasoning
\begin{equation}
\langle\phi|H|\phi\rangle\sim\frac{2\pi^{2}sn^{-3}}{M_{2n,s}}\sum_{j}\left(a'_{j}+b'_{j}+c'_{j}\right)=\mathcal{O}\left(n^{-2}\right).\label{eq:upperbound_O(n_2)}
\end{equation}
\subsection{Lower bound on the gap of $\mbox{poly}\left(1/n\right)$ in the balanced subspace}
In additional to the 'balanced subspace' above, we define the 'unbalanced
subspace' and summarize the proof idea below before presenting the
formal proof. 

\begin{defn}
(unbalanced subspace) The space orthogonal to the span of $|{\cal M}_{2n,s}\rangle$.
In the unbalanced subspace, the crossings and/or an overall imbalance
can occur.
\end{defn}

The summary of the proof is as follows:

\begin{itemize}
\item Restrict the Hamiltonian to the balanced subspace, where there are
a balanced number of correctly ordered down and up steps
of each color.
\item Identify the terms in the Hamiltonian that implement $0 d^{k}\leftrightarrow  d^{k}0$
and $0 u^{k}\leftrightarrow u^{k}0$ with $H_{move}$. Identify
the interaction terms that implement $00\leftrightarrow u^{k} d^{k}$
with $H_{int}$.
\item The Hamiltonian in the balanced subspace is expressed as $H=H_{move}+H_{int}$.
\item $\Delta\left(H_{move}\right)$ is known, let $H_{\epsilon}\equiv H_{move}+\epsilon H_{int}$
for $0<\epsilon\le1$ and show that $\Delta\left(H_{\epsilon}\right)\le\Delta\left(H\right)$.
\item Use the projection lemma to relate $\Delta\left(H_{\epsilon}\right)$
to $\Delta\left(H_{move}\right)$ and the gap of the restriction of
$H_{int}$ to the ground subspace of $H_{move}$, denoted by $H_{eff}$. 
\item Lower bound the gap of $H_{eff}$ by proving a large spectral gap
of a corresponding Markov chain. We do so by proving rapidly mixing
using the canonical path technique and ideas from fractional matching
in combinatorial optimization.
\item Lastly, lower bound the ground states in the unbalanced subspace.
\end{itemize}

As discussed above in the balanced subspace the Hamiltonian is simply 
\[
H=\sum_{j=1}^{2n-1} \sum_{k=1}^{s}\left[| D^{k}\rangle_{j,j+1}\langle  D^{k}|+|U^k\rangle_{j,j+1}\langle U^k|+|\varphi^{k}\rangle_{j,j+1}\langle\varphi^{k}|\right] ,
\]
where any state automatically vanishes on the boundary terms and the
steps are correctly ordered (i.e., non-crossing).

Let $D_{m}^{s}$ be the set of Dyck paths of length $2m\le2n$ with
$s$ colorings and $D^{s}$ be the union of all $D_{m}^{s}$. Let
$\mathcal{M}_{2n,s}$ be the set of Motzkin paths of length $2n$,
recall that the number of these walks is counted by the colored-Motzkin
number $M_{2n,s}=\left|\mathcal{M}_{2n,s}\right|$. Define a Dyck
space $H_{D}^{s}$ whose basis vectors are Dyck paths $\mathfrak{s}\in D^{s}$.
Given a Motzkin path $u$ with $2m$ steps and any coloring,
let $\mbox{Dyck}\left(u\right)\in D_{m}^{s}$ be the Dyck path obtained
from $u$ by removing zeros. We shall use an embedding $V:\mathcal{H}_{D}^{s}\rightarrow\mathcal{H}_{M}^{s}$
defined by
\begin{equation}
V|\mathfrak{s}\rangle=\frac{1}{\sqrt{\left(\begin{array}{c}
2n\\
2m
\end{array}\right)}}\sum_{\mbox{Dyck}\left(u\right)=\mathfrak{s}}|u\rangle,\quad\begin{array}{c}
\mathfrak{s}\in D^{s}\cap D_{m}^{s}\\
u\in\mathcal{M}_{2n,s}
\end{array}\label{eq:V}
\end{equation}
where $\left(\begin{array}{c}
2n\\
2m
\end{array}\right)$ is the number of ways a given Dyck walk of length $2m$ can be embedded
into Motzkin walks of length $2n$ each having $2\left(n-m\right)$
zeros. It is easily checked that $V^{\dagger}V=I$; i.e., $V$ is
an isometry: 
\[
\langle\mathfrak{t}|V^{\dagger}V|\mathfrak{s}\rangle=\frac{1}{\left(\begin{array}{c}
2n\\
2m
\end{array}\right)}\sum_{\mbox{Dyck}\left(u\right)=\mathfrak{s}}\sum_{\mbox{Dyck}\left(v\right)=\mathfrak{t}}\langle v|u\rangle=\langle\mathfrak{t}|\mathfrak{s}\rangle.
\]
\subsubsection{Perturbation Theory}
Similar to our previous work \cite{Movassagh2012_brackets}, we write
the Hamiltonian restricted to the balanced subspace as $H\equiv H_{move}+H_{int}$,
where
\begin{eqnarray}
H_{move} & \equiv & \sum_{j=1}^{2n-1}\sum_{k=1}^{s}\left[| D^{k}\rangle_{j,j+1}\langle  D^{k|}+|U^k\rangle_{j,j+1}\langle U^{k}|\right]\label{eq:Hmove}\\
H_{int} & \equiv & \sum_{j=1}^{2n-1}\sum_{k=1}^{s}|\varphi^{k}\rangle_{j,j+1}\langle\varphi^{k}|\quad,\label{eq:Hint}
\end{eqnarray}
this notation makes explicit that the steps move through zeros,
or 'vacuum', freely as the local moves indicate (Eq. \ref{eq:moves}).
Yet when a left and a step down of a given color reach one
another they can annihilate to produce a $00$ state; alternatively
a pair of $00$ can spontaneously create a balanced set of steps
in correspondence to the local moves (Eq. \ref{eq:moves}). 

\begin{defn}
Let $\lambda_{1}\left(H\right)$ denote the ground state energy of
$H$ and let $\lambda_{2}\left(H\right)$ denote the second smallest
eigenvalue. We denote the gap of the Hamiltonian $H$ by $\Delta\left(H\right)\equiv\lambda_{2}\left(H\right)-\lambda_{1}\left(H\right)$ 
\end{defn}

Let us consider the interaction term as a perturbation to $H_{move}$
and define a modified Hamiltonian $H_{\epsilon}=H_{move}+\epsilon H_{int}$
for $0<\epsilon\le1$. $H_{\epsilon}$ involves the same projectors,
which means that $|{\cal M}_{2n,s}\rangle$ is a unique ground state
of $H_{\epsilon}$ as well. It is clear that $H\ge H_{\epsilon}$;
therefore $\Delta\left(H\right)\ge\Delta\left(H_{\epsilon}\right)$.

$H_{move}$ annihilates states that are symmetric under the local
moves $0 u^{k}\leftrightarrow u^{k}0$ and $0 d^{k}\leftrightarrow  d^{k}0$.
Therefore, $H_{move}$ coincides with the spin-$1/2$ quantum Heisenberg
chain, whose gap was rigorously calculated to be $ $ $\Delta\left(H_{move}\right)=1-\cos\left(\frac{\pi}{n}\right)=\Omega\left(n^{-2}\right)$
\cite{Koma95}. Moreover, $H_{move}|{\cal M}_{2n,s}\rangle=0$, so
$|{\cal M}_{2n,s}\rangle$ is \textit{a }ground state of $H_{move}$
that has a degenerate ground space. 

We use perturbation theory to compute the $\Delta\left(H_{\epsilon}\right)$
for which we need (see the Lemma below \cite{KempeKitaevRegev}) the
{\it restriction} of $H_{int}$ onto the subspace of the Motzkin
space. The restriction, denoted by%
\footnote{In the projection lemma one defines the restriction by $\tilde{H}_{eff}=\Pi_{move}H_{int}\Pi_{move}$,
where $\Pi_{move}=V\mbox{ }V^{\dagger}$ is the projection onto the
ground subspace of $H_{move}$. Note that $\tilde{H}_{eff}=VH_{eff}V^{\dagger}$
and $H_{eff}=V^{\dagger}\tilde{H}_{eff}V$. Their action is equivalent
with the distinction that the former acts on Motzkin walks and the
latter on Dyck walks. } 
\begin{equation}
H_{eff}\equiv V^{\dagger}H_{int}V,\label{eq:Heff}
\end{equation}
is the process by which we first embed a Dyck walk with a particular
coloring assignment into a Motzkin space (i.e., adding zeros), then
either cut a peak of a given color (i.e., any $ u^{k} d^{k}$) or
add a peak of a given color where there are two consecutive zeros.
Therefore, $H_{eff}$ acts on the Dyck space $H_{D}^{s}$.

\begin{lem}
The unique ground state of $H_{eff}$ is $|D^{s}\rangle$ that satisfies
$|\mathcal{M}_{2n,s}\rangle=V|D^{s}\rangle$ and is given by
\begin{equation}
|D^{s}\rangle=\frac{1}{\sqrt{M_{2n,s}}}\sum_{m=0}^{n}\sqrt{\left(\begin{array}{c}
2n\\
2m
\end{array}\right)}\sum_{\mathfrak{s}\in D_{m}^{s}}|\mathfrak{s}\rangle.\label{eq:groundStateHeff}
\end{equation}
\end{lem}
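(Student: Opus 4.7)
The plan is to verify the identity $V|D^{s}\rangle = |\mathcal{M}_{2n,s}\rangle$ by a direct combinatorial unwinding of the definitions, and then exploit the isometry property $V^{\dagger}V=I$ together with the already-established uniqueness of the Motzkin ground state of the full Hamiltonian.

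First I would plug the explicit formula (\ref{eq:groundStateHeff}) into the definition (\ref{eq:V}) of $V$. The prefactor $\sqrt{\binom{2n}{2m}}$ exactly cancels the normalization in $V|\mathfrak{s}\rangle$, so
\begin{equation*}
V|D^{s}\rangle = \frac{1}{\sqrt{M_{2n,s}}}\sum_{m=0}^{n}\sum_{\mathfrak{s}\in D_{m}^{s}}\sum_{\mbox{Dyck}(u)=\mathfrak{s}}|u\rangle .
\end{equation*}
The triple sum enumerates each $s$-colored Motzkin walk on $2n$ steps exactly once (grouped first by the number of parentheses $2m$, then by its Dyck skeleton, then by the placement of zeros), so the right-hand side equals $|\mathcal{M}_{2n,s}\rangle$. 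Normalization of $|D^{s}\rangle$ is then automatic from $\langle D^{s}|D^{s}\rangle = \langle \mathcal{M}_{2n,s}|VV^{\dagger}V|D^{s}\rangle$, but more directly from $V^{\dagger}V=I$ applied to $|D^{s}\rangle = V^{\dagger}|\mathcal{M}_{2n,s}\rangle$ once the identity above is in hand.

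Next I would show $H_{eff}|D^{s}\rangle = 0$. Since $|\mathcal{M}_{2n,s}\rangle$ is the frustration-free ground state of $H$, it is annihilated by every local projector; in particular $H_{int}|\mathcal{M}_{2n,s}\rangle = 0$. Therefore $H_{eff}|D^{s}\rangle = V^{\dagger}H_{int}V|D^{s}\rangle = V^{\dagger}H_{int}|\mathcal{M}_{2n,s}\rangle = 0$. Because $H_{eff}$ is a conjugation by an isometry of the positive semidefinite operator $H_{int}$, it is itself positive semidefinite, so $|D^{s}\rangle$ is a ground state.

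For uniqueness, suppose $|\psi\rangle$ is any zero-energy eigenstate of $H_{eff}$. Then $0 = \langle\psi|V^{\dagger}H_{int}V|\psi\rangle = \sum_{j,k}\|\langle\varphi^{k}|_{j,j+1}V|\psi\rangle\|^{2}$, so $H_{int}\,V|\psi\rangle = 0$. Moreover, any vector in the image of $V$ is, by construction, a uniform superposition over Motzkin walks sharing a common Dyck skeleton and is therefore invariant under the local moves $0\ell^{k}\leftrightarrow\ell^{k}0$ and $0r^{k}\leftrightarrow r^{k}0$; by Lemma \ref{EqualSuperposition} applied to just the movement projectors, this means $H_{move}V|\psi\rangle = 0$ as well. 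Thus $V|\psi\rangle$ lies in the kernel of $H=H_{move}+H_{int}$ restricted to the balanced subspace, and by the already-proven uniqueness of ${\cal M}_{2n,s}\rangle$ as the ground state of $H$ there, $V|\psi\rangle \propto |\mathcal{M}_{2n,s}\rangle = V|D^{s}\rangle$. Applying $V^{\dagger}$ and using $V^{\dagger}V=I$ gives $|\psi\rangle \propto |D^{s}\rangle$.

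The only real subtlety I anticipate is verifying that the image of $V$ is contained in the kernel of $H_{move}$; this is the place where one must be careful to use only the movement projectors (and not the peak-creation projector $\varphi^{k}\rangle\langle\varphi^{k}$) in the appeal to Lemma \ref{EqualSuperposition}. Everything else is either combinatorial bookkeeping or a standard positive-semidefinite argument.
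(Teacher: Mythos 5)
Your proposal is correct and follows essentially the same route as the paper: show $H_{eff}|D^s\rangle = 0$ via $H_{int}|\mathcal{M}_{2n,s}\rangle = 0$, then for uniqueness argue that any zero mode $|\psi\rangle$ of $H_{eff}$ has $V|\psi\rangle$ annihilated by both $H_{int}$ (positive semidefiniteness) and $H_{move}$ (image of $V$ is uniform over each zero-insertion class), hence $V|\psi\rangle\propto|\mathcal{M}_{2n,s}\rangle$ and $|\psi\rangle\propto|D^s\rangle$ by $V^\dagger V=I$. You simply make explicit two small steps the paper leaves tacit — the direct combinatorial check that $V|D^s\rangle=|\mathcal{M}_{2n,s}\rangle$ and the positive-semidefinite argument yielding $H_{int}V|\psi\rangle=0$ — which is a welcome clarification rather than a different method.
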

\begin{proof}
We proved that $|{\cal M}_{2n,s}\rangle$ is the unique ground state
of $H=H_{move}+H_{int}$. Since $H|{\cal M}_{2n,s}\rangle=HV|D^{s}\rangle=0$,
$H_{eff}|D^{s}\rangle=0$. We now prove that $|D^{s}\rangle$ is the
unique ground state. For if it were not, then $H_{eff}|D'\rangle=0$
for a state $|D'\rangle$ other than $|D^{s}\rangle$. But by construction
$H_{move}V|D'\rangle=0$, therefore $\left(H_{move}+H_{int}\right)V|D'\rangle=0$.
Since $|{\cal M}_{2n,s}\rangle$ is the unique ground state of $H$,
then we reach a contradiction unless $|D'\rangle=|D^{s}\rangle$. 
\end{proof}

$H_{eff}$ defines a random Markov process. However, before describing
the Markov process, we shall use the Projection Lemma \cite{KempeKitaevRegev},
that in our notation reads
\begin{lem*}
(Projection Lemma \cite{KempeKitaevRegev}) $H_{eff}$ acts on the
ground subspace of $H_{move}$. If the spectral gap of $H_{move}$
and $H_{eff}$ are both $\mbox{poly}\left(1/n\right)$ then the spectral
gap of $H_{\epsilon}$ is also $\mbox{poly}\left(1/n\right)$ for
small enough $\epsilon>0$. Mathematically, 
\begin{equation}
\epsilon\lambda_{1}\left(H_{eff}\right)-\frac{\mathcal{O}\left(\epsilon^{2}\right)\left\Vert H_{int}\right\Vert ^{2}}{\Delta\left(H_{move}\right)-2\epsilon\left\Vert H_{int}\right\Vert }\le\lambda_{1}\left(H_{\epsilon}\right)\le\epsilon\lambda_{1}\left(H_{eff}\right).\label{eq:ProjectionLemma_GS-1}
\end{equation}
where we take $\Delta\left(H_{move}\right)=\Omega\left(n^{-2}\right)>2\epsilon\left\Vert H_{int}\right\Vert $.
\end{lem*}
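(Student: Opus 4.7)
\emph{Proof proposal.} The natural setup is the orthogonal decomposition $\mathcal{H}=\mathcal{L}_{-}\oplus\mathcal{L}_{+}$, where $\mathcal{L}_{-}=\ker(H_{move})=\mathrm{range}(V)$ carries $H_{eff}$ through the isometry $V$ (by (\ref{eq:Heff})) and $\mathcal{L}_{+}$ is its complement, on which $H_{move}\ge\Delta(H_{move})$. All estimates below will target the displayed bound (\ref{eq:ProjectionLemma_GS-1}); the qualitative gap statement then falls out by applying the same reasoning with $H_{\epsilon}$ replaced by its restriction to the orthogonal complement of its ground state. The upper bound $\lambda_{1}(H_{\epsilon})\le\epsilon\lambda_{1}(H_{eff})$ is a one-line variational argument: plug in $V|g\rangle$, where $|g\rangle$ is the ground state of $H_{eff}$; since $V|g\rangle\in\mathcal{L}_{-}$ the $H_{move}$ piece vanishes, and $\langle Vg|H_{int}|Vg\rangle=\langle g|V^{\dagger}H_{int}V|g\rangle=\lambda_{1}(H_{eff})$.

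For the lower bound, I would take any normalized $|\psi\rangle=|\psi_{-}\rangle+|\psi_{+}\rangle$ and assemble three ingredients into
\[
\langle\psi|H_{\epsilon}|\psi\rangle\ge\Delta(H_{move})\,\|\psi_{+}\|^{2}+\epsilon\lambda_{1}(H_{eff})\,\|\psi_{-}\|^{2}-2\epsilon\|H_{int}\|\,\|\psi_{-}\|\,\|\psi_{+}\|-\epsilon\|H_{int}\|\,\|\psi_{+}\|^{2}.
\]
The three ingredients are: (i) the gap bound $H_{move}\ge\Delta(H_{move})\,\Pi_{+}$ on $\mathcal{L}_{+}$; (ii) since $|\psi_{-}\rangle=V|\phi\rangle$ for some $|\phi\rangle$, the definition of $H_{eff}$ gives $\langle\psi_{-}|H_{int}|\psi_{-}\rangle=\langle\phi|H_{eff}|\phi\rangle\ge\lambda_{1}(H_{eff})\|\psi_{-}\|^{2}$; (iii) Cauchy--Schwarz controls the cross and $\mathcal{L}_{+}$-diagonal contributions by $\|H_{int}\|$. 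Substituting $x=\|\psi_{+}\|\in[0,1]$ and $\|\psi_{-}\|^{2}=1-x^{2}$ turns the right-hand side into a scalar quadratic in $x$ whose leading coefficient $\Delta(H_{move})-\epsilon\|H_{int}\|-\epsilon\lambda_{1}(H_{eff})$ is strictly positive under the hypothesis. Minimizing and using $\lambda_{1}(H_{eff})\le\|H_{int}\|$ to simplify the denominator reproduces exactly the claimed bound $\epsilon\lambda_{1}(H_{eff})-\epsilon^{2}\|H_{int}\|^{2}/(\Delta(H_{move})-2\epsilon\|H_{int}\|)$.

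The main obstacle is organisational rather than analytical. Because $\|H_{int}\|=\mathcal{O}(n)$ while $\Delta(H_{move})=\Omega(n^{-2})$, the requirement $\Delta(H_{move})>2\epsilon\|H_{int}\|$ forces $\epsilon=\mathcal{O}(n^{-3})$; one must then check that after plugging this into (\ref{eq:ProjectionLemma_GS-1}) the second-order correction still remains genuinely subleading relative to the intended scale of $\epsilon\lambda_{1}(H_{eff})$---i.e.\ that the $\mathrm{poly}(1/n)$ assumption on $\Delta(H_{eff})$ is strong enough. To upgrade the ground-energy bound to a \emph{gap} statement for $H_{\epsilon}$, I would rerun the whole variational argument on the orthogonal complement of the ground state of $H_{\epsilon}$, which yields $\lambda_{2}(H_{\epsilon})\ge\epsilon\lambda_{2}(H_{eff})-\mathcal{O}(\epsilon^{2}\|H_{int}\|^{2}/\Delta(H_{move}))$; subtracting then shows $\Delta(H_{\epsilon})=\mathrm{poly}(1/n)$ whenever both $\Delta(H_{move})$ and $\Delta(H_{eff})$ are.
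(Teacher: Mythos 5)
The paper does not actually prove this lemma: it is imported from Kempe, Kitaev and Regev with a citation, so there is no in-paper proof to compare against. Your argument is the standard KKR proof and is essentially correct. Two small points to tighten. First, after substituting $x=\|\psi_{+}\|$ the right-hand side is not yet a quadratic in $x$, because the cross term involves $\|\psi_{-}\|\,\|\psi_{+}\|=x\sqrt{1-x^{2}}$; you first need the crude bound $x\sqrt{1-x^{2}}\le x$ on $[0,1]$ (or, equivalently, diagonalize the $2\times 2$ quadratic form in $(\|\psi_{-}\|,\|\psi_{+}\|)$ over the unit circle) before you can minimize a genuine quadratic, after which $\lambda_{1}(H_{eff})\le\|H_{int}\|$ gives the stated denominator. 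Second, your final step of ``rerunning on the orthogonal complement of the ground state of $H_{\epsilon}$'' does not work for the KKR lemma in general, because that complement need not respect the splitting $\mathcal{L}_{-}\oplus\mathcal{L}_{+}$. It works here for a structural reason worth saying out loud and which the paper invokes in the sentence after the lemma: $|\mathcal{M}_{2n,s}\rangle$ is simultaneously the (unique, zero-energy) ground state of $H_{move}$, $H_{int}$, $H_{\epsilon}$, and (via $V$) of $H_{eff}$; removing $\mathrm{Span}(|\mathcal{M}_{2n,s}\rangle)$ therefore preserves the $\mathcal{L}_{\pm}$ decomposition and the value of $\Delta(H_{move})$, and the ground energies of the restricted operators are exactly the gaps of the unrestricted ones. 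With those two clarifications your proof is complete.
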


Since $H_{move}|\psi\rangle=0$ so long as $|\psi\rangle$ is symmetric
under $0 u^{k}\leftrightarrow u^{k}0$ and $0 d^{k}\leftrightarrow  d^{k}0$,
the ground subspace of $H_{move}$ is spanned by $V|\mathfrak{s}\rangle$
for $|\mathfrak{s}\rangle\in D^{s}$ and is degenerate. Therefore
we can subtract the $\mbox{Span}\left(|{\cal M}_{2n,s}\rangle\langle{\cal M}_{2n,s}|\right)$
from the Hilbert space and consider the Projection Lemma on the orthogonal
complement, whereby the subspace with the smallest eigenvalue becomes
the gap. From the first inequality we have 
\begin{equation}
\Delta\left(H_{\epsilon}\right)\ge\epsilon\Delta\left(H_{eff}\right)-\frac{\mathcal{O}\left(\epsilon^{2}\right)\left\Vert H_{int}\right\Vert ^{2}}{\Delta\left(H_{move}\right)-2\epsilon\left\Vert H_{int}\right\Vert }.\label{eq:ProjectionLemma_gap}
\end{equation}
If we choose an $\epsilon\ll n^{-3}$ then $\Delta\left(H_{move}\right)$
can be considered large with respect to $\epsilon\left\Vert H_{int}\right\Vert $,
which gives
\[
\Delta\left(H\right)\ge\Delta\left(H_{\epsilon}\right)\ge\epsilon\Delta\left(H_{eff}\right)-\mathcal{O}\left(\epsilon^{2}n^{4}\right).
\]

Hence it suffices to prove $\Delta\left(H_{eff}\right)\ge n^{-\mathcal{O}\left(1\right)}.$
\subsubsection{Random walk description}
Let $\pi$ be the induced probability distribution on $D^{s}$
with entries $\pi\left(\mathfrak{s}\right)=\langle\mathfrak{s}|D^{s}\rangle^{2}$. Define the matrix $P$ by
\begin{eqnarray}
P & = & \mathbb{I}-\frac{1}{s\left(2n-1\right)}\mbox{diag}\left(\frac{1}{\sqrt{\pi^{T}}}\right)\mbox{ }H_{eff}\mbox{ }\mbox{diag}\left(\sqrt{\pi}\right)\quad\nonumber \\
P\left(\mathfrak{s,t}\right) & = & \delta_{\mathfrak{s,t}}-\frac{1}{s\left(2n-1\right)}\langle\mathfrak{s}|H_{eff}|\mathfrak{t}\rangle\sqrt{\frac{\pi\left(\mathfrak{t}\right)}{\pi\left(\mathfrak{s}\right)}},\label{eq:P}
\end{eqnarray}
where the second equation explicitly shows the entries.

We claim that $P$ describes a random walk on the set of Dyck paths
$D^{s}$ such that given a pair of Dyck paths $\mathfrak{s,t}\in D^{s}$,
$P\left(\mathfrak{s,t}\right)$ is a transition probability from $\mathfrak{s}$
to $\mathfrak{t}$ and ${\pi}$ is the unique steady state.
One has 
\begin{enumerate}
\item $P$ is stochastic. We use completeness to prove that the sum of any
row is one, i.e., $\sum_{\mathfrak{t}}P\left(\mathfrak{s,t}\right)=1$
\begin{eqnarray*}
\sum_{\mathfrak{t}}\left\{ \delta_{\mathfrak{s,t}}-\frac{\langle\mathfrak{s}|D^{s}\rangle^{-1}}{s\left(2n-1\right)}\langle\mathfrak{s}|V^{\dagger}H_{int}V|\mathfrak{t}\rangle\langle\mathfrak{t}|D^{s}\rangle\right\}  =  \\
1-\frac{\langle\mathfrak{s}|D^{s}\rangle^{-1}}{s\left(2n-1\right)}\langle\mathfrak{s}|V^{\dagger}H_{int}V\sum_{\mathfrak{t}}|\mathfrak{t}\rangle\langle\mathfrak{t}|D^{s}\rangle = \\
1-\frac{\langle\mathfrak{s}|D^{s}\rangle^{-1}}{s\left(2n-1\right)}\langle\mathfrak{s}|V^{\dagger}H_{int}V|D^{s}\rangle = \\
 1-\frac{\langle\mathfrak{s}|D^{s}\rangle^{-1}}{s\left(2n-1\right)}\langle\mathfrak{s}|V^{\dagger}H_{int}|{\cal M}_{2n,s}\rangle =  1 
\end{eqnarray*}
since the Hamiltonian is FF and the colored-Motzkin state is a zero
eigenvector of $H_{int}$. 
\item $P$ has a unique steady state $\pi\left(\mathfrak{s}\right)$ because
$\Sigma_{\mathfrak{s}}\pi\left(\mathfrak{s}\right)P\left(\mathfrak{s,t}\right)=\sum_{\mathfrak{s}}\left\{ \pi\left(\mathfrak{s}\right)\delta_{\mathfrak{s,t}}-\frac{1}{s\left(2n-1\right)}\langle\mathfrak{s}|H_{eff}|\mathfrak{t}\rangle\sqrt{\pi\left(\mathfrak{s}\right)\pi\left(\mathfrak{t}\right)}\right\} =\pi\left(\mathfrak{t}\right)$. 
\item $P$ is reversible, that is $\pi\left(\mathfrak{s}\right)P\left(\mathfrak{s,t}\right)=\pi\left(\mathfrak{t}\right)P\left(\mathfrak{t,s}\right)$
for all $\mathfrak{s,t}$, as can easily be checked (note that $\langle\mathfrak{s}|H_{eff}|\mathfrak{t}\rangle=\langle\mathfrak{t}|H_{eff}|\mathfrak{s}\rangle$). 
\item $P\left(\mathfrak{s,t}\right)=0$ unless $\mathfrak{s}$ and $\mathfrak{t}$
are related by adding or removing a single peak of any color (see
the proof of Lemma \ref{lem:L2}). 
\item Since they are related by a similarity transformation, $\Delta\left(H_{eff}\right)=s\left(2n-1\right)\left(1-\lambda_{2}\left(P\right)\right).$ 
\end{enumerate}
\begin{lem}

\label{lem:L2}$P\left(\mathfrak{s,s}\right)\ge1/2$. Let $\mathfrak{s,t}\in D^{s}$
be any Dyck paths such that $\mathfrak{t}$ can be obtained from $\mathfrak{s}$
by adding or removing a single $ u^{k} d^{k}$ pair of any color
$k$. Then $P\left(\mathfrak{s,t}\right)=\Omega\left(1/n^{3}\right)$.
Otherwise $P\left(\mathfrak{s,t}\right)=0$.
\end{lem}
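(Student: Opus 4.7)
The plan is to read off the matrix elements of $P$ directly from the definitions: $P(\mathfrak{s,t}) = \delta_{\mathfrak{s,t}} - \frac{1}{s(2n-1)}\langle\mathfrak{s}|H_{\mathrm{eff}}|\mathfrak{t}\rangle\sqrt{\pi(\mathfrak{t})/\pi(\mathfrak{s})}$, with $H_{\mathrm{eff}} = V^\dagger H_{\mathrm{int}}V$ and $\pi(\mathfrak{s}) = \binom{2n}{2m}/M_{2n,s}$ for $\mathfrak{s}\in D_m^s$. The first observation is that $|\varphi^k\rangle\langle\varphi^k|$ has matrix elements only within $\mathrm{span}\{|00\rangle,|\ell^k r^k\rangle\}$, so for Motzkin walks $u,u'$ the element $\langle u|H_{\mathrm{int}}|u'\rangle$ is nonzero only when either $u=u'$ (the diagonal, contributing $\tfrac{s}{2}$ per $00$ substring and $\tfrac{1}{2}$ per unbroken $\ell^k r^k$ substring after summing over $k$), or $u,u'$ differ at exactly one pair of positions $(j,j+1)$ where one has $00$ and the other $\ell^k r^k$ (contribution $-\tfrac{1}{2}$). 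In the latter case $\mathrm{Dyck}(u)$ and $\mathrm{Dyck}(u')$ differ by precisely one $\ell^k r^k$ peak. This immediately yields the last assertion of the lemma: $P(\mathfrak{s,t}) = 0$ unless $\mathfrak{s}=\mathfrak{t}$ or the two paths are related by adding or removing a single colored peak.

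For the diagonal bound $P(\mathfrak{s,s})\geq 1/2$, I would prove the uniform estimate $\langle u|H_{\mathrm{int}}|u\rangle \leq s(2n-1)/2$ for every Motzkin walk $u$ with $\mathrm{Dyck}(u)\in D_m^s$. Counting substrings, the number of $(j,j+1)$ with $u_{j,j+1}=00$ is at most $2(n-m)-1$ (since there are $2(n-m)$ zeros), and the number of unbroken $\ell^k r^k$ substrings is at most $m$, whence
\[
\langle u|H_{\mathrm{int}}|u\rangle \;\leq\; \tfrac{s}{2}\bigl(2(n-m)-1\bigr)+\tfrac{m}{2} \;\leq\; \tfrac{s(2n-1)}{2},
\]
saturated at $m=0$ for $s\geq 1$. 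The cross terms $\langle u|H_{\mathrm{int}}|u'\rangle$ with $u\neq u'$ and the same Dyck projection vanish by the analysis above, so averaging over the $\binom{2n}{2m}$ walks $u$ in the support of $V|\mathfrak{s}\rangle$ preserves the bound, giving $\langle\mathfrak{s}|H_{\mathrm{eff}}|\mathfrak{s}\rangle\leq s(2n-1)/2$ and thus $P(\mathfrak{s,s})\geq 1/2$.

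For the off-diagonal lower bound, take $\mathfrak{s}\in D_{m-1}^s$ and $\mathfrak{t}\in D_m^s$ related by inserting an $\ell^k r^k$ peak at position $p$ within $\mathfrak{t}$. I would count the number $N$ of walks $u'$ with $\mathrm{Dyck}(u')=\mathfrak{t}$ in which this distinguished peak appears as an unbroken consecutive substring of $u'$; each such $u'$ is paired uniquely with a $u$ (replace $\ell^k r^k$ by $00$) satisfying $\mathrm{Dyck}(u)=\mathfrak{s}$. Placing the peak at positions $(q,q+1)$ and distributing the first $p-1$ and last $2m-p-1$ letters of $\mathfrak{t}$ in order among the $q-1$ and $2n-q-1$ remaining slots yields
\[
N \;=\; \sum_{q}\binom{q-1}{p-1}\binom{2n-q-1}{2m-p-1} \;=\; \binom{2n-2}{2m-2}
\]
by Vandermonde's convolution, notably independent of $p$. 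Combining $\langle\mathfrak{s}|H_{\mathrm{eff}}|\mathfrak{t}\rangle = -N/\bigl(2\sqrt{\binom{2n}{2m-2}\binom{2n}{2m}}\bigr)$ with $\sqrt{\pi(\mathfrak{t})/\pi(\mathfrak{s})} = \sqrt{\binom{2n}{2m}/\binom{2n}{2m-2}}$ collapses to
\[
P(\mathfrak{s,t}) \;=\; \frac{(2n-2m+2)(2n-2m+1)}{4ns(2n-1)^2},
\]
which is $\Omega(1/n^3)$, minimized at $m=n$; the reverse transition $P(\mathfrak{t,s})$ is handled by detailed balance and is minimized at $m=1$, again yielding $\Omega(1/n^3)$.

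The Vandermonde step is where I expect the main subtlety. A priori $N$ could have depended on the peak's location $p$ inside $\mathfrak{t}$, in which case different $(\mathfrak{s},\mathfrak{t})$ pairs of the same lengths could give different transition rates, and one would have to take a worst case over insertion positions. The cancellation that makes $N$ depend only on $n$ and $m$ is precisely what produces a \emph{uniform} $\Omega(1/n^3)$ bound; verifying this identity carefully, and checking that all binomials in the sum are well-defined over the correct index range, is the main technical piece.
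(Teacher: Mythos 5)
Your overall strategy matches the paper's: you use the $00\leftrightarrow\ell^{k}r^{k}$ structure of $|\varphi^{k}\rangle$ to get the vanishing claim, bound the diagonal by a per-position estimate on $\langle u|H_{int}|u\rangle$, and lower bound the off-diagonal entry by counting pairs $(u,v)$ of embeddings related by a single $00\leftrightarrow\ell^{k}r^{k}$ move. Where you differ is that you attempt an \emph{exact} count of those pairs via a convolution identity, whereas the paper settles for a crude $\ge\tfrac{1}{4n^{2}}$ lower bound on the fraction of embeddings of $\mathfrak{s}$ that leave two adjacent zeros available where the peak is to be inserted. Your instinct to be suspicious of the Vandermonde step was correct: the identity you need is the lower-index convolution $\sum_{a}\binom{a}{j}\binom{M-a}{\ell}=\binom{M+1}{j+\ell+1}$, which with $M=2n-2$, $j=p-1$, $\ell=2m-p-1$ gives
\[
N=\sum_{q}\binom{q-1}{p-1}\binom{2n-q-1}{2m-p-1}=\binom{2n-1}{2m-1},
\]
not $\binom{2n-2}{2m-2}$. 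A sanity check exposes the error immediately: at $m=1$ the path $\mathfrak{t}$ is a single peak and can be placed consecutively at any of $2n-1$ positions, matching $\binom{2n-1}{1}$ but not $\binom{2n-2}{0}=1$. The fact that $N$ does not depend on $p$, which was the point you wanted, does hold. With the corrected $N$, your algebra gives
\[
P(\mathfrak{s},\mathfrak{t})\ \ge\ \frac{\binom{2n-1}{2m-1}}{2s(2n-1)\binom{2n}{2m-2}}
=\frac{(2n-2m+2)(2n-2m+1)}{4ns(2n-1)(2m-1)},
\]
which is still $\Omega(1/n^{3})$ (the worst case is $m=n$), and detailed balance then gives $P(\mathfrak{t},\mathfrak{s})=\tfrac{m}{2ns(2n-1)}=\Omega(1/n^{2})$; so the lemma's conclusion survives the correction, only the intermediate formula was wrong.

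Two smaller points. First, $\langle\mathfrak{s}|H_{eff}|\mathfrak{t}\rangle=-N/\bigl(2\sqrt{\binom{2n}{2m-2}\binom{2n}{2m}}\bigr)$ should be an inequality $\le$, not an equality: your $N$ counts only pairs coming from the distinguished peak at position $p$, while $\mathfrak{t}$ may be obtainable from $\mathfrak{s}$ by removing a peak at more than one location, producing additional negative contributions. This is harmless for the lower bound but the claim of exactness is not justified. Second, the diagonal bound is easier than you make it: the paper simply observes that each of the $2n-1$ adjacent pairs contributes at most $s/2$ to $\langle u|H_{int}|u\rangle$, giving $\langle u|H_{int}|u\rangle\le s(2n-1)/2$ directly, without needing to separately count the $00$ and $\ell^{k}r^{k}$ substrings; your finer count is correct in spirit but has a degenerate edge case at $m=n$ where ``$2(n-m)-1$'' is negative.
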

\begin{proof}

First we prove that if $\mathfrak{s}$ and $\mathfrak{t}$ differ
in more than two consecutive positions then $P\left(\mathfrak{s,t}\right)=0$.
More explicitly let $\mathfrak{s}\ne\mathfrak{t}$ and $|u\rangle$
be a Motzkin path in $V|\mathfrak{t}\rangle$ and $\langle v|$ a Motzkin
path in $\langle\mathfrak{s}|V^{\dagger}$ then 
\begin{equation}
P\left(\mathfrak{s,t}\right)=-\frac{1}{s\left(2n-1\right)}\langle v|\sum_{j=1}^{2n-1}\left(\sum_{k=1}^{s}|\varphi^{k}\rangle_{j,j+1}\langle\varphi^{k}|\right)|u\rangle\sqrt{\frac{\pi\left(\mathfrak{t}\right)}{\pi\left(\mathfrak{s}\right)}}.\label{eq:P_jj1}
\end{equation}
In the foregoing equation for any summand $\sum_{k=1}^{s}|\varphi^{k}\rangle_{j,j+1}\langle\varphi^{k}|$, if the two strings $u$ and $v$ differ in any position other than
 $j,j+1$, then $P\left(\mathfrak{s,t}\right)=0$. Therefore, $P\left(\mathfrak{s,t}\right)$
is only nonzero when $\mathfrak{s}$ can be obtained from $\mathfrak{t}$
by single insertion or removal of a peak (i.e., $00\leftrightarrow u^{k} d^{k}$)
or vice versa. Next, we evaluate $P\left(\mathfrak{s,s}\right)$
\begin{eqnarray*}
P\left(\mathfrak{s,s}\right) & = & 1-\frac{1}{s\left(2n-1\right)}\langle\mathfrak{s}|H_{eff}|\mathfrak{s}\rangle = 1-\frac{\left(\begin{array}{c}
2n\\
2m
\end{array}\right)^{-1}}{s\left(2n-1\right)}\sum_{\mbox{Dyck}\left(u\right)=\mathfrak{s}}\sum_{\mbox{Dyck}\left(v\right)=\mathfrak{s}}\langle v|H_{int}|u\rangle
\end{eqnarray*}
but $\langle v|H_{int}|u\rangle=\frac{s}{2}$ for local moves that
take $00\leftrightarrow00$ and $\langle v|H_{int}|u\rangle=\frac{1}{2}$
for moves $ u^{k} d^{k}\leftrightarrow u^{k} d^{k}$. In Eq. \ref{eq:P_jj1},
we have $\langle\mathfrak{s}|H_{eff}|\mathfrak{s}\rangle\le\frac{\left(2n-1\right)s}{2}$,
hence $P\left(\mathfrak{s},\mathfrak{s}\right)\ge1/2$.

Next consider $\mathfrak{t}\ne\mathfrak{s}$. If $\mathfrak{s}\in D_{m}^{s}$
then $\mathfrak{t}\in D_{m\pm1}^{s}$, which is obtained from $\mathfrak{s}$
by removing or inserting a peak of any color (i.e., $00\leftrightarrow u^{k} d^{k}$).
Using the definitions in Eqs. (\ref{eq:V} and \ref{eq:P})
\begin{eqnarray*}
P\left(\mathfrak{s,t}\right)=-\frac{1}{s\left(2n-1\right)}\sqrt{\left(\begin{array}{c}
2n\\
2m
\end{array}\right)^{-1}\left(\begin{array}{c}
2n\\
2\left(m\pm1\right)
\end{array}\right)^{-1}} \sum_{\mbox{Dyck}\left(u\right)=\mathfrak{s}}\sum_{\mbox{Dyck}\left(v\right)=\mathfrak{t}}\langle u|H_{int}|v\rangle\sqrt{\frac{\pi\left(\mathfrak{t}\right)}{\pi\left(\mathfrak{s}\right)}}
\end{eqnarray*}
where $\sqrt{\frac{\pi\left(\mathfrak{t}\right)}{\pi\left(\mathfrak{s}\right)}}=\left|\frac{\langle\mathfrak{t}|D^{s}\rangle}{\langle\mathfrak{s}|D^{s}\rangle}\right|=\sqrt{\left(\begin{array}{c}
2n\\
2\left(m\pm1\right)
\end{array}\right)\left(\begin{array}{c}
2n\\
2m
\end{array}\right)^{-1}}$, giving 
\begin{equation}
P\left(\mathfrak{s,t}\right)=-\frac{1}{s\left(2n-1\right)}\left(\begin{array}{c}
2n\\
2m
\end{array}\right)^{-1}\sum_{\mbox{Dyck}\left(u\right)=\mathfrak{s}}\sum_{\mbox{Dyck}\left(v\right)=\mathfrak{t}}\langle u|H_{int}|v\rangle,\label{eq:PLemProof}
\end{equation}

First suppose $\mathfrak{t}\in D_{m+1}^{s}$. Let us fix some $j\in\left[0,2m\right]$
such that $\mathfrak{t}$ can be obtained from $\mathfrak{s}$ by
inserting a pair $ u^{k} d^{k}$ of a given color $k$, between $\mathfrak{s}_{j}$
and $\mathfrak{s}_{j+1}$. For any string $u$ such that $\mbox{Dyck}\left[u\right]=\mathfrak{s}$
in which $\mathfrak{s}_{j}$ and $\mathfrak{s}_{j+1}$ are separated
by at least two zeros one can find at least one $v$ with $\mbox{Dyck}\left(v\right)=\mathfrak{t}$
such that $\langle u|H_{int}|v\rangle=-\frac{1}{2}$ . The fraction
of strings that are obtained from randomly inserting two consecutive
zeros into a string of length $2n$ are at least $\frac{1}{4n^{2}}$,
which is also a lower bound for inserting $2\left(n-m\right)$ zeros
into $\mathfrak{s}$. This combined with Eq. (\ref{eq:PLemProof})
and the fact that there are $s$ different peaks gives 
\[
P\left(\mathfrak{s,t}\right)\ge-\frac{1}{8n^{3}}\langle u|H_{int}|v\rangle=\frac{1}{16n^{3}}.
\]

Now suppose $\mathfrak{t}\in D_{m-1}^{s}$. Let us fix some $j\in\left[1,2m-1\right]$
such that $\mathfrak{t}$ can be obtained from $\mathfrak{s}$ by
removing the pair $\mathfrak{s}_{j}\mathfrak{s}_{j+1}= u^{k} d^{k}$
for some color $k$. The fraction of strings $u$ such that $\mbox{Dyck}\left[u\right]=\mathfrak{s}$
and that no zeros are inserted between $\mathfrak{s}_{j}$ and $\mathfrak{s}_{j+1}$
are at least $\frac{1}{2n}$. Similar to above we have
\[
P\left(\mathfrak{s,t}\right)\ge-\frac{1}{4n^{2}}\langle u|H_{int}|v\rangle=\frac{1}{8n^{2}}.
\]\end{proof}

Hence, to prove that the Hamiltonian has a $\mbox{poly}\left(1/n\right)$
gap, it suffices to prove that $P$ has a polynomial gap, $\left(1-\lambda_{2}\left(P\right)\right)\ge n^{-\mathcal{O}\left(1\right)}.$
We do so by proving that the Markov chain is rapidly mixing \cite{king_ConductanceSurvey2003}. 
\subsubsection{Rapidly mixing Markov chain: Canonical path technique}
One way to prove that the Markov chain has a large spectral gap, is
to show that it is rapidly mixing, or equivalently, it has a high
{\it conductivity} \cite{king_ConductanceSurvey2003}. Showing this ensures that
starting from any arbitrary Dyck walk, one can move along the edges
of the graph and ultimately reach any other Dyck walk quickly (i.e.,
in polynomial time). 

We prove that $P$ mixes rapidly and hence has a large gap using the
\textit{canonical path technique, }which ensures that there is a connected
path via which one can obtain any $\mathfrak{t}\in D_{m}^{s}$ from
any $\mathfrak{s}\in D_{k}^{s}$ by a sequence of insertion and removal
of peaks such than no intermediate edge is overloaded. Perhaps it
is helpful to give a traffic analogy that would illustrate the canonical
path technique. A city is rapidly mixing, or equivalently, has high
conductivity if it has a low traffic. We say the city has a low traffic,
if one can drive between any two arbitrary houses efficiently. One
way to ensure this, is to show that between any two arbitrary chosen
houses there are a sequence of roads that connect them such that none
of the roads is overly used by other drivers (i.e., none of which
is congested). Therefore, one never gets ``stuck'' in traffic in
any intermediate road and consequently reaches the destination quickly.

For the canonical path method, we specify a path $\gamma\left(\mathfrak{s},\mathfrak{t}\right)$
between two arbitrary states of the Markov chain. The canonical
path theorem shows that for a reversible Markov chain the spectral
gap is \cite{Sinclair1992}
\begin{equation}
1-\lambda_{2}\ge\frac{1}{\rho L}\label{eq:MarkovGapRaw}
\end{equation}
where the maximum edge load $\rho$ is 
\begin{equation}
\rho=\underset{\left(a,b\right)\in E}{\max}\mbox{ }\frac{1}{\pi\left(a\right)P\left(a,b\right)}\sum_{\left(a,b\right)\in\gamma_{\mathfrak{s},\mathfrak{t}}}\pi\left(\mathfrak{s}\right)\pi\left(\mathfrak{t}\right).\label{eq:MaxEdgeLoad}
\end{equation}
The probability distribution $\pi\left(\mathfrak{s}\right)$ is the
stationary distribution of the Markov chain, and $L=\max_{\left(\mathfrak{s,t}\right)}\left|\gamma_{\mathfrak{s},\mathfrak{t}}\right|$
is the length of the longest canonical path. Thus, if no edge is covered
by too many canonical paths, the Markov chain will mix rapidly. 

The transition matrix $P$ describes a random walk on the graph of
Dyck walks, where two walks $\mathfrak{s}\in D_{k}^{s}$ and $\mathfrak{t}\in D_{m}^{s}$$ $
are connected, i.e., have an edge between them, if $\mathfrak{t}$
can be obtained from $\mathfrak{s}$ by insertion/removal of a peak
of any color $ u^{k} d^{k}$.

After the proof of Lemma \ref{4s_fractionalMatching}, we shall prove
rapid mixing between any $\mathfrak{s}\in D_{k}^{s}$ and $\mathfrak{t}\in D_{m}^{s}$.
However, to better illustrate the method, for now let $\mathfrak{s}$
and $\mathfrak{t}$ be two Dyck walks of length $2n$, then the Markov
chain $P$ takes $\mathfrak{s}$ to $\mathfrak{t}$ via a sequence
of steps that essentially does the following:
\begin{enumerate}
\item Pick a position between $1$ and $2n-1$ on the Dyck path $\mathfrak{s}$
at random.
\item If there is a peak there, remove it to get a path of length $2n-2$.
Here, a peak is a coordinate on the path that is greater than both
of its neighbors.
\item Insert a peak at random position between $0$ and $2n-2$ with a color
randomly chosen out of the $s$ possibilities uniformly.
\end{enumerate}

As shown in Lemma \ref{lem:L2}, $P\left(a,b\right)\ge1/16n^{3}=\Omega\left(n^{-3}\right)$.
We need to use multi-edges in cases where cutting off two peaks, or
inserting a peak at two different positions, gives the same Dyck path.
The stationary distribution is uniform, so $\pi\left(a\right)=\frac{1}{C_{n}s^{n}}\approx\frac{\sqrt{\pi}n^{3/2}}{\left(4s\right)^{n}}$
where, as before, $C_{n}$ is the $n^{\mbox{th}}$ Catalan number
and $s^{n}$ corresponds to the $s$ possible colorings of any given
Dyck path. Finally in our canonical paths construction, each path
will be of maximum length $2n$, and no edge will appear in more than
$2n\left(4s\right)^{n-1}$ paths. 

Putting this together, we get that $\rho\le\frac{2\sqrt{\pi}}{s}n^{\frac{11}{2}}$
and the spectral gap is (see Eq. \ref{eq:MarkovGapRaw}) 
\begin{equation}
1-\lambda_{2}\ge\frac{s}{4\sqrt{\pi}n^{\frac{13}{2}}}\mbox{ }.\label{eq:Pgap}
\end{equation}
\begin{figure}
\centerline{\includegraphics[width=.45\textwidth]{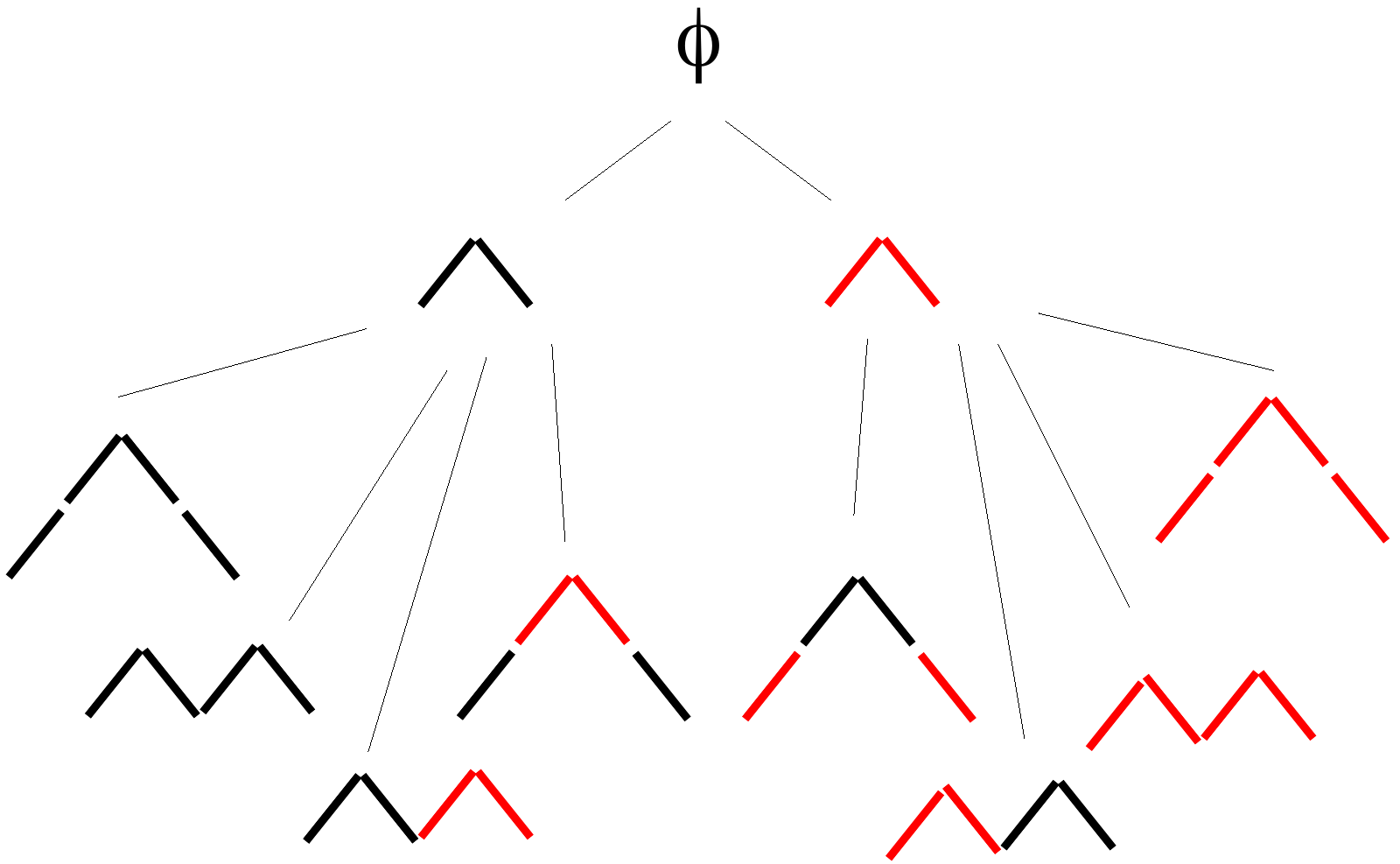}}
\caption{A tree containing all $2-$colored Dyck
walks of length 4 or less. }\label{fig:A-tree-containing}
\end{figure}

In detail, how can such a canonical path be constructed? Given two
$s-$colored Dyck paths, we define the canonical path between them
in the following way. We arrange all the Dyck paths of length $\le2n$
into a tree where the root of the tree is the empty Dyck path (of
length $0$), and where level $m$ contains all Dyck paths of length
$2m$. We will require that any node can be taken to its parent by
removing a peak from the Dyck path, and that no node has more than
$4s$ children. For example Fig. \ref{fig:A-tree-containing} gives
such a tree containing all Dyck paths of length $4$ with two colors.
\begin{figure}
\centerline{\includegraphics[width=.5\textwidth]{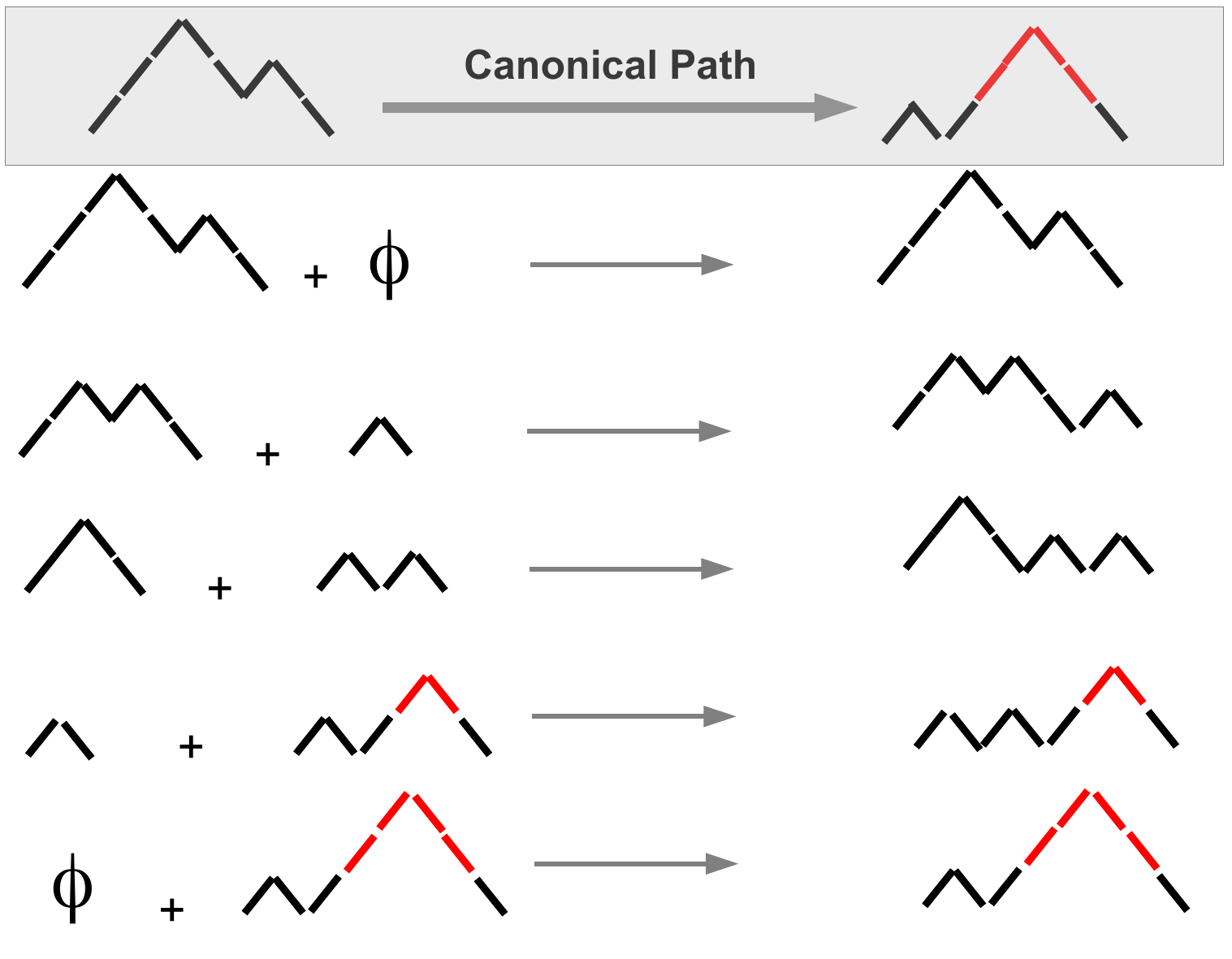}}
\caption{Canonical path between $2-$colored
Dyck walks of length $8$. By cutting peaks we shrink the walk on
the top left and by inserting peaks we grow the final walk (top right). }\label{fig:Canonical-path-between}
\end{figure}

Now suppose we wish to find the canonical path between two Dyck paths
$\mathfrak{s}$ and $\mathfrak{t}$, where these two Dyck paths have
length $2n$. By considering the path in the tree from the leaf $\mathfrak{s}$
to the root, we obtain a sequence of Dyck paths $\mathfrak{s}=\mathfrak{s}_{2n},\mathfrak{s}_{2n-2},\mathfrak{s}_{2n-4},\cdots,\mathfrak{s}_{0}=\emptyset$
and similarly for $\mathfrak{t}$. For the $m^{\mbox{th}}$ Dyck path
in our canonical path we use the concatenation of the two Dyck paths
$\mathfrak{s}_{2n-2m}$ and $\mathfrak{t}_{2m}$. For example, see
Fig. \ref{fig:Canonical-path-between} for an example canonical path
determined using the tree in Fig. \ref{fig:A-tree-containing}.

It is clear that the length of this canonical path is at most $2n$.
Suppose we have an edge $\left(a,b\right)$ on our random walk between
two Dyck paths. This edge could appear as the $m^{\mbox{th}}$  step
in a canonical path for $m=1,2,\cdots,n$. If it appears at step $m$,
then this transition corresponds to the transition between two Dyck
paths $\mathfrak{s}_{2n-2m+2}\mathfrak{t}_{2m-2}\rightarrow\mathfrak{s}_{2n-2m}\mathfrak{t}_{2m}$.
This edge will appear on any canonical path between a descendant of
$\mathfrak{s}_{2n-2m+2}$ and a descendant of $\mathfrak{t}_{2m}$
in our tree. Now, the node $\mathfrak{t}_{2m}$ has $\left(4s\right)^{n-m}$
descendant leaves in our tree, and $\mathfrak{s}_{2n-2m+2}$ has $4^{m-1}$
descendent leaves in our tree, so there are at most $\left(4s\right)^{n-1}$
different pairs $\mathfrak{s},\mathfrak{t}$ for which this transitions
is the $m^{\mbox{th}}$ step on the canonical path. Thus, the edge
$\left(a,b\right)$ lies on at most $2n\left(4s\right)^{n-1}$ canonical
paths. 

Now, the remaining step in our proof is building the tree. For this,
all we need to do is show that we can map the Dyck paths of length
$2n$ onto the Dyck paths of length $2n-2$ so that every Dyck path
of length $2n-2$ has at most $4s$ pre-images. This mapping will
define the edges between the nodes on level $n-1$ and level $n$
of our tree.
\begin{figure}[h]
\centering
{\includegraphics[width=0.4\textwidth]{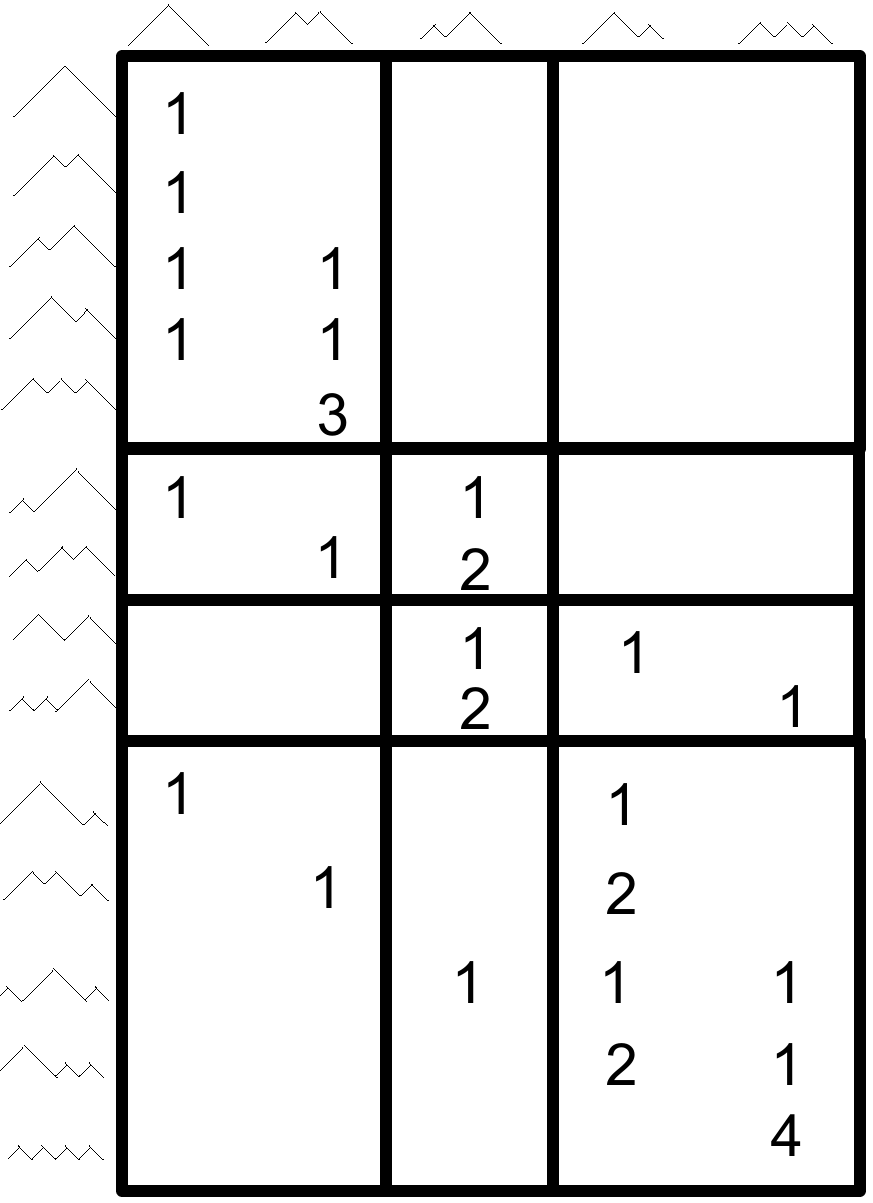}}
{\includegraphics[width=0.4\textwidth]{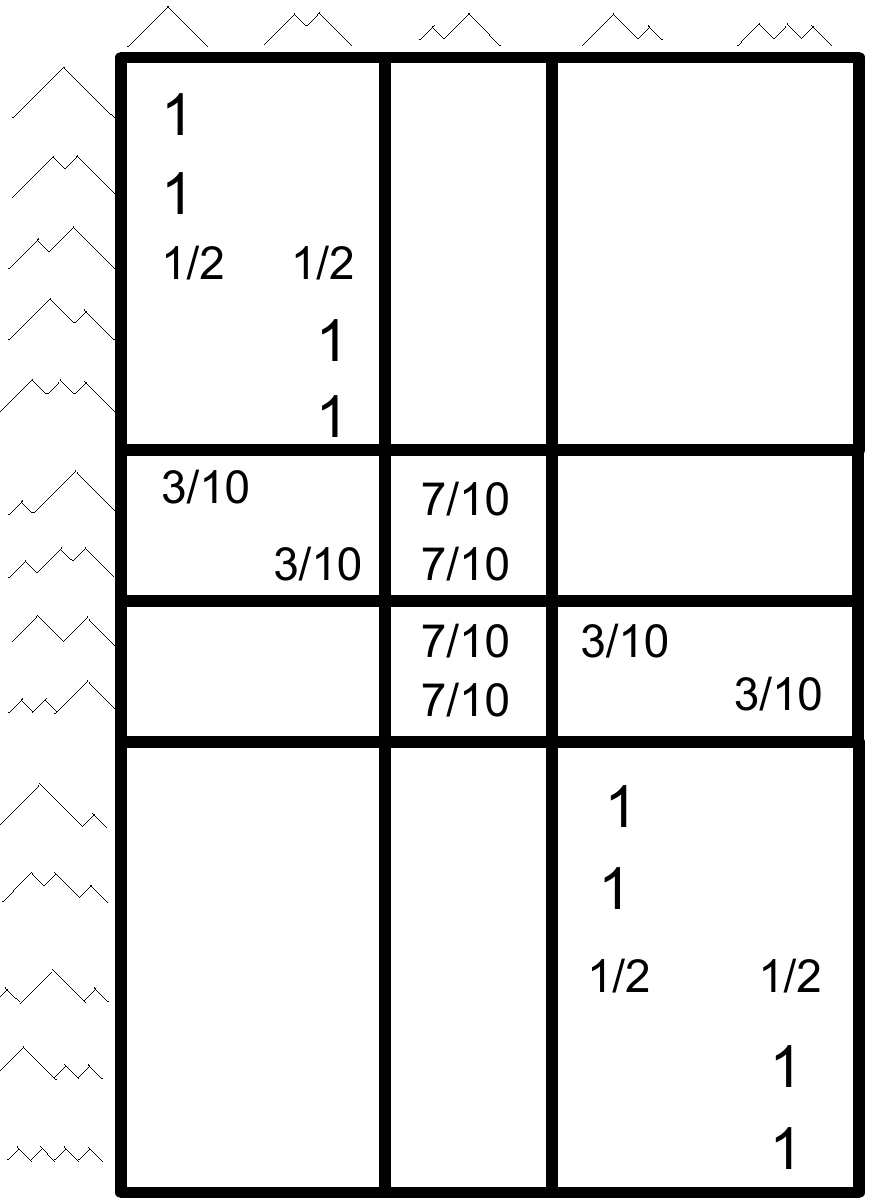}}
\caption{Fractional matching where for simplicity we set $s=1$. The matrices are $14\times5$}
\label{fig:Fractional-matching.-The}
\end{figure}

In order to build this tree, we use a\textit{ fractional matching
theorem} which says that if we can build a fractional matching between
paths of length $2n$ and paths of length $2n-2$, then we can find
a matching as follows. Consider a matrix $m_{ij}$ with the rows labeled
by the colored Dyck paths of length $2n$ and the columns labeled
by paths of length $2n-2$. Make $m_{ij}\ge1$ if column $j$ can
be obtained from row $i$ by removing a peak of a given color and
$0$ otherwise (see Fig. \ref{fig:Fractional-matching.-The} for examples).
We let $m_{ij}$ be the number of ways of getting from path $j$ to
path $i$ by removing a peak. Now the definition of fractional matching
is a matrix $x_{ij}$ such that $0\le x\le1$, $\sum_{i}x_{ij}\le4s$,
and $\sum_{j}x_{ij}=1$. We will build such matrix by induction. In
fact, we will show that we can build a matrix $x_{ij}$ where all
the column sums are equal and all the row sums are $1$. This additional
hypothesis will let us use induction to construct the fractional matching.

To construct the fractional matching, we will first put the Dyck paths
into a specific order. Recall the Catalan numbers are defined by a
recursion
\[C_{n}=\sum_{m=0}^{n-1}C_{m}C_{n-m-1}, \]
 where $C_{0}=1$. Translating this into Dyck paths, each path of
length $2n$ can be associated with a pair of paths, of length $2m$
and $2\left(n-m-1\right)$, where $0\le m\le n$. 
\begin{figure}
\centerline{\includegraphics[width=.7\textwidth]{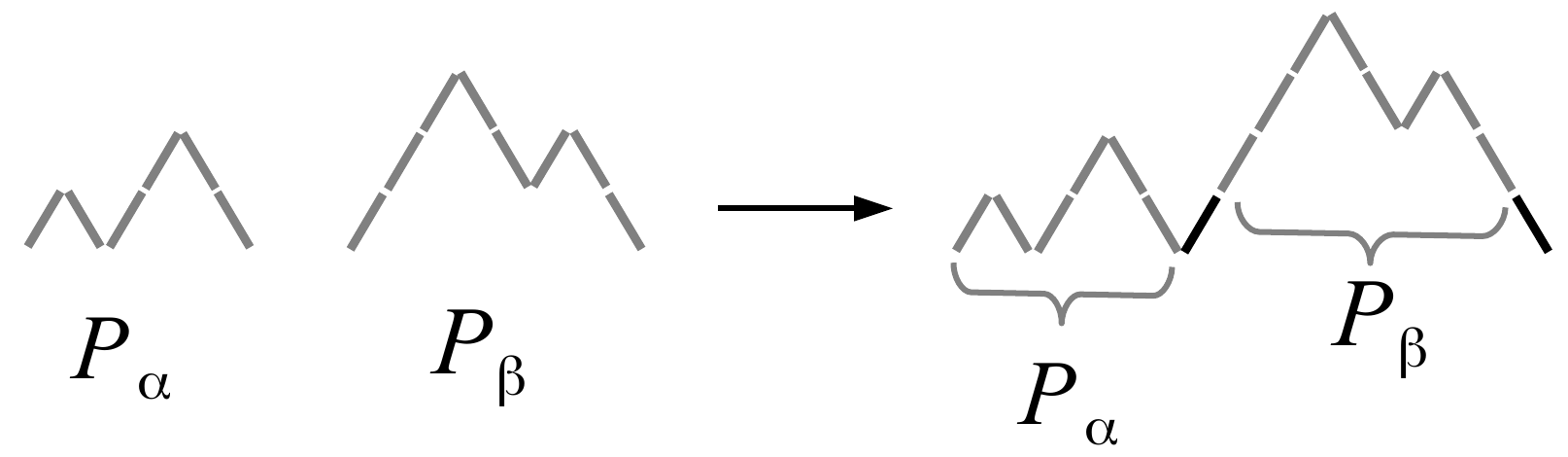}}
\caption{The result of the concatenation of two paths
$P_{\alpha}$ and $P_{\beta}$, where only for the sake of clarity
we made the steps up and down added before and after $P_{\beta}$
black. }\label{fig:ConcatinDyck}
\end{figure}

To take two paths, $P_{\alpha}$ of length $2m$ and $P_{\beta}$
of length $2\left(n-m-1\right)$ and obtain a path of length $2n$,
add a step up before $P_{\beta}$ and a step down after $P_{\beta}$,
and concatenate them (See Fig. \ref{fig:ConcatinDyck}). This is a
one-to-one correspondence between pairs of Dyck paths whose length
sum to $2n-2$ and Dyck paths of length $2n$. We have already shown
one direction of this mapping. This mapping is reversible because
the first path $P_{\alpha}$ ends at the last point the path hits
the $x-$axis before its end. 

The construction of colored Dyck walks have a natural correspondence
too; they can be defined by what we call the \textit{colored Catalan
numbers}
\begin{equation}
s^{n}C_{n}=s\sum_{m=0}^{n-1}s^{m}C_{m}\left(s^{n-m-1}C_{n-m-1}\right),\label{eq:ColorCatalanRecur}
\end{equation}
where the $s$ multiplying the sum is the number of ways the step
up and step down before and after $P_{\beta}$ can be colored.

Now suppose we have a path of length $2n$ corresponding to a pair
of Dyck paths $P_{\alpha}P_{\beta}$, where $2\alpha+2\beta=2n-2$.
Suppose $\beta\ne0$. Then, when we remove a peak of a given color,
we will either end up with a path corresponding to $P_{\alpha-1}P_{\beta}$
or $P_{\alpha}P_{\beta-1}$. If $\beta=0$, we can also remove the
last peak to end up with the path $P_{\alpha}$. 

Thus the matrix $m_{ij}$ breaks into block diagonal pattern, with
the columns divided into blocks containing paths of the form $P_{\alpha}P_{\beta}$
where $\alpha+\beta=n-2$, and the rows divided into blocks of the
form $P_{\alpha}P_{\beta}$ where $\alpha+\beta=n-1$. Except for
the identity matrix added to the column $C_{n-1}$ rows, for each
column block there are only two non-zero row blocks, and vice versa.
In our construction, we never use the fact that there is an identity
added, so we will ignore the existence of this in the following (see
Fig. \ref{fig:Fractional-matching.-The}).

Let us look at these blocks more closely. The block of rows $P_{\alpha}P_{\beta}$
has $0$'s except in column blocks $P_{\alpha-1}P_{\beta}$ or $P_{\alpha}P_{\beta-1}$.
The sub-matrix $P_{\alpha}P_{\beta}\times P_{\alpha-1}P_{\beta}$
is simply the matrix $M_{\alpha}\otimes I_{C_{\beta}}$, where $M_{\alpha}$
is the matrix relating paths of length $2\alpha$ and $2\alpha-2$
and $I_{C_{\beta}}$ is an identity matrix of size $C_{\beta}$ ($\beta^{\mbox{th}}$
Catalan number). We assume by induction that we have a fractional
matching on $M_{\alpha}$ and $M_{\beta}$. By taking the tensor product
of these and an identity matrix, we obtain a fractional matching on
these sub-blocks (see Fig. \ref{fig:Fractional-matching.-The}). 

Now, we can construct the fractional matching by multiplying the fractional
matching for a sub-block $P_{\alpha}P_{\beta}\times P_{\alpha-1}P_{\beta}$
and $P_{\alpha}P_{\beta}\times P_{\alpha}P_{\beta-1}$ by appropriate
scalars, so that all the rows add to $1$ and all the columns have
the same sum. The column sum is $C_{n}s^{n}/\left(C_{n-1}s^{n-1}\right)<4s$.

How can we prove this super-tree exists? The proof is based on fractional
matching theorem and number of other results in linear programming
and have been spelled out in \cite{Movassagh2012_brackets}. 

In particular, we proved the following useful lemma, where we only
had one color (spin $s=1$)\cite{Movassagh2012_brackets}
\begin{lem}
\label{Bravyi-et-al}(Bravyi et al \cite{Movassagh2012_brackets})
Let $D_{m}$ be the set of Dyck paths of length $2m$. For any $m\ge1$
there exists a map $f:D_{m}\rightarrow D_{m-1}$ such that (i) the
image of any path $\mathfrak{s}\in D_{m}$ can be obtained from $\mathfrak{s}$
by removing a single $u d$ pair, (ii) any path $\mathfrak{t}\in D_{m-1}$
has at least one pre-image in $D_{m}$, and (iii) any path $\mathfrak{t}\in D_{m-1}$
has at most four pre-images in $D_{m}$.
\end{lem}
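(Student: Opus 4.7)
The plan is to recast the lemma as a bounded-degree bipartite matching problem and produce $f$ from a fractional relaxation built inductively on the first-return decomposition of Dyck paths. Form the bipartite multigraph $G_m=(D_m,D_{m-1},E_m)$ in which each edge $(\mathfrak{s},\mathfrak{t})$ appears with multiplicity equal to the number of $\ell r$ peaks of $\mathfrak{s}$ whose removal produces $\mathfrak{t}$. Conditions (i)--(iii) are precisely a $b$-matching with $b=4$ saturating $D_m$: an integer $y_e\in\{0,1\}$ with $\sum_{e\ni\mathfrak{s}}y_e=1$ for every $\mathfrak{s}\in D_m$ and $\sum_{e\ni\mathfrak{t}}y_e\le 4$ for every $\mathfrak{t}\in D_{m-1}$. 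Because the constraint matrix of the associated LP is the incidence matrix of a bipartite graph, it is totally unimodular; equivalently, the defect version of Hall's theorem lets one pass between integral and fractional $b$-matchings freely. So it suffices to exhibit a \emph{fractional} matching $x_e\in[0,1]$ obeying the same row and column constraints.

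I would build such a fractional matching $X^{(m)}$ by strong induction on $m$, carrying the strengthened hypothesis that all row sums of $X^{(m)}$ equal $1$ and all column sums equal the common value $\kappa_m\equiv C_m/C_{m-1}=(4m-2)/(m+1)$, where $C_k$ is the $k$-th Catalan number. Since $\kappa_m<4$ for every $m\ge 1$, this hypothesis immediately yields condition (iii). The base case $m=1$ is trivial. For the inductive step, use the first-return bijection
\[
D_n \;\cong\; \bigsqcup_{\alpha+\beta=n-1} D_\alpha \times D_\beta,\qquad (P,Q)\mapsto P\,\ell\, Q\, r,
\]
which decomposes the incidence matrix from $D_n$ to $D_{n-1}$ into three kinds of blocks. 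Deletion of a peak inside $P$ gives the tensor block $X^{(\alpha)}\otimes I_{C_\beta}$ from $B_{\alpha,\beta}$ to $B'_{\alpha-1,\beta}$; deletion inside $Q$ gives $I_{C_\alpha}\otimes X^{(\beta)}$ from $B_{\alpha,\beta}$ to $B'_{\alpha,\beta-1}$; and when $\beta=0$ the terminal $\ell r$ is itself a peak, contributing an identity-type block from $P\ell r\in B_{n-1,0}$ to $P\in D_{n-1}$. I would then define $X^{(n)}$ as a convex combination of these three options on each source block, which inherits row sums equal to $1$ from the inductive hypothesis.

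The technical heart of the argument---and the step where I expect the main work---is choosing the block weights so that the column sums of $X^{(n)}$ are uniformly $\kappa_n$ across every target block. Writing $w^P_{\alpha,\beta}$ and $w^Q_{\alpha,\beta}$ for the weights on the $P$- and $Q$-deletion options in source block $B_{\alpha,\beta}$ (with $w^P+w^Q=1$ off the boundary), the balance condition reads
\[
w^P_{\alpha'+1,\beta'}\,\kappa_{\alpha'+1}\;+\;w^Q_{\alpha',\beta'+1}\,\kappa_{\beta'+1}\;=\;\kappa_n
\qquad(\alpha'+\beta'=n-2),
\]
with an analogous condition absorbing the identity contribution on the boundary blocks. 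The Catalan convolution $C_n=\sum_{\alpha+\beta=n-1}C_\alpha C_\beta$ forces the aggregate column mass to equal $C_n$, so this underdetermined linear system is consistent; solving it recursively on $(\alpha,\beta)$---for example by setting $w^P_{\alpha,\beta}$ proportional to $C_{\alpha-1}C_\beta/(C_{\alpha-1}C_\beta+C_\alpha C_{\beta-1})$ away from the boundary, and tuning the identity weight in $B_{n-1,0}$ separately---yields a valid assignment. Rounding the resulting $X^{(m)}$ to an integral $b$-matching via the integrality of the bipartite $b$-matching polytope then produces the required $f$; condition (ii) is automatic, since any $\mathfrak{t}\in D_{m-1}$ extends by inserting a peak anywhere, so it has at least one preimage in $D_m$.
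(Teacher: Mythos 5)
Your approach is essentially the one the paper sketches (and which the paper itself defers to the cited reference \cite{Movassagh2012_brackets} for the full details): cast the problem as a fractional matching with row sums $1$ and column sums bounded by $\kappa_m = C_m/C_{m-1} < 4$, build it inductively using the Catalan first/last-return decomposition $D_n \cong \bigsqcup_{\alpha+\beta=n-1} D_\alpha\times D_\beta$ and tensor blocks $X^{(\alpha)}\otimes I_{C_\beta}$, and round by total unimodularity of the bipartite incidence matrix. The framing, the strengthened inductive hypothesis with the constant $\kappa_m$, and the role of the three block types (\,$P$-deletion, $Q$-deletion, terminal $\ell r$-deletion\,) are all on target.

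However, the step you flag as the "technical heart" is exactly where the argument as written does not go through, and your proposed fix is incorrect. The candidate weight $w^P_{\alpha,\beta}=C_{\alpha-1}C_\beta/(C_{\alpha-1}C_\beta+C_\alpha C_{\beta-1})$ does not satisfy the column-balance condition. Take $n=4$ and the target block $(\alpha',\beta')=(1,1)$: your formula gives $w^P_{2,1}=C_1C_1/(C_1C_1+C_2C_0)=1/3$ and $w^Q_{1,2}=1-w^P_{1,2}=1/3$, so the incoming column mass is $w^P_{2,1}\kappa_2+w^Q_{1,2}\kappa_2 = 2/3+2/3 = 4/3$, which is not $\kappa_4=14/5$. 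The correct recursion (fix the boundary block $B_{n-1,0}$ with $w^P_{n-1,0}=1$, then sweep from $(\alpha',\beta')=(n-2,0)$ toward $(0,n-2)$) gives $w^P_{2,1}=7/10$, not $1/3$. More fundamentally, the appeal to the Catalan convolution $C_n=\sum C_\alpha C_\beta$ only proves that the \emph{aggregate} row mass and column mass agree; it does not establish that the coupled two-term linear recurrence admits a solution with every $w^P_{\alpha,\beta},w^Q_{\alpha,\beta}\in[0,1]$, which is the actual content of the lemma and must be verified (e.g., by tracking the sweep and showing each intermediate weight stays in $[0,1]$, which the structure $\kappa_n-\kappa_{n-1}=6/(n(n+1))\in[0,1]$ at the boundary and a monotonicity argument can supply, but which you do not carry out). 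A secondary issue: condition (ii) is not "automatic" from the fact that every $\mathfrak{t}$ has some neighbor in $D_m$; a selected function $f$ could in principle skip a column. It does follow, but only because your fractional solution has every column sum equal to $\kappa_m\ge 1$, so one should include the lower-bound column constraint in the $b$-matching polytope before invoking integrality. So the outline matches the paper, but the execution of the key balancing step has a concrete error and a real gap.
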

This lemma allows us to grow arbitrary long Dyck paths starting from
the empty string. Clearly at every level $m$, we can color each $\mathfrak{s}\in D_{m}$
 walk $s^{m}$ different ways, whereby we have obtain all the $s-$colored
Dyck walks of size $2m$ (i.e., the set $D_{m}^{s}$). We can similarly
obtain $D_{m-1}^{s}$ by all $s$-colorings of every $\mathfrak{t}\in D_{m-1}$;
each $\mathfrak{t}$ can be colored $s^{m-1}$ different ways. 

\begin{lem}
\label{4s_fractionalMatching}Let $D_{m}^{s}$ be the set of $s-$colored
Dyck paths of length $2m$. For any $m\ge1$ there exists a map $f:D_{m}^{s}\rightarrow D_{m-1}^{s}$
such that (i) the image of any path $\mathfrak{s}\in D_{m}^{s}$ can
be obtained from $\mathfrak{s}$ by removing a single $ u^{k} d^{k}$,
(ii) any path $\mathfrak{t}\in D_{m-1}^{s}$ has at least $s$ pre-images
in $D_{m}^{s}$, and (iii) any path $\mathfrak{t}\in D_{m-1}^{s}$
has at most $4s$ pre-images in $D_{m}^{s}$.
\end{lem}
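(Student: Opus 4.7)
The plan is to bootstrap Lemma \ref{Bravyi-et-al} to the $s$-colored setting by treating shape and color independently. Every colored Dyck path $\mathfrak{s}\in D_m^s$ is determined by its underlying uncolored shape $\mathfrak{s}^{\circ}\in D_m$ together with an independent choice of color for each of its $m$ matched pairs, so $|D_m^s|=s^m C_m$. This product structure is what makes a direct lift of the uncolored map natural.

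I would define $f^s\colon D_m^s\to D_{m-1}^s$ by first applying the uncolored map $f$ from Lemma \ref{Bravyi-et-al} to $\mathfrak{s}^{\circ}$, which singles out one matched pair to be deleted. Because the two endpoints of every matched pair in a colored Dyck path share the same color, the distinguished pair has the form $\ell^{k}r^{k}$ for some $k$. I then take $f^s(\mathfrak{s})$ to be the colored Dyck path obtained by removing this pair from $\mathfrak{s}$ and preserving the colors of all remaining matched pairs. Condition (i) is then immediate by construction.

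For conditions (ii) and (iii), I would fix a target $\mathfrak{t}\in D_{m-1}^s$ with underlying shape $\mathfrak{t}^{\circ}\in D_{m-1}$ and count its pre-images. Any $\mathfrak{s}\in(f^s)^{-1}(\mathfrak{t})$ must satisfy $f(\mathfrak{s}^{\circ})=\mathfrak{t}^{\circ}$, and its coloring must agree with that of $\mathfrak{t}$ on every matched pair except the one inserted at the distinguished position, whose color can be chosen freely among the $s$ options. Hence
\[
\bigl|(f^s)^{-1}(\mathfrak{t})\bigr| \;=\; s\cdot\bigl|f^{-1}(\mathfrak{t}^{\circ})\bigr|,
\]
and applying parts (ii) and (iii) of Lemma \ref{Bravyi-et-al} to $\mathfrak{t}^{\circ}$ bounds the right-hand side between $s$ and $4s$, as required.

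The only point that requires care is the decoupling of the inserted pair's color from the coloring of $\mathfrak{t}$: inserting a matched pair at a prescribed position does not alter the matching partner of any other parenthesis, so each of the $s$ color choices yields a distinct valid colored Dyck path and the count genuinely multiplies by $s$. I do not expect a substantive obstacle here, since Lemma \ref{Bravyi-et-al} does all of the combinatorial work on shapes and the coloring layer extends tensor-product-style; in particular no new fractional matching argument is needed beyond the one already established in \cite{Movassagh2012_brackets}.
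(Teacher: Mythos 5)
Your proposal is correct and takes essentially the same approach as the paper: both proofs lift the uncolored map from Lemma \ref{Bravyi-et-al} to the colored setting by observing that removing a consecutive $\ell^{k}r^{k}$ peak leaves the matching of the remaining parentheses (hence their colors) in bijective correspondence, so the only new freedom is the $s$ choices of color for the removed peak, giving the factor of $s$ in the pre-image counts.
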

\begin{proof}

On the level $m-1$ there are $s^{m-1}$ copies of any Dyck walk of
length $2m-2$, each with a unique coloring assignment. Similarly
at the level $m$ there are $s^{m}$ copies of any Dyck walk of length
$2m$ each with a unique coloring. For any fixed coloring at the $m-1^{\mbox{st }}$
level and a fixed choice of the color for $ u^{k} d^{k}$ the problem
reduces to Lemma \ref{Bravyi-et-al}, so (i) is satisfied. Similarly
for (ii), there is a pre-image for any choice of $ u^{k} d^{k}$
so there are at least $s$ such pre-images. To prove (iii) we note
that for every choice of coloring of the Dyck walks at the level $m-1$
and a fixed choice of color for $ u^{k} d^{k}$, the problem is identical
to the previous case and there are at most $4$ pre-images (Lemma
\ref{Bravyi-et-al}). Since there are $s$ choices to color the peak
that we remove, there are at most $4s$ pre-images in total. 
\end{proof}

With these preliminaries, we now return to the proof of rapid mixing
time of $P$, whereby we need to prove that the maximum edge load
$\rho$ between any two arbitrary paths $\mathfrak{s}\in D_{m}^{s}$
and $\mathfrak{t}\in D_{k}^{s}$ is $n^{\mathcal{O}\left(1\right)}$
which proves $1-\lambda_{2}\left(P\right)\ge n^{-\mathcal{O}\left(1\right)}.$

We define the canonical path $\gamma\left(\mathfrak{s},\mathfrak{t}\right)$
such that any intermediate state is the concatenation of two walks
$pq$ where $p\in D_{\ell'}$ is an ancestor of $\mathfrak{s}$ in
the super-tree and $q\in D_{\ell"}$ is an ancestor of $\mathfrak{t}$.
The canonical path starts with $p=\mathfrak{s}$, $q=\emptyset$ and
alternates between shrinking $p$ by taking steps towards the root
and growing $q$ by taking steps away from the root similar to what
was discussed above. The path terminates as soon as $p=\emptyset$
and $q=\mathfrak{t}$. If at some intermediate state $p=\emptyset$
then the subsequent shrinking steps are skipped over while if in some
intermediate step $q=\mathfrak{t}$ then the subsequent growing steps
are skipped over. At any intermediate step the length of the concatenated
walk $\left|pq\right|$ obeys
\begin{equation}
\min\left(\left|\mathfrak{s}\right|,\left|\mathfrak{t}\right|\right)\le\left|pq\right|\le\max\left(\left|\mathfrak{s}\right|,\left|\mathfrak{t}\right|\right).\label{eq:length_uv}
\end{equation}
Since any $\gamma\left(\mathfrak{s},\mathfrak{t}\right)$ has a length
that is at most $2n$, it is enough to bound $\rho$. Let the edge
with maximum load, denoted by $\rho\left(m,k,\ell',\ell"\right)$,
be between $a=pq$ to $b$, where as before 
\[
\mathfrak{s}\in D_{m}^{s},\quad t\in D_{k}^{s},\quad p\in D_{\ell'}^{s},\quad q\in D_{\ell"}^{s}\quad.
\]
For the sake of concreteness let $b$ be obtained from $a$ by growing
$q$ and shrinking $p$ (the other case is analogous). From Lemma
\ref{4s_fractionalMatching}, the number of possible descendent strings
$\mathfrak{s}$ from which $p$ is obtained by shrinking is at most
$\left(4s\right)^{m-\ell'}$. The number of possible ancestors of
$\mathfrak{t}$ is at most $\left(4s\right)^{k-\ell"}$. Since $\pi\left(\mathfrak{s}\right)=\pi\left(\mathfrak{t}\right)$
for all $\mathfrak{s}\in D_{m}^{s}$ and $\mathfrak{t}\in D_{k}^{s}$,
\begin{equation}
\rho\left(m,k,\ell',\ell"\right)\le\frac{\left(4s\right)^{m}\pi\left(\mathfrak{s}\right)\mbox{ }\left(4s\right)^{k}\pi\left(\mathfrak{t}\right)}{\left(4s\right)^{\ell'+\ell"}\pi\left(a\right)P\left(a,b\right)}\quad.\label{eq:rho_mkl}
\end{equation}
where by definition $\pi\left(w\right)=\langle w|D^{s}\rangle^{2}$
and using Eq. \ref{eq:groundStateHeff} we obtain  $\pi\left(w\right)=\left(\begin{array}{c}
2n\\
2w
\end{array}\right)/M_{2n,s}$. From Lemma \ref{lem:L2} we have $P\left(a,b\right)\ge1/16n^{3}$.
To bound the right hand side of inequality (\ref{eq:rho_mkl}), we
first prove that 
\begin{equation}
\left(4s\right)^{w}\pi\left(w\right)=\frac{\sigma_{w}}{\sqrt{\pi}w^{3/2}}\label{eq:sigma_fraction}
\end{equation}
where $\sigma_{w}\le1$ is the fraction of $s-$colored Motzkin paths
of length $w$. Indeed $\sigma_{m}=s^{w}C_{w}\left(\begin{array}{c}
2n\\
2w
\end{array}\right)/M_{2n,s}$, where $s^{w}$ is the number of colorings of the Dyck walks of length
$2w$ counted by the Catalan number $C_{w}$. Since $s^{w}C_{w}\approx\left(4s\right)^{w}/\sqrt{\pi}w^{3/2}$,
Eq. \ref{eq:sigma_fraction} holds. Hence, we have 
\[
\rho\left(m,k,\ell',\ell"\right)\le\frac{16}{\sqrt{\pi}}n^{3}\left(\frac{\ell'+\ell"}{m\mbox{ }k}\right)^{3/2}\frac{\sigma_{m}\sigma_{k}}{\sigma_{\ell'+\ell"}}\quad.
\]
Since $\left(\frac{\ell'+\ell"}{m\mbox{ }k}\right)$ is at most a
polynomial in $n$, it remains to bound $\frac{\sigma_{m}\sigma_{k}}{\sigma_{\ell'+\ell"}}$.
We comment that the maximum edge load is always at least one (for
a fully connected graph). As mentioned above, in the canonical path,
we add a polynomial number of terms so it suffices to prove that the
ratio $\frac{\sigma_{m}\sigma_{k}}{\sigma_{\ell'+\ell"}}$ is small;
indeed 
\begin{eqnarray*}
\frac{\sigma_{m}\sigma_{k}}{\sigma_{\ell'+\ell"}} & = & \frac{1}{M_{2n,s}}\frac{s^{m}C_{m}\left(\begin{array}{c}
2n\\
2m
\end{array}\right)s^{k}C_{k}\left(\begin{array}{c}
2n\\
2k
\end{array}\right)}{s^{\ell'+\ell"}C_{\ell'+\ell"}\left(\begin{array}{c}
2n\\
2\left(\ell'+\ell"\right)
\end{array}\right)}\quad.
\end{eqnarray*}
But $M_{2n,s}=\sum_{w=1}^{n}s^{w}C_{w}\left(\begin{array}{c}
2n\\
2w
\end{array}\right)$, which includes terms with $w=m$ and $w=k$ so we have
\[
\frac{\sigma_{m}\sigma_{k}}{\sigma_{\ell'+\ell"}}\le\frac{1}{s^{\ell'+\ell"}C_{\ell'+\ell"}\left(\begin{array}{c}
2n\\
2\left(\ell'+\ell"\right)
\end{array}\right)}\le1\quad.
\]
\begin{comment}
\[
\frac{\left(4s\right)^{m+k-\ell'-\ell"}}{\sqrt{\pi}\left|{\cal M}_{2n,s}\right|}\left(\frac{mk}{\ell'+\ell"}\right)^{3/2}\left(\begin{array}{c}
2n\\
2k
\end{array}\right)\left(\begin{array}{c}
2n\\
2m
\end{array}\right)\left(\begin{array}{c}
2n\\
2\left(\ell'+\ell"\right)
\end{array}\right)^{-1}
\]
\end{comment}

We conclude that $\rho\le n^{\mathcal{O}\left(1\right)}$, which implies
that the spectral gap $1-\lambda_{2}\left(P\right)\ge n^{-\mathcal{O}\left(1\right)}$.
This completes the $\mbox{poly}\left(1/n\right)$ proof of the lower
bound for the gap in the balanced subspace.
\subsection{Smallest energy of unbalanced and/or crossed states: $\mbox{poly}\left(1/n\right)$
lower bound}
Previously we proved that if we restrict the Hamiltonian to the space
where there are an excess number of right or step up then
the smallest eigenvalue is indeed lower bounded by a polynomial in
$1/n$. The problem at hand is different for there are different types
of steps and in addition there is the possibility of having
mismatches where $\Pi^{cross}$ does not vanish. 

To establish the gap to be a polynomial in $1/n$ we need to lower
bound the ground state energy of the Hamiltonian in the unbalanced
subspace, where for example a sub string configuration such as $u^1 u^2 d^1 d^2$
can occur. 

It is sufficient to separately prove lower bounds on: 1. the subspace
with only mismatches 2. imbalance subspace without any mismatch. The
reason for the sufficiency is that including mismatches to an imbalance
space or vice versa can only increase the energy. 

\textit{Pure mismatch:} In this case the energy penalties come from
$\sum_{i}\Pi_{i,i+1}^{cross}$. Let us assume there is a single mismatch
such as $g_{0} \mbox{ }u^1 g_{1}\mbox{ }u^2 \mbox{ }g_{2}\mbox{ } d^1 \mbox{ }g_{3}\mbox{ } d^2 \mbox{ }g_{4}$,
where $g_{1},\dots,g_{4}$ are strings in the alphabet $\left\{ 0, u^{1},\dots, u^{s}, d^{1},\dots, d^{s}\right\} $
such that if we ignore the mismatching, the string $g_{0} \mbox{ }u^1 g_{1}\mbox{ }u^2 \mbox{ }g_{2}\mbox{ } d^1 \mbox{ }g_{3}\mbox{ } d^2 \mbox{ }g_{4}$
would indeed be a colored-Motzkin walk. Moreover, we can assume there
is only a single mismatch as in the example just given since having
more mismatches results in more penalties and can only increase the
energy. Recall that the Hamiltonian is 
\[
H=\sum_{j=1}^{2n-1}\Pi_{j,j+1}+\sum_{j=1}^{2n-1}\Pi_{j,j+1}^{cross}
\]
where we can ignore the boundary terms as we are restricting ourselves
to only mismatched subspaces. $\Pi_{j,j+1}$ is the hopping Hamiltonian
that allows the propagation of steps of any type through the
vacuum. In $g_{0} \mbox{ }u^1 g_{1}\mbox{ }u^2 \mbox{ }g_{2}\mbox{ } d^1 \mbox{ }g_{3}\mbox{ } d^2 \mbox{ }g_{4}$
suppose $u^2$ (appearing after $g_{1}$) is at site $i$
and the first step down appearing after $g_{2}$ is at site
$j$ and $d^2$ between $g_{3}$ and $g_{4}$ is on site
$k$. If we take the hopping amplitude on sites $i$ and $k$ to be
zero then the energy can only decrease and the problem reduces to
the case where there is a chain of length $k-i$ with a single excess
step down at site $j$. So the problem formally reduces to
the previous problem \cite{Movassagh2012_brackets} on a chain of
length $k-i$. Therefore the previous polynomial lower bound also
lower bounds this case. 
\textit{Imbalance subspace without a mismatch:} The Hamiltonian now
reads
\[H=\Pi_{boundary}+\sum_{j=1}^{2n-1}\Pi_{j,j+1}\]
where $\sum_{j=1}^{2n-1}\Pi_{j,j+1}^{cross}$ vanishes and therefore
can be ignored. We need to lower bound the smallest eigenvalue on
strings of type \[u_{0} d^{i_{1}}u_{1} d^{i_{2}}u_{2}\cdots  d^{i_{k}}u_{k} u^{j_{m}}\cdots v_{2} u^{j_{2}}v_{1} u^{j_{1}}v_{0}\]
where $u_{i}$ and $v_{i}$ are $s$-colored Motzkin walks and $i_{p},j_{q}$
can take on any values in $\left\{ 1,2,\cdots,s\right\} $. Since
the spectrum of $H$ in this subspace only depends on the total number
of excess up and down steps, we can focus on having only step down
imbalanced walks, whereby we simplify the analysis and drop the boundary
terms $\sum_{i=1}^{s}\mbox{ } |u^{i}\rangle_{2n} \langle u^{i}|$ as doing
so can only decrease the energy. Below we use a similar argument as
before \cite{Movassagh2012_brackets}. Given any string $g$ in the
imbalanced subspace with only excess step down, let $\tilde{u}\in\left\{ 0, u^{1},\dots, u^{s}, d^{1},\dots, d^{s},x,y\right\} $
be the string obtained from $g$ by i) replace the first unmatched
step down by $x$ and ii) replace all other unmatched steps
in $g$ by $y$. We can define a new Hilbert space $\tilde{{\cal H}}$
whose basis vectors are $|\tilde{g}\rangle$. Consider a Hamiltonian
\[
\tilde{H}=|x\rangle_{1}\langle x|+\sum_{j=1}^{2n-1}\Pi_{j,j+1}+\Theta_{j,j+1}^{x}+\Theta_{j,j+1}^{y}
\]
where $\Theta^{x}$ and $\Theta^{y}$ are projectors onto the states
$|0x\rangle-|x0\rangle$ and $|0y\rangle-|y0\rangle$ respectively (with
proper normalizations). Since $\langle u|H|v\rangle=\langle\tilde{u}|\tilde{H}|\tilde{v}\rangle$
for any $u,v$ the spectrum of $H$ and $\tilde{H}$ coincide in this
subspace. We can further drop $\Theta^{y}$ terms as doing so only
decreases the energy. Therefore, it is sufficient to consider the
simplified Hamiltonian 
\[
H^{x}=|x\rangle_{1}\langle x|+\sum_{j=1}^{2n-1}\Pi_{j,j+1}+\Theta_{j,j+1}^{x}
\]
which act on $\tilde{{\cal H}}$ and position of $y$ particles are
constants of motion of $H^{x}$. An entirely a similar argument as
in \cite{Movassagh2012_brackets} shows that we can only analyze the
interval between $1$ and the first $y-$particle, whereby the relevant
Hilbert space becomes the span of (as before we denote the set of
Motzkin paths of length $k$ by ${\cal M}_{k}$) 
\[
|u\rangle\otimes |x\rangle\otimes |v\rangle,\quad\mbox{where }u\in{\cal M}_{j-1},\quad v\in{\cal M}_{2n-j}.
\]

To use the projection lemma define 
\[
H_{\epsilon}^{x}=\sum_{j=1}^{2n-1}\Pi_{j,j+1}+\epsilon\left\{ |x\rangle_{1}\langle x|+\sum_{j=1}^{2n-1}\Theta_{j,j+1}^{x}\right\} 
\]
and an effective Hopping Hamiltonian $H_{eff}$ can be defined whose
ground state lower bounds the ground state of $H_{\epsilon}^{x}\le H^{x}$.
$H_{eff}$ is defined by
\[
H_{eff}=|1\rangle\langle1|+\sum_{j=1}^{2n-1}\Gamma_{j,j+1}
\]
where 
\begin{align*}
\Gamma_{j,j+1} & =\alpha_{j}^{2}\mbox{ }|j\rangle\langle j|\mbox{ }+\beta_{j}^{2}\mbox{ }|j+1\rangle\langle j+1|\\
 & -\alpha_{j}\beta_{j}\left\{ \mbox{ }|j\rangle\langle j+1|\mbox{ }+\mbox{ }|j+1\rangle\langle j|\mbox{ }\right\} 
\end{align*}
is a rank-1 projector. The coefficients are now different from the
previous case and are given by 
\begin{eqnarray*}
\alpha_{j}^{2} & \equiv & \langle\psi_{j}|\Theta_{j,j+1}^{x}|\psi_{j}\rangle=\frac{M_{2n-j-1}}{2s\mbox{ }M_{2n-j}}\\
\beta_{j}^{2} & \equiv & \langle\psi_{j+1}|\Theta_{j,j+1}^{x}|\psi_{j+1}\rangle=\frac{M_{j-1}}{2s\mbox{ }M_{j}}
\end{eqnarray*}
and lastly 
\[
-\alpha_{j}\beta_{j}=-\frac{1}{2s}\sqrt{\frac{M_{2n-j-1}}{M_{2n-j}}\frac{M_{j-1}}{M_{j}}}
\]
where $M_{k}$ is the $k^{\mbox{th}}$ Motzkin number which is the
number of Motzkin walks in $k$ steps. Applying the projection lemma
we have $\lambda_{1}\left(H_{\epsilon}^{x}\right)\ge\epsilon\lambda_{1}\left(H_{eff}\right)$
and it suffices to show that $\lambda_{1}\left(H_{eff}\right)\ge n^{-\mathcal{O}\left(1\right)}$. 

The hopping Hamiltonian without the ``repulsive potential'' $|1\rangle\langle1|$
is 
\[
H_{move}\equiv\sum_{j=1}^{2n-1}\Gamma_{j,j+1}\quad.
\]
This is a FF Hamiltonian with the unique ground state 
\begin{equation}
|g\rangle\sim\sum_{j=1}^{2n}s^{n-\frac{1}{2}}\sqrt{M_{j-1}M_{2n-j}}\mbox{ }|j\rangle\quad.\label{eq:g_State}
\end{equation}
As before we bound the spectral gap of $H_{move}$ and use the Projection
Lemma to lower bound. Let $\pi\left(j\right)=\langle j|g\rangle^{2}$.
For any $a,b\in\left[1,2n\right]$ we define 
\[
P\left(j,k\right)=\delta_{j,k}-\langle j|H_{move}|k\rangle\sqrt{\frac{\pi\left(k\right)}{\pi\left(j\right)}}
\]
and a simple algebra shows that 
\[
P\left(j,j+1\right)=\frac{M_{2n-j-1}}{2sM_{2n-j}}\qquad\mbox{ and }P\left(j+1,j\right)=\frac{M_{j-1}}{2sM_{j}}
\]
are the only off diagonal matrix elements of $P$. Using Lemma 7 in
\cite{Movassagh2012_brackets} that shows $\frac{1}{3}\le\frac{M_{k}}{M_{k+1}}\le1$
we conclude that 
\[
\frac{1}{6s}\le P\left(j,j\pm1\right)\le\frac{1}{2s}\quad\forall j\quad.
\]
Consequently the diagonal elements of $P$ are non-negative and it
can be considered as a transition matrix. Moreover, using Eq. \ref{eq:g_State}
we conclude that 
\[
n^{-\mathcal{O}\left(1\right)}\le\frac{\pi\left(k\right)}{\pi\left(j\right)}\le n^{\mathcal{O}\left(1\right)}\qquad\forall\quad1\le j,k\le2n.
\]

We have $\min_{j}\pi\left(j\right)\ge n^{-\mathcal{O}\left(1\right)}$.
This is sufficient to bound the spectral gap of $P$ as shown in \cite{Movassagh2012_brackets}.
For example using the canonical paths theorem we get $1-\lambda_{2}\left(P\right)\ge\frac{1}{\rho\ell}$
with a canonical path that simply moves the $x-$particles from $u$
to $v$. Since the denominator in the maximum edge load given by Eq.
\ref{eq:MaxEdgeLoad} is lower bounded by $n^{-\mathcal{O}\left(1\right)}$
we conclude that the gap of $P$ is polynomially lower bounded and
that $\lambda_{2}\left(H_{move}\right)\ge n^{-\mathcal{O}\left(1\right)}$.

Lastly, one can apply the Projection lemma to $H_{eff}$ by making
$|1\rangle\langle1|$ a perturbation. The effective first order Hamiltonian
will now be constant $\langle1|g\rangle^{2}=\pi\left(1\right)\ge n^{-\mathcal{O}\left(1\right)}$
which proves the bound $\lambda_{1}\left(H_{eff}\right)\ge n^{-\mathcal{O}\left(1\right)}$. 

\section{Presence of an external field}
The energy corrections obtained from first order degenerate perturbation
theory are $\Delta E_{m}$ as defined in the paper. Since only the
embedded Dyck walks in the Motzkin state couple to the external field
and contribute to the energy corrections, we need to count the number
of walks that start from zero and reach coordinates $\left(2n,m\right)$. 
\begin{rem}
For now, we pretend that the length of the chains is $n$ and \textit{not}
$2n$. At the end we multiply $n$ by a factor of $2$. 
\end{rem}

The number of walks of length $n$ with $s$ coloring that reach the
height $m$ (i.e., coordinate $\left(x,y\right)=\left(m,n\right)$)
is denoted here by $\Gamma\left(n,m\right)$. As before, $\Gamma\left(n,m\right)$
is counted by a refinement of the Ballot problem 
\begin{equation}
\Gamma\left(n,m\right)\equiv s^{m}\sum_{k=0}^{n-m}\left(\begin{array}{c}
n\\
k
\end{array}\right)s^{\frac{n-k-m}{2}}B_{n-k,m}\equiv s^{m}M_{n,m,s}\label{eq:Mm-1}
\end{equation}
where there are $\left(\begin{array}{c}
n\\
k
\end{array}\right)$ ways of putting $k$ zeros, $B_{n-k,m}$ is the solution of the Ballot
problem with height $m$ on $n-k$ walks (number of ``Dyck'' walks
on $n-k$ steps that end at height $m$), $s^{m}$ ways of coloring
the unmatched steps and $s^{\frac{n-k-m}{2}}$ ways of coloring
the matched ones. 

The energy corrections obtained from first order degenerate perturbation
theory are 
\[
\Delta E_{m}=\frac{\epsilon}{n}\langle g_{m}|F|g_{m}\rangle
\]
and 
\[
\Delta E_{m}=\frac{\epsilon}{n N_{m}}\sum_{i,k}\langle g_{m}^{i}|F|g_{m}^{k}\rangle
\]
where, $N_{m}$ is the total number of walks that start at coordinates
$\left(0,0\right)$ and end at $\left(2n,m\right)$ and 
\begin{eqnarray*}
|g_{m}\rangle & \equiv & \frac{1}{\sqrt{N_{m}}}\sum_{i}\mbox{ }|g_{m}^{i}\rangle\\
 & = & \frac{1}{\sqrt{N_{m}}}\sum_{i}|\mbox{state \ensuremath{i}\ensuremath{\mbox{ }}with }m\mbox{ extra left parenth.}\rangle
\end{eqnarray*}
It is clear that $0<\Delta E_{m}\le\epsilon/n$. Since only the embedded
Dyck walks couple to the external field and give positive energy contribution,
we have
\begin{eqnarray}
\langle g_{m}|F|g_{m}\rangle & = & \frac{s^{m}\sum_{i\ge0}\left(m+2i\right)M_{n,m,s,i}}{\Gamma\left(n,m\right)} \nonumber \\
& = & \frac{\sum_{i\ge0}\left(m+2i\right)M_{n,m,s,i}}{\sum_{i\ge0}M_{n,m,s,i}}\label{eq:expected_f_m}\\
 & = & m+2\frac{\sum_{i\ge0}i\mbox{ }M_{n,m,s,i}}{\sum_{i\ge0}M_{n,m,s,i}}\nonumber 
\end{eqnarray}
where $M_{n,m,s,i}$ is defined in Eq. \ref{eq:Trinomial} and $m+2i$
is the number of nonzero terms on the walk (i.e., $u$ and $d$
terms)-- there are $M_{n,m,s,i}$ of the walks and $s^{m}$ cancels. 
\begin{rem}
Another way to interpret this is that $\langle g_{m}|F|g_{m}\rangle$
is the expected length of lattice paths with only step up and down
reaching height $m$ embedded in colored Motzkin paths of length $n$,
where the expectation is taken with respect to a uniform measure over
all the walks with $m$ imbalances and $s$ colors.
\end{rem}

It is not hard to see that the saddle point of $\sum_{i\ge0}iM_{n,m,i,s}$
is equal to that of the numerator, which is given by Eq. \ref{eq:saddle}.
Eq. \ref{eq:expected_f_m} after replacing the sum over $i$ with
an integral over $\beta$, as we did in our entanglement entropy calculation
above, and extending to $\pm\infty$ becomes
\begin{eqnarray}
\langle g_{m}|F|g_{m}\rangle & = & 2\sigma n+\frac{m}{4\sqrt{s}}\left(\frac{m}{n}\right)+\frac{\left(4s-1\right)m}{64\mbox{ }s\sqrt{s}}\left(\frac{m}{n}\right)^{3}\nonumber \\
& + & 2\sqrt{n}\mbox{ }\frac{\int d\beta\mbox{ }\beta\exp\left(-\frac{\sqrt{s}}{\sigma^{2}}\beta^{2}\right)}{\int d\beta\mbox{ }\exp\left(-\frac{\sqrt{s}}{\sigma^{2}}\beta^{2}\right)}+\mathcal{O}\left(m\left(\frac{m}{n}\right)^{5}\right)\nonumber \\
 & \approx & 2\sigma n+\frac{m}{4\sqrt{s}}\left(\frac{m}{n}\right)+\frac{\left(4s-1\right)m}{64\mbox{ }s\sqrt{s}}\left(\frac{m}{n}\right)^{3}\label{eq:Fina_n}
\end{eqnarray}

Restoring the factor of $2$ the new energies induced by an external
field of what used to be zero energy states become
\begin{equation}
\frac{\epsilon}{n}\langle g_{m}|F|g_{m}\rangle=4\sigma\epsilon +\frac{\epsilon}{8\sqrt{s}}\left(\frac{m}{n}\right)^2+\frac{\left(4s-1\right)\epsilon }{512\mbox{ }s\sqrt{s}}\left(\frac{m}{n}\right)^{4}+\mathcal{O}\left(\epsilon m\left(\frac{m}{n}\right)^{5}\right)\label{eq:Final_2n}
\end{equation}
\section{Other Open problems}
\begin{enumerate}
\item Further investigation of the nature of the excited states. 
\item Proof of the $\mbox{poly}\left(1/n\right)$ gap for Hamiltonians with
interaction terms that create maximally entangled states out of the
vacuum, i.e., $|\varphi\rangle=\frac{1}{\sqrt{2}}\left\{ |00\rangle-\frac{1}{\sqrt{s}}\sum_{i=1}^{s} |u^{i} d^{i}\rangle\right\} $.
The present technique for proving lower bounds would fail as $P\left(\mathfrak{s,t}\right)$
can become negative.
\item Can a similar model for $d<5$ systems be constructed, where the gap
behaves similar to here and the entanglement entropy is long-ranged?
Previously, a fermionic $d=4$ model was proposed whose entanglement
entropy grows linearly with $n$ \cite{MovassaghThesis2012}. However,
we believe (not yet proved) that the gap is exponentially small for
that model. Can other models with $d<5$ be built such that the gap
closes slowly with $n$?
\item Is $\sqrt{n}$ entanglement entropy as much as one can get in 'physically
reasonable' models \cite{swingle2013universal}?
\item What does the continuum limit of the class of Hamiltonians proposed
here look like? 
\item It may be possible to improve the upper bound to be $\mathcal{O}\left(n^{-3}\right)$
for the model with boundaries.
\item We think the combinatorial techniques introduced here add to the toolbox
of methods for proving the gap of local Hamiltonians. It would be
interesting to see other applications of them.
\item The spin-spin correlation function in the ground state can in principle
be calculated using the techniques that were used to calculate entanglement
entropies. It would be interesting to know how the correlation functions
$\langle\sigma_{i}\sigma_{k}\rangle$ and $\langle\sigma_{i}\sigma_{i+1}\sigma_{k}\sigma_{k+1}\rangle$
scale with $\left|i-k\right|$.
\end{enumerate}

\end{document}